\documentclass[10pt]{article}
\usepackage{color,graphicx}
\usepackage{amsfonts,amsmath,amssymb,graphicx}
\usepackage[english]{babel}
\usepackage{tikz}

{}
\newtheorem{definition}{Definition}{}
\newtheorem{lemma}{Lemma}{}
\newtheorem{theorem}{Theorem}{}

\newenvironment{proof}{\textbf{Proof}}{}

\newtheorem{example}{{\bf Example}} {}

\newcommand\Alg{\textsc{Alg}}
\newcommand\CNN{\textsc{cnn }}
\newcommand\RR{\mathbb{R}}

\newcommand\RRplus{\mathbb{R}^+}
\newcommand\nnabla{\nabla}
\newcommand\snabla{\xi}
\newcommand\Boks{\textsc{Box}}

\newcommand\Spheres{\textsc{Spheres}}

\newcommand\Argmin{\mathop{\mbox{\rm Argmin}}}
\newcommand\Opt{\mbox{\sc Opt}}
\newcommand\WFA{\textsc{\mbox{\rm WFA}}}
\newcommand\MM{\mathbb{M}}
\newcommand\XX{\mathbb{X}}
\newcommand\YY{\mathbb{Y}}
\newcommand\Gg{\mathcal{G}}
\newcommand\Ff{\mathcal{F}}
\newcommand\Hh{\mathcal{H}}
\newcommand\Sl{Sl}
\newcommand\yy{y^*}
\newcommand\dd{\partial}
\newcommand\roo{\rho}

\newcommand\BST{\textsc{bst}}

\begin{document}

\title{The generalized work function algorithm is competitive for the generalized 2-server problem}

\author{Ren\'e Sitters\thanks{VU University Amsterdam, the Netherlands ({\tt r.a.sitters@vu.nl}), and CWI Amsterdam, the Netherlands ({\tt sitters@cwi.nl}).}}

\maketitle

\begin{abstract}
The generalized 2-server problem is an online optimization problem where a sequence of requests has to be served at minimal cost. Requests arrive
one by one and need to be served instantly by at least one of two servers. We consider the general model where the cost function of the two
servers may be different. Formally, each server moves in its own metric space and a request consists of one point in each metric space. It is
served by moving one of the two servers to its request point. Requests have to be served without knowledge of the future requests. The objective
is to minimize the total traveled distance. The special case where both servers move on the real line is known as the \CNN problem.

We show that the generalized work function algorithm, $\WFA_{\lambda}$, is constant competitive for the generalized 2-server problem. Further, we
give an outline for a possible extension to $k\geqslant 2$ servers and discuss the applicability of our techniques and of the work function algorithm in
general. We conclude with a discussion on several open problems in online optimization.

\end{abstract}

%\begin{keywords}
%Competitive analysis, $k$-server problem, on-line algorithms, work function algorithm, \CNN\ problem, metrical service system.
%\end{keywords}

%\begin{AMS}
%68W27, 68Q25
%\end{AMS}

\pagestyle{myheadings} \thispagestyle{plain} \markboth{REN\'E SITTERS}{THE GENERALIZED WORK FUNCTION ALGORITHM}

\section{Introduction}
The work function algorithm is a generic algorithm for online optimization problems. For many problems, it gives the optimal competitive ratio or
it is conjectured to be optimal. For example, it has the best known ratio of $2k-1$  for the $k$-server problem~\cite{KouPap95A}, which is
probably the most appealing and well-studied problem in online optimization, and the work function algorithm is conjectured to have an optimal
ratio of $k$. There are many papers that deal with this classical work function algorithm. More powerful, but less known is the \emph{generalized}
work function algorithm, $\WFA_{\lambda}$, which is the standard work function algorithm with an additional parameter $\lambda$. A result by
Burley~\cite{Bu96} shows that the generalized algorithm can indeed be strictly more powerful than the standard work function algorithm.

The (generalized) work function algorithm may be computationally expensive and pretty hard to analyze, but things can be much better for special
cases. For example, the simple doubling algorithm for the cow path problem is mimed by the generalized work function algorithm $\WFA_{0.5}$.
Another example is the (optimal) move-to-front algorithm for the list update problem which can be seen as the work function algorithm $\WFA_{1}$.
The running time of the work function algorithm very much depends on the complexity of computing  offline solutions. For example, the work
function algorithm for traversing layered graphs can be implemented in linear time while its analysis is quite involved~\cite{Bu96}. Further, the
performance of the work function algorithm may be much better in practice than what is guaranteed in theory (See for
example~\cite{BlumBurch2000}). For some problems, the work function algorithm is optimal but there are more efficient alternatives. For example,
it is $k$-competitive for weighted caching~\cite{BartalKoutsoupias2004} but the elegant Double Coverage algorithm~\cite{ChrLar91A} has the same
optimal ratio. However, the DC-algorithm is not extendable to arbitrary metric spaces. For some hard problems, the work function algorithm is
basically the only algorithm known. Examples are the (deterministic) $k$-server problem and the generalized 2-server problem that we discuss in
this paper.

We say that an algorithm $\Alg$ for an online minimization problem is $c$-\emph{competitive} ($c\geqslant 1$) if there is a constant $c_0$ such
that for every instance $\mathcal{I}$ of the problem, the algorithms cost $\Alg({\mathcal{I}})$ and the optimal cost $\Opt({\mathcal{I}})$ satisfy
\[\Alg({\mathcal{I}})\leqslant c\cdot \Opt({\mathcal{I}})+c_0.\]
The competitive ratio of the algorithm is the infimum over $c$ such that $\Alg$ is $c$-competitive. The competitive ratio of the online
minimization problem is the infimum over $c$ such that there is a $c$-competitive algorithm.

In the \emph{generalized 2-server problem} we are given a server, whom we will call the $\XX$-server, moving in a symmetric metric space
$\mathbb{X}$, and a server, the $\YY$-server, moving in a symmetric metric space $\mathbb{Y}$. A starting point
$(\mathcal{O}^\mathbb{X},\mathcal{O}^\mathbb{Y})\in \mathbb{X}\times \mathbb{Y}$ is given and requests $(x,y)\in \mathbb{X}\times \mathbb{Y}$ are
presented on-line one by one. Requests are served by moving one of the servers to the corresponding point in its metric space and the choice of
which server to move is made without knowledge of the future requests. The objective is to minimize the sum of the distances traveled by the two
servers. The special case $\XX=\YY=\RR$ is known as the \CNN problem~\cite{KouTay00,KoutsoupiasTaylor2004}. This problem can be seen as a single
server moving in $\RR^2$ with the $L_1$-norm and each request is a point in $\RR^2$ which is served if the $x$- or $y$-coordinate of the server
and request coincide.

Research on the \CNN problem started more than ten years ago but despite the simplicity of the problem and its importance for the theory of online
optimization, the problem is still not well-understood. The \CNN problem first appeared in a paper of Koutsoupias and
Taylor~\cite{KouTay00,KoutsoupiasTaylor2004}. They  conjectured that the generalized work function algorithm $\WFA_{\lambda}$ has a constant
competitive ratio\footnote{The $\lambda$ in~\cite{KoutsoupiasTaylor2004} corresponds with $1/\lambda$ in our notation.} for any $\lambda\in
(0,1)$. They also conjectured that the generalized work function algorithm is competitive for the generalized 2-server problem. In this paper we
settle both conjectures. The constant that follows from our proof is large and we do not present an upper bound on its value. Hence, the gap
between known lower and upper bound remains large. The first competitive algorithm was given in~\cite{SittersSP:2003} and in its journal
version~\cite{SittersStougie2006}. The importance of the new result here is that we analyze the generalized work function algorithm which is
applicable to \emph{any} metrical service systems. Our techniques here are more involved than those in~\cite{SittersStougie2006} and are
interesting for online optimization in general. This is discussed in Section~\ref{sec:future}.

As the name suggests, the generalized 2-server problem originates from the \emph{classical} 2-server problem in which $\XX=\YY$ and $x=y$ for
every request, i.e., each request is a point in the metric space and we have to decide which server to move to the requested point. The
$k$-server problem (with $k\geqslant 2$ servers) is one of the most studied problems in online optimization. A recent survey of the $k$-server
problem is given by Koutsoupias~\cite{Koutsoupias2009}. The $k$-server problem on a uniform metric space is the \emph{paging problem}. In the
\emph{weighted} $k$-server problem a weight is assigned to each server (of the classical problem) and the total cost is the weighted sum of the
distances. The weighted $k$-server problem is a special case of the generalized $k$-server problem.

All online optimization problems mentioned in this article belong to the class of \emph{metrical task systems} (For a definition see
Section~\ref{sec:MSS} and~\cite{BoLiSa92}). Given multiple metrical task systems, the \emph{sum problem}~\cite{KoutsoupiasTaylor2004} is again a
metrical task system and is defined as follows: At each step we receive one request for each task system and we have to serve at least one of
those requests. The \CNN problem is the sum of two trivial problems: in both problems there is one server moving on the real line and each request
consists of a single point. Koutsoupias and Taylor~\cite{KoutsoupiasTaylor2004} emphasize the importance of the \CNN problem: `\emph{It is a very
simple sum problem, which may act as a stepping stone towards building a robust (and less ad hoc) theory of online computation}'. Indeed, our
techniques are useful for sum problems in general and we hope it leads to a better insight and hence further simplifications and generalizations
in the theory of online computation.

\subsection{Known competitive ratios}
No \emph{memoryless} (randomized) algorithm can have a finite competitive ratio  for the \CNN problem
\cite{ChrobakSgall2004,KoutsoupiasTaylor2004,Verhoeven2006} while a finite ratio is possible if we are allowed to store the entire given sequence
(or at least the current work function)~\cite{SittersStougie2006,SittersSP:2003}. The algorithms in the latter two papers are complex and the
ratio very high but they do apply to the generalized 2-server problem as well. See~\cite{Chrobak2003} for a review of~\cite{SittersSP:2003}. For
the classical $k$-server problem, the work function algorithm is $(2k-1)$-competitive for any metric space~\cite{KouPap95A} and it is conjectured to
be even $k$-competitive~\cite{ManasseMS1988,KouPap95A}. This famous $k$-\emph{server conjecture} was posed more than two decades ago and is still
open. The competitive ratio of the \emph{weighted} $k$-server problem is much higher.  Fiat and Ricklin~\cite{FiaRic94} prove that for any metric
space with at least $k+1$ points there exists a set of weights such that the competitive ratio of any deterministic algorithm is at least
$k^{\Omega(k)}$. Koutsoupias and Taylor [23] prove that any deterministic online algorithm for the weighted 2-server problem has competitive ratio
at least $6+\sqrt{17}>10.12$ even if the underlying metric space is the line and~\cite{ChrobakSgall2004} shows that any memoryless randomized
algorithm has unbounded competitive ratio in this case. These two lower bounds apply to the \CNN problem as well since it contains the weighted
two server problem on the line as a special case.

\subsection{More special cases and variants}
The \emph{orthogonal} \CNN problem~\cite{IwamaYonezawa2004} is the special case of the \CNN problem in which each request either shares the
$x$-coordinate or the $y$-coordinate with the previous request. Iwama and Yonezawa~\cite{IwamaYonezawa2004} give a $9$-competitive algorithm
and a lower bound of 3 is given in~\cite{AugustineGravin2010}.

In the \emph{continuous} \CNN problem~\cite{AugustineGravin2010}, there is one request which follows a continuous path in $\RR^2$ and the online
server must serve it continuously by aligning either horizontally or vertically. It generalizes the orthogonal version in the sense that any
$c$-competitive algorithm for the continuous problem implies a $c$-competitive algorithm for the orthogonal problem. Augustine and
Gravin~\cite{AugustineGravin2010} give a $6.46$-competitive memoryless algorithm (improving the 9 from~\cite{IwamaYonezawa2004} mentioned
above).

The \emph{axis-bound} \CNN problem was introduced by Iwama and Yonezawa~\cite{IwamaYonezawa2001,IwamaYonezawa2002} and is the special case in
which the server can only move on the $x$- and $y$-axis. They give an upper bound of 9 and a lower bound of $4+\sqrt{5}$. The lower bound was
raised to 9 in~\cite{AusielloBL2004b}. That paper also gives an alternative $9$-competitive algorithm by formulating it as a \emph{two point
request problem}~\cite{Bu96}. Finally, the \emph{box bound} \CNN problem~\cite{AusiellBL2004a} is the restriction in which the server can move
only on the boundary of a rectangle and requests are inside the rectangle. The problem can be transformed into the 4-point request
problem~\cite{AusiellBL2004a}. An upper bound of $88.71$ for the latter problem follows from the paper by Burley~\cite{Bu96}.

For the weighted 2-server problem, the only known competitive algorithm follows from the one for the generalized 2-server problem. For the special
case of a \emph{uniform} metric space (where all distances are 1), Chrobak and Sgall~\cite{ChrobakSgall2004} prove that the  work function
algorithm is 5-competitive and that no better ratio is possible. They also give a 5-competitive randomized, memoryless algorithm for uniform
spaces, and a matching lower bound. Further, they consider  a version of the problem in which a request specifies two points to be covered by the
servers, and the algorithm must decide which server to move to which point. For this version, they show a 9-competitive algorithm and prove that
no better ratio is possible. Finally,  Verhoeven~\cite{Verhoeven2006} shows that no memoryless randomized algorithm can be competitive for the
\CNN problem under an even weaker definition of memoryless than used in~\cite{ChrobakSgall2004} and~\cite{KoutsoupiasTaylor2004}.

\subsection{Metrical task systems and metrical service systems}\label{sec:MSS}

Borodin, Linial, and Saks~\cite{BoLiSa92} introduced the problem of \emph{metrical task systems}, a generalization of all online problems
discussed here. Such system is a pair $\mathcal{S}=(\MM,\mathcal{T})$, where $\MM$ is a metric space and $\mathcal{T}$ a set of tasks. Each task
$\tau\in\mathcal{T}$ is defined by a function $\tau:\MM\rightarrow \RR^+$ which gives for each $s\in \MM$ the cost of serving the task while
being in $s$. In an online instance, the tasks are given one by one and the objective is to minimize the total traveled distance (starting from
given origin $\mathcal{O}$) plus the total service cost. The system is called \emph{unrestricted} if $\mathcal{T}$ consists of all non-negative
real functions on $\MM$. The authors of~\cite{BoLiSa92} show that the competitive ratio is exactly $2m-1$ for the unrestricted metrical task
system on any metric space on $m$ points.

A restricted model is that of \emph{metrical service systems}, introduced in~\cite{ChLa92},\cite{ChrLar92D} and~\cite{MaMcSl90}.
(In~\cite{MaMcSl90} it is called \emph{forcing task systems}.) Such a system is a pair $\mathcal{S}=(\MM,\mathcal{R})$, where $\MM$ is a metric
space and $\mathcal{R}$ a set of requests where each request $r\in\mathcal{R}$ is a subset of $\MM$. The system is called \emph{unrestricted} if
$\mathcal{R}$ consists of all subsets of $\MM$. Metrical service systems correspond to metrical task systems for which $\tau:\MM\rightarrow
\{0,\infty\}$ for each task $\tau$. Manasse et al.~\cite{MaMcSl90} give an optimal $(m-1)$-competitive algorithm for the unrestricted metrical
service system on any metric space on $m$ points.

The generalized 2-server problem is a metrical service system: There is one server moving in the product space $\MM=\mathbb{X}\times \mathbb{Y}$
and any pair $(x,y)\in \mathbb{X}\times \mathbb{Y}$ defines a request $r(x,y)=\{\{x\}\! \times\!  \mathbb{Y}\} \cup \{\mathbb{X}\! \times\!
\{y\}\}\subset \MM$. The distance between points $(x_1,y_1)$ and $(x_2,y_2)$ in $\mathbb{X}\times \mathbb{Y}$ is
$d((x_1,y_1),(x_2,y_2))=d^{\XX}(x_1,x_2)+d^{\YY}(y_1,y_2)$, where $d^{\XX}$ and $d^{\YY}$ are the distance functions of the metric spaces
$\mathbb{X}$ and $\mathbb{Y}$.

The work function algorithm is optimal for metrical task and metrical service systems in the sense that it is, respectively, $(2m-1)$- and $(m-1)$-competitive on any metric space of at most $m$ points~\cite{ChrobakLarmore1998}. This is not of direct use for the \CNN problem since the metric
space, $\RR^2$, has an unbounded number of points.

\subsection{The work function algorithm: $\WFA_{\lambda}$}
The work function algorithm appeared for the first time in~\cite{ChLa92} but was discovered independently by others (see~\cite{KouPap95A}). We
use it here only for metrical service systems but it works the same for metrical task systems.
\begin{definition}
Given a metrical service system $\mathcal{S}=(\MM,\mathcal{R})$ and origin $\mathcal{O}\in \MM$, and given a request sequence $\sigma$, the work
function $W_{\sigma}:\MM\rightarrow \RRplus$ is defined as follows. For any point $s\in \MM$, $W_{\sigma}(s)$ is the length of the shortest path
that starts in $\mathcal{O}$, ends in $s$ and serves $\sigma$.
\end{definition}

We assume here that the work function is well-defined (which is always true if the metric space is finite). Thus, we assume that for any
$\sigma=r_1,\ldots,r_n$ and any point $s\in \MM$ there are points $s_i\in r_i$ $(i=1,\ldots, n)$ such that
$d(\mathcal{O},s_1)+d(s_1,s_2)+\cdots+d(s_{n-1},s_n)+d(s_n,s)\leqslant d(\mathcal{O},t_1)+d(t_1,t_2)+\cdots+d(t_{n-1},t_n)+d(t_n,s)$ for any set
of points  $t_i\in r_i$ $(i=1,\ldots, n)$. Clearly, the work function is well-defined for the generalized 2-server problem since we may assume
that for each $s_i$, both its \emph{coordinates} are from requests given so far. See~\cite{ChrLar92D} for a sufficient condition for the work
function to be well-defined.

For a work function $W_{\sigma}$ we say that point $s$ is \emph{dominated} by point $t$ if $W_{\sigma}(s)=W_{\sigma}(t)+d(s,t)$. We define the
\emph{support} of $W_{\sigma}$ as
\[\textrm{supp}(W_{\sigma})=\{s\in \MM : s\textrm{ is not dominated by any other point}\}.\]
Let $\sigma r$ denote the sequence $\sigma$ followed by request $r$. If $W_{\sigma r}$ is a well-defined work function then supp$(W_{\sigma
r})\subseteq r$ since for any point $s\notin r$ there exists a point $t\in r$ such that $W_{\sigma r}(s)=W_{\sigma r}(t)+ d(s,t)$. For more
properties and a deeper analysis of the work function (algorithm) see for example \cite{BorodinElYaniv1998Book},\cite{Bu96},
and~\cite{Koutsoupias2009}.

The generalized work function algorithm is a work function-based algorithm parameterized by some constant $\lambda \in (0,1]$. We denote it by
$\WFA_{\lambda}$.

\begin{definition}\label{def:WFAlgo}
For any request sequence $\sigma$ and any new request $r$, the \emph{generalized work function algorithm} $\WFA_{\lambda}$ moves the server from
the position $s$ it had after serving $\sigma$ to any point
\begin{equation}\label{def:WFA}
s'\in {\rm Argmin}_{t\in \MM}
\{  W_{\sigma r}(t)+ \lambda d(s,t)\} .
\end{equation}
\end{definition}
This minimum may not be well-defined if the request $r$ contains infinitely many points of the metric space. This is no problem for the
generalized 2-server problem since the minimum is attained for some $t$ with both coordinates of the given requests~\cite{SittersStougie2006}.
From~\eqref{def:WFA}, we see that
\[W_{\sigma r}(s')+ \lambda d(s,s') \leqslant W_{\sigma r}(t)+\lambda d(s,t)\ \text{ for any point $t\in \MM$}.\]
Using the triangle inequality, we get that for any $t\in \MM$
\begin{eqnarray}\label{eq:lWFA2}
  W_{\sigma r}(s')\leqslant W_{\sigma r}(t)+  \lambda (d(s,t)-d(s,s')) \leqslant W_{\sigma r}(t)+  \lambda d(s',t).
\end{eqnarray}
If $\lambda<1$ then~(\ref{eq:lWFA2}) implies that $s'$ is not dominated by any other point, whence $s'\in \textrm{supp}(W_{\sigma r})\subseteq r$.
We see that if the moves of $\WFA_{\lambda}$ are well-defined then the choice of $\lambda<1$ \emph{ensures} that the point $s'$ always serves the
last request and we may \emph{replace} $t\in \MM$ by ${t\in r}$ in Definition~\ref{def:WFAlgo}.

For $\lambda=0$, the generalized work function algorithm corresponds to the algorithm that always moves to the endpoint of an optimal solution,
and for $\lambda=\infty$ it corresponds to the greedy algorithm (if we take $t\in r$ instead of $t\in \MM$ in~\eqref{def:WFA}). The standard
work function algorithm has $\lambda=1$ and was first used in~\cite{ChLa92} and has been studied extensively. The general form was defined
in~\cite{ChLa92} as well but was used only shortly after in~\cite{ChrLar92D} where it is called  the $\lambda$-Cheap-and-Lazy strategy. They
show that $\WFA_{\lambda}$ with $\lambda=1/3$ is optimal for the 2-point request problem. Burley generalized this and showed that
$\WFA_{\lambda}$ is $O(k2^k)$-competitive for the $k$-\emph{point request problem} (where $\lambda$ depends on $k$).

In most papers, $\lambda$ is placed before $W_{\sigma r}$ in~\eqref{def:WFA} instead of before $d(s,t)$, as we do here. Also, sometimes $\alpha$
is used instead of $\lambda$. For example, Burley~\cite{Bu96} uses $\alpha>1$ and the following definition of the work function algorithm: $s'\in
{\rm Argmin}_{t\in \MM} \{  \alpha W_{\sigma r}(t)+ d(s,t)\}$. Replacing $\alpha$ by $1/\lambda$ matches our definition. Our choice was partly
 for an aesthetical reason: Now, the term $\lambda$ appears much more often in the paper than the term $1/\lambda$. But also in the
definitions of the \emph{extended cost} and \emph{slack function} (Section~\ref{sec:prelimanaries}), using $\lambda<1$ seems the natural choice.

\subsection{Paper outline and proof sketch}
The main part of this paper is devoted to the \CNN problem (Theorem~\ref{th:theorem1}). The generalization to arbitrary metric spaces
(Theorem~\ref{th:theorem2}) is more complex and we do this in a separate section. The proof of Theorem ~\ref{th:theorem1} is based on no less
than 21 lemmas.  To obtain a better insight in the relation between lemmas we mention after each lemma where it is used. The proof of
Theorem~\ref{th:theorem2} uses exactly the same lemmas (only some constants are different) and we indicate how to adjust the proofs of these
lemmas.

Before giving a sketch of the proof, we give a brief outline of the paper. In Section~\ref{sec:prelimanaries} we list some properties of the work
function algorithm. These hold for any metrical service system and can be found in several other papers,
e.g.~\cite{Bu96,ChrobakLarmore1998,KouPap95A}. Further, we introduce the closely related \emph{slack function} and list some of its properties. In
Section~\ref{sec: cnn problem} we present our potential function for the \CNN problem together with some of its properties. Although the potential
function is defined for the \CNN problem, the theory in Sections~\ref{sec:potential cnn} and~\ref{sec:prop potential} applies to any metrical
service system on $\RR^2$. In Section~\ref{sec:prop cnn} we state some properties of the \CNN problem which do not depend on the potential and in
Section ~\ref{sec:pot applied} we put everything together and apply the potential to the \CNN problem. In Section~\ref{sec:2-server} we show how
to modify the proof  for general metric spaces, i.e., we prove that the generalized work function algorithm is constant competitive for the
generalized 2-server problem. In Section~\ref{sec:higher dim} we give a sketch of a possible extension to higher dimensions. Finally, in
Section~\ref{sec:future} we discuss several open problems in online optimization.

There are several reasons for giving a separate \CNN proof. First, the reader has the option of just reading the \CNN proof and skip the more
difficult general proof. Nevertheless, we believe that the generalization is relatively easy to digest once the reader has worked through the
\CNN proof and it may be even easier this way than when we would present only the general proof.  One reason is that the \CNN problem can be
seen as moving points in the Euclidean plane which makes the proof easier to visualize than the proof for the general case.

Our potential function has a long description and may seem unintuitive at first. It is a linear combination of two functions: $\Ff$ and $\Gg$.
Function $\Ff$ is a special case of the potential function that was used in~\cite{SittersStougie2006}  to give the first constant competitive
algorithm for the generalized 2-server problem. When we use only $\Ff$ as our potential function and follow the line of proof that we use here,
then the analysis fails. Taking $\Gg$ as potential function does not work either. However, the two functions are in a way complementary and if we
take a linear combination of the two functions then the proof goes through.

Next, we give a very short technical sketch of the proof, which applies to both the \CNN problem and the general problem. This part can be ignored
but it may be very helpful for readers that are familiar with analysis of the work function algorithm. Definitions and formulas given here are
presented in more detail later.

The potential function $\Phi_{\sigma}$ assigns a real value to each request sequence $\sigma$. It has the following form:
\[\Phi_{\sigma}=(1-\gamma)\min_{s_1,s_2,s_3\in\MM}\Ff_{\sigma}(s_1,s_2,s_3)+\gamma\min_{s_1,s_2,s_3\in\MM}\Gg_{\sigma}(s_1,s_2,s_3).\] The
functions $\Ff_{\sigma}:\MM^3\rightarrow \RR$ and $\Gg_{\sigma}: \MM^3\rightarrow \RR$ depend on the sequence $\sigma$. Further,
$\MM=\XX\times\YY$ and $\gamma\in (0,1)$ is a constant. The initial value is zero and in general it is upper bounded by the optimal value of the
sequence so far, i.e., $\Phi_{\sigma} \leqslant \Opt_{\sigma}$. We consider two arbitrary, subsequent requests $r'$ and $r''$ and show that the increase
$\Phi''-\Phi'$ of the potential function for the new request $r''$  is at least some constant $c$ times the so called \emph{extended cost} for
$r''$, denoted by $\nnabla_{r''}$ (Definition~\ref{def:nabla}):
\begin{equation}\label{eq:proof sketch} \Phi''-\Phi'\geqslant c \nnabla_{r''}.\end{equation}
Then, taking the sum over all requests in the entire sequence $\roo$ of the instance~\footnote{In this paper, $\roo$ always refers to the entire
given sequence, i.e., no requests are given after $\roo$. We mainly use $\sigma$ otherwise.}, we find that the total increase in the potential
function is at least $c$ times the total extended cost, denoted by $\nnabla_{\roo}$. We get $\nnabla_{\roo}\leqslant (1/c)\Phi_{\roo}\leqslant (1/c)\Opt_{\roo}$. Proof of competitiveness then follows directly from Lemma~\ref{lem:nabla sigma}.

We now give some more details of Equation~\eqref{eq:proof sketch}. Let $\sigma'$ be a request sequence which ends with $r'$. It is followed by
$r''$ and we denote $\sigma''=\sigma' r''$.  Let $s_1,s_2,s_3$ be a minimizer of $\Ff_{\sigma''}$. By construction of $\Ff$, all three points will
serve the last request $r''$. (The same holds for $\Gg_{\sigma''}$.) We distinguish between \emph{Case A}: $|\{s_1,s_2,s_3\}|\leqslant 2$, and \emph{Case B}: $|\{s_1,s_2,s_3\}|= 3$.  In the following, $c_1,c_2,c_3,c_4>0$ are specific constants depending on $\lambda$. For Case A, we show that \[\min\Ff_{\sigma''}-\min\Ff_{\sigma'}\geqslant c_1\nnabla_{r''}\ \text{ and }\ \min\Gg_{\sigma''}- \min\Gg_{\sigma'}\geqslant 0,\] where
$\nnabla_{r''}$ is the extended cost for $r''$ w.r.t. $\sigma'$. Hence, the increase for $\Phi$ is at least $(1-\gamma)c_1\nnabla_{r''}$ and
~\eqref{eq:proof sketch} holds with $c=(1-\gamma)c_1$. Note that, if we were always in Case A, then there would be no need for function $\Gg$. The more difficult part of the proof is Case B. In that case, it is easy to show that the increase for the minimum of function $\Ff$ is 
\begin{eqnarray*}
\min\Ff_{\sigma''}-\min\Ff_{\sigma'}&\geqslant& c_2\min\{\dd x,\dd y\},
\end{eqnarray*}
where $\dd x=d^{\XX}(x',x'')$ and $\dd y=d^{\YY}(y',y'')$. This is not enough to prove~\eqref{eq:proof sketch} if $\min\{\dd x,\dd y\}\ll \nnabla_{r''}$. So, let us consider the extreme case that $\min\{\dd x,\dd y\}=0$. Intuitively, we should be fine if we can handle this. The function $\Gg$ was designed exactly for this case. More precisely, we show that if
$\min\{\dd x,\dd y\}=0$, then 
\begin{equation}\label{eq:sketch1}
\min\Gg_{\sigma''}- \min\Gg_{\sigma'}\geqslant c_3\nnabla_{r''}.
\end{equation} 
Hence,  the increase for $\Phi$ is at least $\gamma c_3\nnabla_{r''}$ and~\eqref{eq:proof sketch} holds with $c=\gamma c_3$. In general, we prove that
\begin{eqnarray*}
\min\Gg_{\sigma''}- \min\Gg_{\sigma'}&\geqslant& c_3\nnabla_{r''}-c_4\min\{\dd x,\dd y\}.
\end{eqnarray*}
The increase in $\Phi$ becomes  at least
\[\gamma c_3\nnabla_{r''}+((1-\gamma)c_2-\gamma c_4)\min\{\dd x,\dd y\},\]
which becomes at least  $\gamma c_3\nnabla_{r''}$ by choosing $\gamma< c_2/(c_2+c_4)$. Again,~\eqref{eq:proof sketch} applies with $c=\gamma c_3$.

Finally, a few words on how to prove~\eqref{eq:sketch1}. Let $(s_1,s_2,s_3)$ be a minimizer of $\Gg_{\sigma''}$. Remember that, for function $\Ff$, we argued that we are fine if the cardinality of $\{s_1,s_2,s_3\}$ is at most 2. But for function $\Gg$ we can enforce this situation by using the fact that the two subsequent requests are aligned. Say that $\dd y=0$. It will turn out that the only interesting case is when $s_1,s_2,s_3$ are all on the line $y=y'=y''$. (Since other wise,~\eqref{eq:sketch1} will follow almost directly.) For this situation, we prove (see Lemma~\ref{lem:convex containment}) that one of the three points is redundant in the sense that there are points $u_1,u_2\in \{s_1,s_2,s_3\}$ such that $\Gg_{\sigma''}(u_1,u_2,u_2)=\Gg_{\sigma''}(s_1,s_2,s_3)$. Then,~\eqref{eq:sketch1} is proven in a similar was as is done for $\Ff$ in Case A.

Summarizing, function $\Ff$ works fine as a potential function on its own, except for the case that $\min\{\dd x,\dd y\}\approx 0$. In a way, the difficult part is reduced to an easier situation where the two subsequent requests are on a line and we designed a potential function $\Gg$ that takes care of this situation. In other words, the difficult case in the proof is reduced to a problem of lower dimension. This insight led to the generalization to higher dimensions discussed in Section~\ref{sec:higher dim}.
\bigskip

\section{Preliminaries}\label{sec:prelimanaries}
This section applies to any metrical service system. For the analysis of the generalized work function algorithm we make extensive use of two
concepts: \emph{extended cost} and \emph{slack}. The first is an amortized cost of the (general) work function algorithm. It was introduced
together with the work function algorithm in \cite{ChLa92} (where it is called pseudo-cost) and has been used in every analysis of the work
function algorithm. The slack function was defined by Burley~\cite{Bu96} and was also used in~\cite{SittersStougie2006}. Its definition comes
naturally with that of extended cost and its use enhances the analysis.

\subsection{The extended cost}
\begin{definition}\label{def:nabla}
For request sequence $\sigma$ and request $r$, the extended cost for $r$ is
\[\nnabla_r(W_{\sigma})=\max_{s\in \MM}\ \min_{t\in r}\ \left[W_{\sigma}(t)+\lambda d(s,t)-W_{\sigma}(s)\right].\] For $\roo=r_1 r_2\cdots r_n$, we
define the total extended cost as $\nnabla_{\roo}=\sum_{i=1}^{n}\nnabla_{r_i}(W_{r_1\cdots r_{i-1}})$.
\end{definition}

The definition of extended cost matches that in~\cite{SittersStougie2006} and matches the commonly used extended cost in case $\lambda=1$. It also
matches the definition by Burley~\cite{Bu96}, although the notation is quite different. The intuition behind extended cost becomes clear from the
following lemma and its proof.

\smallskip\begin{lemma}\label{lem:nabla sigma} Let $\nnabla_{\roo}$ be the total extended cost of sequence $\roo$. If
$\nnabla_{\roo}\leqslant c\Opt_{\roo}$ for some constant $c$ and any request sequence $\roo$, then $\WFA_{\lambda}$ is
$(c-1)/\lambda$-competitive. (Used in proof of Theorem~\ref{th:theorem1})\end{lemma}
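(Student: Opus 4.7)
The plan is to bound the per-request cost of $\WFA_{\lambda}$ by a per-request increment in a telescoping sum involving work function values, and then apply the hypothesis $\nnabla_{\roo}\leqslant c\Opt_{\roo}$. Let $\roo=r_1r_2\cdots r_n$, let $W_i:=W_{r_1\cdots r_i}$ (with $W_0$ being the work function of the empty sequence), and let $s_i$ denote the position of $\WFA_{\lambda}$ after serving $r_1,\ldots,r_i$, so $s_0=\mathcal{O}$. The algorithm's cost is $\WFA_{\lambda}(\roo)=\sum_{i=1}^n d(s_{i-1},s_i)$.

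The key step is the per-request inequality
\[\nnabla_{r_i}(W_{i-1})\geqslant W_i(s_i)+\lambda\, d(s_{i-1},s_i)-W_{i-1}(s_{i-1}).\]
To derive this, first I would use the observation that for any point $t\in r_i$ one has $W_i(t)=W_{i-1}(t)$: the inequality $W_i(t)\geqslant W_{i-1}(t)$ is immediate because a path serving $\sigma,r$ also serves $\sigma$, and the reverse inequality follows from the formula $W_i(t)=\min_{t'\in r_i}[W_{i-1}(t')+d(t',t)]$ by taking $t'=t$. Combined with the fact that $\lambda<1$ forces the $\WFA_{\lambda}$ move to land on $s_i\in r_i$ (as explained after~\eqref{eq:lWFA2}), the algorithm's defining minimization can be rewritten as
\[W_i(s_i)+\lambda\, d(s_{i-1},s_i)=\min_{t\in r_i}\bigl[W_{i-1}(t)+\lambda\, d(s_{i-1},t)\bigr].\]
Plugging $s^*=s_{i-1}$ into the maximum in Definition~\ref{def:nabla} then gives exactly the claimed per-request inequality.

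Next I would telescope: summing the inequality over $i=1,\ldots,n$ yields
\[\nnabla_{\roo}\geqslant W_n(s_n)-W_0(s_0)+\lambda\cdot\WFA_{\lambda}(\roo)=W_n(s_n)+\lambda\cdot\WFA_{\lambda}(\roo),\]
using $W_0(s_0)=W_0(\mathcal{O})=0$. Since $W_n(s)$ is the minimum cost of any path from $\mathcal{O}$ serving $\roo$ and ending at $s$, we have $W_n(s_n)\geqslant \min_{s\in\MM}W_n(s)=\Opt_{\roo}$. Combining with the hypothesis $\nnabla_{\roo}\leqslant c\,\Opt_{\roo}$,
\[\lambda\cdot\WFA_{\lambda}(\roo)\leqslant \nnabla_{\roo}-\Opt_{\roo}\leqslant (c-1)\Opt_{\roo},\]
which is the desired competitiveness bound.

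The only place that requires a little care is the per-request inequality itself. In particular, one must be explicit about why the $\min_{t\in\MM}$ in Definition~\ref{def:WFAlgo} can be replaced by $\min_{t\in r_i}$ (this is where $\lambda<1$ is used) and why $W_i$ and $W_{i-1}$ coincide on $r_i$; both facts were essentially recorded earlier in the text, so no new work is required. Everything else is an honest telescoping together with the trivial lower bound $\min_s W_n(s)=\Opt_{\roo}$.
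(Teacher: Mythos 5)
Your proposal is correct and follows essentially the same route as the paper: plug the algorithm's current position into the max defining the extended cost, use $\lambda<1$ (so the move lands on the request, where $W_{\sigma,r}=W_{\sigma}$) to identify the inner minimum with the algorithm's move, then telescope and bound the final work function value below by $\Opt_{\roo}$. The only difference is that you make explicit the bookkeeping ($W_i=W_{i-1}$ on $r_i$) that the paper leaves implicit when summing.
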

\begin{proof}
Assume the online server is in point $s'$ after it served the initial sequence $\sigma$ and moves to $t'$ to serve a new request $r$. Since we
maximize over $s\in \MM$ in Definition~\ref{def:nabla} we have
\begin{eqnarray*}
\nnabla_r(W_{\sigma})&\geqslant &\min_{t\in r}\left[W_{\sigma}(t)+\lambda d(s',t)-W_{\sigma}(s')\right]\\
 &=&\min_{t\in r}\left[W_{\sigma r}(t)+\lambda d(s',t)-W_{\sigma}(s')\right]
\end{eqnarray*}
By definition of $\WFA_{\lambda}$, the minimum in the right side is attained for $t=t'$. Therefore,
\[\nnabla_r(W_{\sigma})\geqslant W_{\sigma r}(t')+\lambda d(s',t')-W_{\sigma}(s').\]
Rewriting we get
\[d(s',t')\leqslant \frac{1}{\lambda}\left(\nnabla_r(W_{\sigma})+W_{\sigma}(s') -W_{\sigma r}(t')\right).\]
This gives an upper bound for the cost of $\WFA_{\lambda}$ for serving some single request $r$. Let $q$ be the point where the algorithm ends
after serving $\roo$. Summing up over all requests in $\roo$ we get that the total cost for $\WFA_{\lambda}$ is at most
\[\frac{1}{\lambda}(\nnabla_{\roo}+W_{\epsilon}(\mathcal{O})- W_{\roo}(q)) \leqslant \frac{1}{\lambda}(\nnabla_{\roo}-\Opt_{\roo})\leqslant\frac{1}{\lambda}(c-1)\Opt_{\roo}.\]  \end{proof}

\subsection{The slack function}
We use the concept of the {\em slack} of a point relative to another point. Intuitively, the slack of a point $s$ with respect to a point $t$ is
the amount that the work function value in $s$ can increase before the generalized work function algorithm moves from $s$ to $t$. More precisely,
the generalized work function algorithm, being in point $s$ after serving sequence $\sigma$, moves away from $s$ after a new request $r$ is given
if there is a point $t$ such that $W_{\sigma r}(t)+\lambda d(s,t)\leqslant W_{\sigma r}(s)$. The slack is the difference between the left and
right side of this inequality. More generally, we define the slack of a point with respect to a subset of $\MM$. See Figure~\ref{fig:slack}.
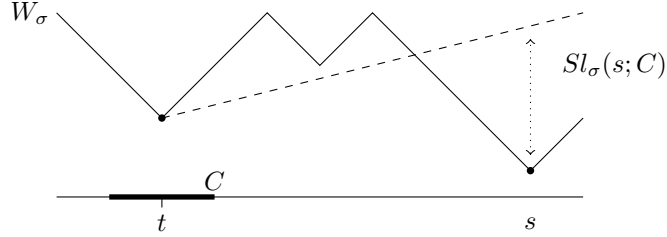
\begin{figure}
\center
\begin{tikzpicture}[scale=0.7]
\draw [line width=2pt] (1,0.5)--(3,0.5);
\draw (0,0.5)--(10,0.5);
\draw (2,0.3)--(2,0.5);
\draw (0,4)--(2,2)--(4,4)--(5,3)--(6,4)--(9,1)--(10,2);
\draw [dashed](2,2)--(10,4);
\draw [<->] [dotted](9,1.3)--(9,3.5);
\fill (canvas cs:x=2cm,y=2cm) circle (2pt);
\fill (canvas cs:x=9cm,y=1cm) circle (2pt);
\node at (2,0) {$t$};
\node at (9,0) {$s$};
\node at (3,0.8) {$C$};
\node at (-0.5,4) {$W_{\sigma}$};
\node at (10.6,3) {$\Sl_{\sigma}(s;C)$};
\end{tikzpicture}
\caption{The slack of $s$ with respect to line segment $C$ is attained in $t\in C$.}
\label{fig:slack}\end{figure}
\begin{definition}
Given a request sequence $\sigma$, we define the slack of a point $s\in \MM$ with respect to a (possibly infinite) set of points $C\subseteq\MM$ as
\[\Sl_{\sigma}(s;C)=\min\limits_{t \in C}\{W_{\sigma}(t)+\lambda d(s,t)\}-W_{\sigma}(s).\]
If $C$ contains only one point $t$ then we simply write $\Sl_{\sigma}(s;t)$ instead of $\Sl_{\sigma}(s;\{t\})$.
\end{definition}
If $C$ is a closed subset of $\MM$ then the minimum is well-defined.

Using the slack function makes the proof shorter and more intuitive. For example, we can rewrite the extended cost, $\nnabla_{r}(W_{\sigma})$, for
request sequence $\sigma$ and new request $r$ in terms of the slack function.
\begin{equation}\label{eq:nabla for r rewritten}
\nnabla_{r}(W_{\sigma})= \max_{s\in \MM}\{\min_{t\in r} \{W_{\sigma}(t)+\lambda d(s,t)\} -W_{\sigma}(s)\}=\max_{s\in \MM} \Sl_{\sigma}(s;r).
\end{equation}
In the remainder of this section, we list some properties of the slack function. The first property~\eqref{eq:Slack 1+lambda} follows directly from
its definition and from the work function being Lipschitz continuous with constant 1. For any $s,t\in \MM$
\begin{equation}\label{eq:Slack 1+lambda}
\Sl_{\sigma}(s;t)\leqslant (1+\lambda)d(s,t).
\end{equation}
The next lemma also follows directly from the definition of slack.
\smallskip\begin{lemma}\label{lem:subsetslack}
If $C_1\subseteq C_2\subseteq \MM$, then for any $s\in\MM$ we have  $\Sl_{\sigma}(s;C_1)\geqslant \Sl_{\sigma}(s;C_2)$. (Used in proof of many
lemmas.)\end{lemma}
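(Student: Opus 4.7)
The plan is to unfold the definition of the slack function and observe that the claim reduces to the elementary monotonicity of the $\min$ operator with respect to the set over which we minimize. Concretely, by definition
\[\Sl_{\sigma}(s;C_i)=\min_{t\in C_i}\{W_{\sigma}(t)+\lambda d(t,s)\}-W_{\sigma}(s),\qquad i=1,2,\]
so after subtracting the common term $W_{\sigma}(s)$ it suffices to show that
\[\min_{t\in C_1}\{W_{\sigma}(t)+\lambda d(t,s)\}\;\geqslant\;\min_{t\in C_2}\{W_{\sigma}(t)+\lambda d(t,s)\}.\]

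First I would let $t_1\in C_1$ be any point attaining (or approaching, if the infimum is not attained) the minimum on the left-hand side. Since $C_1\subseteq C_2$, we have $t_1\in C_2$, so $t_1$ is a feasible candidate for the minimization on the right-hand side. Hence the right-hand minimum is at most the value $W_{\sigma}(t_1)+\lambda d(t_1,s)$ achieved by $t_1$, which is exactly the left-hand side. This establishes the required inequality; adding back $-W_{\sigma}(s)$ to both sides yields $\Sl_{\sigma}(s;C_1)\geqslant\Sl_{\sigma}(s;C_2)$.

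I do not anticipate any real obstacle: the statement is a one-line consequence of the fact that shrinking the feasible set of a minimization can only increase (or leave unchanged) the optimal value. The only minor point to keep in mind is the well-definedness of the minima, which the paper has already addressed through its standing assumptions on the work function. No triangle inequality, Lipschitz property of $W_{\sigma}$, or other structural feature of the metrical service system is needed for this step.
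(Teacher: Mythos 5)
Your proof is correct and matches the paper, which simply states that the lemma follows directly from the definition: minimizing over the smaller set $C_1\subseteq C_2$ can only yield a value at least as large as minimizing over $C_2$, and subtracting the common term $W_{\sigma}(s)$ gives the claim. No further comment is needed.
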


The lemma above is mostly used in the form: $t\in C\subseteq \MM$ implies $\Sl_{\sigma}(s;t)\geqslant \Sl_{\sigma}(s;C)$.

\smallskip\begin{lemma}\label{lem:minimum slack}
For any set of points $C\subset\MM$ there is a point $s\in C$ such that $\Sl_{\sigma}(s;C)= 0$. (Used in proof of Lemma~\ref{lem:F<=G<=H}.)
\end{lemma}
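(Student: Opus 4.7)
The plan is to take $s$ to be a minimizer of $W_\sigma$ on $C$ and verify that $\Sl_\sigma(s;C)=0$ directly from the definition; I do not expect any serious obstacle beyond a mild existence issue discussed at the end.

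First I would record the ``easy direction'' that requires no particular choice of $s$: for \emph{every} $s\in C$, plugging $t=s$ into the minimum inside the definition of $\Sl_\sigma(s;C)$ produces the value $W_\sigma(s)+\lambda d(s,s)=W_\sigma(s)$, so $\min_{t\in C}\{W_\sigma(t)+\lambda d(t,s)\}\leqslant W_\sigma(s)$ and hence $\Sl_\sigma(s;C)\leqslant 0$. Thus the content of the lemma is really to exhibit some $s\in C$ at which the slack is not strictly negative.

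Next I would pick $s\in\argmin_{t\in C}W_\sigma(t)$. For such an $s$, every $t\in C$ satisfies both $W_\sigma(t)\geqslant W_\sigma(s)$ (by choice of $s$) and $\lambda d(t,s)\geqslant 0$ (since $\lambda>0$ and $d$ is a metric), so adding these gives $W_\sigma(t)+\lambda d(t,s)\geqslant W_\sigma(s)$ for every $t\in C$. Taking the minimum over $t\in C$ on the left shows $\Sl_\sigma(s;C)\geqslant 0$, and combining with the first paragraph yields $\Sl_\sigma(s;C)=0$, as required.

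The only delicate point is the existence of a minimizer of $W_\sigma$ over $C$, since $C$ may be infinite. This is, however, already covered by the paper's standing assumption that the work function is well-defined in the sense that the relevant minima are attained; specifically, for the generalized 2-server problem only finitely many ``interesting'' points (those whose coordinates come from already-given requests) need to be considered when minimizing $W_\sigma$, so the argmin above exists. This is the main (and really the only) potential obstacle, and it is not genuinely an obstacle in the present setup.
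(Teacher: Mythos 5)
Your proof is correct and follows essentially the same route as the paper: choose $s\in\Argmin_{t\in C}W_{\sigma}(t)$, observe that every $t\in C$ gives slack at least $0$ while $t=s$ gives exactly $0$. The remark about attainment of the minimum is a reasonable extra caution but does not change the argument.
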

\begin{proof}
Let $s\in\Argmin\{W_{\sigma}(t)\mid t\in C\}$. Then, for any $t\in C$:
\[
\Sl_{\sigma}(s;t) =W_{\sigma}(t)+\lambda d(s,t)-W_{\sigma}(s)\geqslant 0.
\]
Clearly, $\Sl_{\sigma}(s;s)=0$. Hence, $\Sl_{\sigma}(s;C)=\min_{t\in C}\Sl_{\sigma}(s;t)=0$.
\end{proof}\bigskip

The next lemma shows a transitivity property of slack.
\smallskip\begin{lemma}\label{lem:transitive slack}
Let  $s_1,s_2,s_3\in \MM$ such $d(s_1,s_2)+d(s_2,s_3)=d(s_1,s_3)$. Then $\Sl_{\sigma}(s_3;s_1)=\Sl_{\sigma}(s_3;s_2)+\Sl_{\sigma}(s_2;s_1)$.
(Used in proof of Lemma~\ref{lem:convex containment}.)
\end{lemma}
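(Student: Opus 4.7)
The plan is to expand all three slack quantities using their definitions and observe that the statement reduces to a single identity involving distances.

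First I would write out
\[\Sl_{\sigma}(s_3;s_1)=W_{\sigma}(s_1)+\lambda d(s_1,s_3)-W_{\sigma}(s_3),\]
\[\Sl_{\sigma}(s_3;s_2)=W_{\sigma}(s_2)+\lambda d(s_2,s_3)-W_{\sigma}(s_3),\]
\[\Sl_{\sigma}(s_2;s_1)=W_{\sigma}(s_1)+\lambda d(s_1,s_2)-W_{\sigma}(s_2),\]
each just the definition specialized to a singleton, using the notational convention stated right after the slack definition.

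Then I would add the last two expressions. The two occurrences of $W_{\sigma}(s_2)$ cancel, leaving
\[\Sl_{\sigma}(s_3;s_2)+\Sl_{\sigma}(s_2;s_1)=W_{\sigma}(s_1)+\lambda\bigl(d(s_1,s_2)+d(s_2,s_3)\bigr)-W_{\sigma}(s_3).\]
Finally I would invoke the hypothesis $d(s_1,s_2)+d(s_2,s_3)=d(s_1,s_3)$ to collapse the bracketed sum to $d(s_1,s_3)$, yielding $\Sl_{\sigma}(s_3;s_1)$ exactly.

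There is no real obstacle here: the lemma is a formal consequence of the definition of the slack on singletons, and the hypothesis that $s_2$ lies on a metric geodesic from $s_1$ to $s_3$ is used only in the very last step to combine the two distance terms. The work function values $W_{\sigma}(s_2)$ cancel automatically, so no property of $W_{\sigma}$ beyond being a well-defined function is needed, and in particular its Lipschitz continuity plays no role.
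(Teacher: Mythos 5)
Your proposal is correct and is essentially the paper's own proof: both simply expand the singleton slacks via the definition, use the hypothesis $d(s_1,s_2)+d(s_2,s_3)=d(s_1,s_3)$ to split (or merge) the distance term, and let the $W_{\sigma}(s_2)$ terms cancel. The only difference is the direction in which the chain of equalities is written, which is immaterial.
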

\begin{proof}
\[
\begin{array}{rcl}
\Sl_{\sigma}(s_3;s_1)&=&W_{\sigma}(s_1)+\lambda d(s_1,s_3)-W_{\sigma}(s_3)\\
&=&W_{\sigma}(s_1)+\lambda (d(s_1,s_2)+d(s_2,s_3))-W_{\sigma}(s_3)\\
&=&W_{\sigma}(s_1)+\lambda d(s_1,s_2)-W_{\sigma}(s_2)+W_{\sigma}(s_2)+\lambda d(s_2,s_3)-W_{\sigma}(s_3)\\
&=&\Sl_{\sigma}(s_2;s_1)+\Sl_{\sigma}(s_3;s_2).
\end{array}
\]
\end{proof}

The next lemma generalizes Lemma~\ref{lem:subsetslack}.
\smallskip\begin{lemma}\label{lem:slack change C1 to C2}
Let $C_1,C_2\subseteq\MM$ and $\delta\in \RRplus$.If for every point $u_1\in C_1$ there is a point $u_2\in C_2$ with
$d(u_1,u_2)\leqslant\delta$, then for every $s\in\MM$
\[\Sl_{\sigma}(s;C_1)\geqslant \Sl_{\sigma}(s;C_2)-(1+\lambda)\delta.\]
(Used in proof of Lemma~\ref{lem:s dominates t}.)
\end{lemma}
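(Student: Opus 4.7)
The plan is to exploit two structural facts about the slack: the $1$-Lipschitz continuity of the work function $W_\sigma$ (which the paper uses implicitly in deriving \eqref{eq:Slack 1+lambda}), and the triangle inequality for $d$. Intuitively, if every point of $C_1$ has a $\delta$-close companion in $C_2$, then the infimum defining $\Sl_\sigma(s;C_1)$ cannot drop by more than the combined effect of moving up to $\delta$ in the metric (losing at most $\lambda\delta$ in the $\lambda d(t,s)$ term) and up to $\delta$ in work function value (losing at most $\delta$ via Lipschitz).

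Concretely, I would pick $u_1 \in C_1$ attaining (or, if necessary, approximately attaining) the minimum in
\[
\Sl_\sigma(s;C_1) = \min_{t \in C_1}\{W_\sigma(t) + \lambda d(t,s)\} - W_\sigma(s),
\]
and use the hypothesis to select a matching $u_2 \in C_2$ with $d(u_1,u_2) \leqslant \delta$. By Lipschitz continuity of the work function, $W_\sigma(u_1) \geqslant W_\sigma(u_2) - \delta$, and by the triangle inequality, $d(u_1,s) \geqslant d(u_2,s) - \delta$. Adding these with weights $1$ and $\lambda$ and subtracting $W_\sigma(s)$ yields
\[
W_\sigma(u_1) + \lambda d(u_1,s) - W_\sigma(s) \geqslant W_\sigma(u_2) + \lambda d(u_2,s) - W_\sigma(s) - (1+\lambda)\delta.
\]
The left-hand side equals $\Sl_\sigma(s;C_1)$ by choice of $u_1$, and the right-hand side is at least $\Sl_\sigma(s;C_2) - (1+\lambda)\delta$ because $u_2 \in C_2$ is a particular member of the set over which the minimum defining $\Sl_\sigma(s;C_2)$ is taken.

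There is no real obstacle here; the only subtlety is technical, namely whether the minimum in the definition of $\Sl_\sigma(s;C_1)$ is actually attained. If it is not, I would work with an $\varepsilon$-minimizer and let $\varepsilon\to 0$, or equivalently observe that the argument above applied to \emph{every} $u_1 \in C_1$ gives $W_\sigma(u_1)+\lambda d(u_1,s) - W_\sigma(s) \geqslant \Sl_\sigma(s;C_2)-(1+\lambda)\delta$, and then take the infimum over $u_1 \in C_1$ on the left. This reduction to Lemma~\ref{lem:subsetslack} plus the Lipschitz bound is exactly the natural generalization suggested by the paper's remark preceding the statement.
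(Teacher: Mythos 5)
Your proposal is correct and follows essentially the same argument as the paper: choose the (near-)minimizer $u_1\in C_1$, take its companion $u_2\in C_2$ with $d(u_1,u_2)\leqslant\delta$, and combine the $1$-Lipschitz bound $W_{\sigma}(u_1)-W_{\sigma}(u_2)\geqslant -\delta$ with the triangle inequality to lose at most $(1+\lambda)\delta$ before bounding by the minimum over $C_2$. Your extra remark about $\varepsilon$-minimizers only adds a small technical safeguard the paper omits, not a different route.
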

\begin{proof}
Let $u_1\in C_1$ be such that $\Sl_{\sigma}(s;C_1)=\Sl_{\sigma}(s;u_1)$. There is a point $u_2\in C_2$ such that $d(u_1,u_2)\leqslant \delta$.
\[
\begin{array}{rcl}
&&\Sl_{\sigma}(s;C_1)-\Sl_{\sigma}(s;C_2)\\
&=& \Sl_{\sigma}(s;u_1)-\Sl_{\sigma}(s;C_2)\\
&\geqslant & \Sl_{\sigma}(s;u_1)-\Sl_{\sigma}(s;u_2)\\
&=&W_{\sigma}(u_1)+\lambda d(u_1,s)-W_{\sigma}(s)-(W_{\sigma}(u_2)+\lambda d(u_2,s)-W_{\sigma}(s))\\
&=&W_{\sigma}(u_1)-W_{\sigma}(u_2)+\lambda (d(u_1,s)-d(u_2,s))\\
&\geqslant&W_{\sigma}(u_1)-W_{\sigma}(u_2)-\lambda d(u_1,u_2)\\
&\geqslant&- d(u_1,u_2)-\lambda d(u_1,u_2)\\
&=&-(1+\lambda)\delta.
\end{array}
\]
\end{proof}\bigskip

\smallskip\begin{lemma}\label{lem:slack change s to t}
Let  $s,t\in\MM$ and $C\subset\MM$. Then, \begin{enumerate} \item[(a)] $\Sl_{\sigma}(t;C)\geqslant \Sl_{\sigma}(s;C)-(1+\lambda)d(s,t)$, and
\item[(b)]$\Sl_{\sigma}(t;C)\geqslant \Sl_{\sigma}(s;C)+(1-\lambda)d(s,t)$, if $t$ dominates $s$ w.r.t. $\sigma$.
\end{enumerate}
(Follows from Lemma~\ref{lem:subsetslack}. Used in proof of Lemma~\ref{lem:s dominates t},~\ref{lem:nabla dx} and \ref{lem:G increase}.)
\end{lemma}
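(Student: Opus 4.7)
The plan is to unfold the definition of the slack and exploit the triangle inequality (for part (a)) together with the domination equality (for part (b)). Both parts follow the same basic template: pick a witness $u \in C$ that is good for $\Sl_\sigma(t;C)$, relate $d(u,t)$ to $d(u,s)$ via the triangle inequality, and reconcile the terms $W_\sigma(s)$ versus $W_\sigma(t)$ using either Lipschitz continuity of the work function (which contributes the bad $+1$) or the domination equality $W_\sigma(s)=W_\sigma(t)+d(s,t)$ (which contributes a favourable $-1$).

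For part (a), I would let $u^{*}\in C$ be a minimizer realizing $\Sl_\sigma(t;C)=W_\sigma(u^{*})+\lambda d(u^{*},t)-W_\sigma(t)$. The triangle inequality gives $d(u^{*},t)\geqslant d(u^{*},s)-d(s,t)$, and Lipschitz continuity of $W_\sigma$ with constant $1$ gives $-W_\sigma(t)\geqslant -W_\sigma(s)-d(s,t)$. Plugging both in,
\[
\Sl_\sigma(t;C)\;\geqslant\;\bigl(W_\sigma(u^{*})+\lambda d(u^{*},s)-W_\sigma(s)\bigr)-(1+\lambda)d(s,t),
\]
and since $u^{*}\in C$, the bracketed expression is at least $\Sl_\sigma(s;C)$, which proves (a).

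For part (b), the only change is that Lipschitz continuity is replaced by the stronger \emph{equality} $W_\sigma(t)=W_\sigma(s)-d(s,t)$ (which is exactly the statement that $t$ dominates $s$). This flips the sign of one of the $d(s,t)$ contributions: writing
\[
\Sl_\sigma(t;C)=\min_{u\in C}\{W_\sigma(u)+\lambda d(u,t)\}-W_\sigma(s)+d(s,t)
\]
and using $d(u,t)\geqslant d(u,s)-d(s,t)$ uniformly in $u$ before taking the minimum, one obtains
\[
\Sl_\sigma(t;C)\;\geqslant\;\Sl_\sigma(s;C)-\lambda d(s,t)+d(s,t)=\Sl_\sigma(s;C)+(1-\lambda)d(s,t),
\]
which is (b). There is no real obstacle here; the argument is entirely mechanical once one notices that (b) is just (a) with the Lipschitz inequality sharpened to an equality by the domination hypothesis, which changes $-(1+\lambda)$ to $+(1-\lambda)$.
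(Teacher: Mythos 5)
Your proof is correct and follows essentially the same route as the paper: pick a witness $u\in C$ realizing $\Sl_{\sigma}(t;C)$, apply the triangle inequality to shift $\lambda d(u,t)$ to $\lambda d(u,s)$, bound $\Sl_{\sigma}(s;u)\geqslant \Sl_{\sigma}(s;C)$, and then treat $W_{\sigma}(s)-W_{\sigma}(t)$ via the Lipschitz bound $\geqslant -d(s,t)$ in case (a) and the domination equality $=d(s,t)$ in case (b). Your phrasing of (b) with the minimum taken uniformly over $u\in C$ is just a cosmetic variant of the same argument.
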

\begin{proof}
Let $u\in C$ be such that $\Sl_{\sigma}(t;C)=\Sl_{\sigma}(t;u)$. Then,
\[
\begin{array}{rcl}
\Sl_{\sigma}(t;C)&=&\Sl_{\sigma}(t;u)\\
&=&\Sl_{\sigma}(s;u)-\lambda d(u,s)+\lambda d(u,t)+W_{\sigma}(s)-W_{\sigma}(t)\\
&\geqslant&\Sl_{\sigma}(s;u)-\lambda d(s,t)+W_{\sigma}(s)-W_{\sigma}(t)\\
&\geqslant&\Sl_{\sigma}(s;C)-\lambda d(s,t)+W_{\sigma}(s)-W_{\sigma}(t).\\
\end{array}
\]
The first inequality is given by the triangle inequality and the second by Lemma~\ref{lem:subsetslack}. In general,
$W_{\sigma}(s)-W_{\sigma}(t)\geqslant -d(s,t)$, which implies $(a)$. If $t$ dominates $s$ then we have the stronger bound
$W_{\sigma}(s)-W_{\sigma}(t)= d(s,t)$.
\end{proof}

\bigskip

\section{The {\small CNN} problem}\label{sec: cnn problem}

A simple example shows that the standard work function algorithm $\WFA_{1}$ has unbounded competitive ratio for the \CNN problem: Take $(0,0)$
as the origin and consider the request sequence $(1,2),(2,2),(3,2),\dots (m,2)$ for arbitrary $m$. The optimal solution moves from $(0,0)$ to
$(0,2)$ but the work function algorithm follows the path $(0,0),(1,0),(2,0),\dots,(m,0)$. (There are no draws.) The competitive ratio for this
instance is $m/2$.

\smallskip\begin{theorem}\label{th:theorem1}
The generalized work function algorithm $\WFA_{\lambda}$ is constant competitive for the \CNN problem for any constant $\lambda$ with
$0<\lambda<1$.
\end{theorem}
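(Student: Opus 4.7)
The plan is to verify the hypothesis of Lemma~\ref{lem:nabla sigma} via a potential-function argument: I will construct a nonnegative function $\Phi_{\sigma}$ of the work function such that $\Phi_{\epsilon}=0$, $\Phi_{\roo}\leqslant c_0\Opt_{\roo}$ for every request sequence $\roo$, and for every pair of consecutive requests $r',r''$ (with $\sigma''=\sigma',r''$),
\[
\Phi_{\sigma''}-\Phi_{\sigma'}\;\geqslant\; \kappa\cdot\nnabla_{r''}(W_{\sigma'})
\]
for constants $\kappa,c_0>0$ depending only on $\lambda$. Telescoping gives $\kappa\nnabla_{\roo}\leqslant\Phi_{\roo}\leqslant c_0\Opt_{\roo}$, and Lemma~\ref{lem:nabla sigma} then yields a $(c_0/\kappa-1)/\lambda$-competitive ratio.

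Following the outline in the introduction, I would take
\[
\Phi_{\sigma}=(1-\gamma)\min_{s_1,s_2,s_3\in\MM}\Ff_{\sigma}(s_1,s_2,s_3)+\gamma\min_{s_1,s_2,s_3\in\MM}\Gg_{\sigma}(s_1,s_2,s_3),
\]
with $\MM=\RR\times\RR$ and both $\Ff_{\sigma},\Gg_{\sigma}$ built from the values $W_{\sigma}(s_i)$ and weighted pairwise distances $d(s_i,s_j)$, normalized so that an optimal offline endpoint $e$ provides the uniform upper bound $\Phi_{\roo}\leqslant c_0\Opt_{\roo}$ by setting $s_1=s_2=s_3=e$. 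The function $\Ff$ is the specialization of the potential from~\cite{SittersStougie2006}; by itself it handles the configurations where some minimizers coincide, but fails to absorb $\nnabla_{r''}$ when the three minimizing points are geometrically spread out. The role of $\Gg$ is complementary: it is engineered so that $\min\Gg$ increases by essentially the full extended cost on exactly those spread configurations, at the price of a bounded leakage proportional to $\min\{\dd x,\dd y\}$ that $\Ff$ can pay for.

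The core step is the per-request lower bound $\Phi_{\sigma''}-\Phi_{\sigma'}\geqslant\kappa\nnabla_{r''}$, split according to the cardinality $k=|\{s_1,s_2,s_3\}|$ of a minimizer of $\Ff_{\sigma''}$. When $k\leqslant 2$, I would rewrite both $\nnabla_{r''}$ and $\min\Ff_{\sigma''}$ through the slack identity~\eqref{eq:nabla for r rewritten} and use Lemmas~\ref{lem:subsetslack}--\ref{lem:slack change s to t} to charge the full slack $\max_{s}\Sl_{\sigma'}(s;r'')$ to one of the two distinct minimizing points, obtaining $\min\Ff_{\sigma''}-\min\Ff_{\sigma'}\geqslant c_1\nnabla_{r''}$; meanwhile $\min\Gg_{\sigma''}\geqslant\min\Gg_{\sigma'}$ by the monotonicity of work functions under new requests. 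When $k=3$, every two of the three minimizers must be separated by a coordinate distance of order $\dd x$ or $\dd y$ along the coordinate directions forced by $r''$, and exploiting this I would show
\[
\min\Ff_{\sigma''}-\min\Ff_{\sigma'}\geqslant c_2\min\{\dd x,\dd y\}\quad\text{and}\quad\min\Gg_{\sigma''}-\min\Gg_{\sigma'}\geqslant c_3\nnabla_{r''}-c_4\min\{\dd x,\dd y\}.
\]
Choosing $\gamma$ small enough that $(1-\gamma)c_2-\gamma c_4\geqslant 0$ leaves a net increase of at least $\gamma c_3\nnabla_{r''}$ in both cases.

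The main obstacle I expect is the geometric analysis in the $k=3$ case. The CNN structure --- every request $r''$ is the union of a horizontal and a vertical line in $\RR^2$ --- has to be used in a very specific way to force the three minimizers of $\Ff_{\sigma''}$ into a configuration where the work function's rate of change along the requested lines produces the $c_2\min\{\dd x,\dd y\}$ bound. Finding definitions of $\Ff$ and $\Gg$ for which the slack bookkeeping and this planar geometry align simultaneously --- and which are robust enough to later lift to arbitrary metric spaces $\XX\times\YY$ as claimed in Theorem~\ref{th:theorem2} --- is, I suspect, where all the real difficulty hides; the slack-function manipulations of Section~\ref{sec:prelimanaries} are then relatively mechanical consequences.
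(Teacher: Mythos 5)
Your outline coincides with the paper's own proof plan (the same potential $\Phi=(1-\gamma)\min\Ff+\gamma\min\Gg$, the same case split on the cardinality of a minimizer of $\Ff_{\sigma''}$, the same two inequalities per case, and the same reduction via Lemma~\ref{lem:nabla sigma}), but as written it has a genuine gap: everything load-bearing is deferred to the unspecified "engineering" of $\Ff$ and $\Gg$. In the paper, $\Ff_{\sigma}(s_1,s_2,s_3)=W_{\sigma}(s_1)-\tfrac12\Sl_{\sigma}(s_2;s_1)-\alpha\Sl_{\sigma}(s_3;\{s_1,s_2\})$, and $\Gg$ differs only in that the last term is the slack of $s_3$ to the \emph{entire rectangle} $\Boks(s_1,s_2)$. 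That term is not a function of the work-function values at the three points and their weighted pairwise distances, contrary to what you posit; the box is precisely what makes Lemma~\ref{lem:convex containment} true (a triple lying on a common requested line can never strictly minimize $\Gg$), which is what allows Case~1 of Lemma~\ref{lem:G increase} to replace the redundant point by the point $\snabla$ realizing the extended cost and extract the $c_3\nnabla$ gain. This construction (and its later replacement by $\Spheres$ with two slack parameters $\lambda<\mu$ for general metrics) is the central new idea of the result, and your proposal does not supply it.

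Two of the concrete justifications you do give would also fail. First, in the $k\leqslant 2$ case you claim $\min\Gg_{\sigma''}\geqslant\min\Gg_{\sigma'}$ "by the monotonicity of work functions under new requests"; that principle is false for $\Gg$, because raising the work function at points inside $\Boks(s_1,s_2)$ (points not on $r''$) raises $\Sl(s_3;\Boks(s_1,s_2))$ and can \emph{lower} $\min\Gg$ --- this is exactly the leakage bounded by $-2\alpha\,\dd x$ in~\eqref{eq:case3 2}. The paper instead proves Lemma~\ref{lem:minG''ge minG'} from the cardinality hypothesis: Lemma~\ref{lem:cardinality=2} forces $\min\Ff_{\sigma''}=\min\Gg_{\sigma''}=\min\Hh_{\sigma''}$, and only $\min\Hh$ is monotone. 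Second, in the $k=3$ case the mechanism is not that the three minimizers are pairwise separated by order $\dd x$ or $\dd y$; it is domination across consecutive requests: all three points lie on $r''$ (Lemma~\ref{lem:minimum in r}), at least one differs from both $(x'',y')$ and $(x',y'')$, hence by Lemma~\ref{lem:domination} it is dominated w.r.t.\ $\sigma'$ by a point of $r'$ at distance $\dd x$ or $\dd y$, and Lemma~\ref{lem:s dominates t} then gives the $c_2\,\dd y$ increase of $\min\Ff$; the companion bound for $\min\Gg$ further needs the Lipschitz reduction to $y'=y''$ (Lemma~\ref{lem:G Lipschitz}) and a three-case analysis. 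So the skeleton is right --- it is the paper's own --- but the definitions and lemmas that make it true are missing, and the one shortcut you propose in their place does not hold.
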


All the lemmas of the previous section apply to metrical service systems in general. In this section, we restrict to the \CNN problem. It is
convenient to insist on writing $\MM$ for the metric space although we now have $\MM=\RR^2$. We  make a subtle distinction between the
\emph{request point} $(x',y')\in \RR^2$ and the corresponding \emph{request} as defined by the metrical service system: $r(x',y')=\{\{(x,y)\in
\RR^2\mid x=x'\text{ or }y=y'\}$.

\subsection{The potential function}\label{sec:potential cnn}
Our potential function is defined for any metrical service system on $\RR^2$ but we only use it for the $\CNN$ problem.

One of the ingredients is the set $\Boks(s_1,s_2)$ (see Figure~\ref{fig:boks}) defined as follows. Given points $x_1,x_2\in \RR$, we denote by
$[x_1,x_2]$  the interval between $x_1$ and $x_2$ (we allow $x_2<x_1$, i.e., $[x_2,x_1]=[x_1,x_2]$). Note that at this point we use the
restriction to the real line since this is not well-defined for a general metric space. (In Section~\ref{sec:2-server}, where the proof is
generalized to arbitrary metric spaces we shall start from this point.)

Given points $s_1=(x_1,y_1)\in \MM$ and $s_2=(x_2,y_2)\in \MM$ we denote the set of points in the rectangle spanned by these points by
\begin{equation*}\label{def:Boks} \Boks(s_1,s_2)=\{(x,y)\in \MM \mid x\in [x_1,x_2] \text{ and } y\in [y_1,y_2]\}.
\end{equation*}
Let $0<\alpha< 1/2$ and $0<\gamma<1$. We define the functions $\Ff_{\sigma}: \MM^3\rightarrow \RR$ and $\Gg_{\sigma}: \MM^3\rightarrow \RR$ as
\[
\begin{array}{rcl}
\Ff_{\sigma}(s_1,s_2,s_3)&=&W_{\sigma}(s_1)-\frac{1}{2}\Sl_{\sigma}(s_2;s_1)-\alpha \Sl_{\sigma}(s_3;\{s_1,s_2\})\\[3mm]
\Gg_{\sigma}(s_1,s_2,s_3)&=&W_{\sigma}(s_1)-\frac{1}{2}\Sl_{\sigma}(s_2;s_1)-\alpha \Sl_{\sigma}(s_3;\Boks(s_1,s_2)).\\
\end{array}
\]
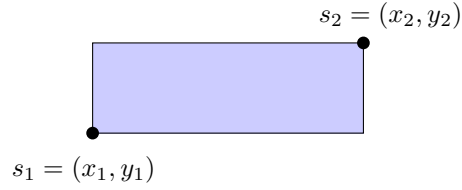
\begin{figure}
\center
\begin{tikzpicture}[scale=1.2]
\node at (0.9,0.6) {$s_1=(x_1,y_1)$};
\node at (4.3,2.3) {$s_2=(x_2,y_2)$};
\filldraw [fill=blue!20] (1,1) rectangle (4,2);
\fill (canvas cs:x=1cm,y=1cm) circle (2pt);
\fill (canvas cs:x=4cm,y=2cm) circle (2pt);
\end{tikzpicture}
\caption{The shaded area is $\Boks(s_1,s_2)$, used in the potential function.}
\label{fig:boks}
\end{figure}
The two functions only differ in the last term. The potential function $\Phi_{\sigma}$ is\\
\[\Phi_{\sigma}=(1-\gamma)\min_{s_1,s_2,s_3\in\MM}\Ff_{\sigma}(s_1,s_2,s_3)+\gamma\min_{s_1,s_2,s_3\in\MM}\Gg_{\sigma}(s_1,s_2,s_3).\]\\
The numbers $\alpha$ and $\gamma$  will depend only on $\lambda$ and we fix their precise values later.  It is good to mention here that the
proof works for any small enough values of $\alpha$ and $\gamma$. More precisely, the proof  works if we pick any $\alpha$ with
$0<\alpha\leqslant \alpha_0$ for some $\alpha_0$ depending on $\lambda$ and then pick any $\gamma$ with $0<\gamma\leqslant \gamma_0$ for some
$\gamma_0$ depending on $\lambda$ and $\alpha$.

\subsubsection*{Comprehensive notation} \label{subsubsec:notation}
To simplify the analysis we define one more function $\Hh_{\sigma}: \MM^2\rightarrow \RR$. It corresponds to the first two terms of
$\Ff_{\sigma}$ and $\Gg_{\sigma}$.
\[
\Hh_{\sigma}(s_1,s_2)=W_{\sigma}(s_1)-\frac{1}{2}\Sl_{\sigma}(s_2;s_1)=\frac{1}{2}W_{\sigma}(s_1)+\frac{1}{2}W_{\sigma}(s_2)-\frac{\lambda}{2}d(s_1,s_2).
\]
We can rewrite $\Ff_{\sigma}$ and $\Gg_{\sigma}$ as
\[
\begin{array}{rcl}
\Ff_{\sigma}(s_1,s_2,s_3)&=&\Hh_{\sigma}(s_1,s_2)-\alpha \Sl_{\sigma}(s_3;\{s_1,s_2\}),\\[3mm]
\Gg_{\sigma}(s_1,s_2,s_3)&=&\Hh_{\sigma}(s_1,s_2)-\alpha \Sl_{\sigma}(s_3;\Boks(s_1,s_2)).\\
\end{array}
\]
For a request sequence $\sigma$, we denote $\min_{s_1,s_2,s_3\in \MM}\Ff_{\sigma}(s_1,s_2,s_3)$ simply by $\min \Ff_{\sigma}$ and make a similar
simplification of notation for $\Gg$ and $\Hh$. A shorter notation for the potential function becomes
\[\Phi_{\sigma}=(1-\gamma)\min\Ff_{\sigma}+\gamma\min\Gg_{\sigma}.\]
Note that $\Hh_{\sigma}$ is symmetric in $s_1$ and $s_2$ and, consequently, also  $\Ff_{\sigma}$ and  $\Gg_{\sigma}$ are symmetric in $s_1$ and
$s_2$. This property is not essential but enhances the argumentation at some points.

\subsection{Properties of the potential function}\label{sec:prop potential}
In this section we list some properties of the potential function $\Phi_{\sigma}$ which hold for any metrical service system on $\MM=\RR^2$ and
arbitrary corresponding request sequence $\sigma$. In section~\ref{sec:prop cnn}, we restrict the analysis to the \CNN problem.

The functions $\Ff_{\sigma}$ and $\Gg_{\sigma}$ are constructed such that in the minimum all three points $s_1,s_2,s_3$ are on the last request,
at least if $\alpha$ is small enough. This is stated in Lemma~\ref{lem:minimum in r}. The next lemma is preliminary for this lemma and several
others.

\smallskip\begin{lemma}\label{lem:s dominates t}
Let  $t\in \MM$ dominate $s\in \MM$ (w.r.t. $\sigma$) and let  $\delta=d(s,t)$. Then, for any $s_1,s_2,s_3\in\MM$
\[
\begin{array}{llllcl}
(a)&\Ff_{\sigma}(s_1,s_2,s)&-&\Ff_{\sigma}(s_1,s_2,t)&\geqslant&
\delta\cdot\alpha(1-\lambda)\\

(b)&\Ff_{\sigma}(s_1,s,s_3)&-&\Ff_{\sigma}(s_1,t,s_3)&\geqslant& \delta\cdot\left(\frac{1}{2}(1-\lambda)-\alpha(1+\lambda)\right)\\

(c) &\Ff_{\sigma}(s,s_2,s_3)&-&\Ff_{\sigma}(t,s_2,s_3)&\geqslant& \delta\cdot\left(\frac{1}{2}(1-\lambda)-\alpha(1+\lambda)\right)\\

(d)&\Gg_{\sigma}(s_1,s_2,s)&-&\Gg_{\sigma}(s_1,s_2,t)&\geqslant&
\delta\cdot\alpha(1-\lambda).\\

(e)&\Gg_{\sigma}(s_1,s,s_3)&-&\Gg_{\sigma}(s_1,t,s_3)&\geqslant& \delta\cdot\left(\frac{1}{2}(1-\lambda)-\alpha(1+\lambda)\right)\\

(f)&\Gg_{\sigma}(s,s_2,s_3)&-&\Gg_{\sigma}(t,s_2,s_3)&\geqslant& \delta\cdot\left(\frac{1}{2}(1-\lambda)-\alpha(1+\lambda)\right).
\end{array}
\]
(Follows from Lemma~\ref{lem:slack change C1 to C2}, \ref{lem:slack change s to t}. Used in proof of Lemma~\ref{lem:minimum in r},~\ref{lem:phi
epsilon},~\ref{lem:domination},~\ref{lem:F increase}, and~\ref{lem:G increase}.)
\end{lemma}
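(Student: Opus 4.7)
The six inequalities split into two families. In (a) and (d) the third argument is replaced; since $\Hh_\sigma$ is independent of $s_3$ only the slack term changes. In (b), (c), (e) and (f) one of the first two arguments is replaced, so both the $\Hh$-term and the slack term move. Because $\Hh_\sigma$, $\Ff_\sigma$ and $\Gg_\sigma$ are symmetric in $s_1$ and $s_2$, case (c) is identical to (b) and (f) to (e); it therefore suffices to treat (a), (b), (d) and (e).

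For (a), the difference collapses to $\alpha\bigl(\Sl_\sigma(t;\{s_1,s_2\})-\Sl_\sigma(s;\{s_1,s_2\})\bigr)$, and for (d) to the same expression with $\{s_1,s_2\}$ replaced by $\Boks(s_1,s_2)$. Since $t$ dominates $s$ and $\delta=d(s,t)$, Lemma~\ref{lem:slack change s to t}(b) delivers the lower bound $\alpha(1-\lambda)\delta$ in both cases, which is exactly what is claimed.

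For (b) I write $\Ff_\sigma(s_1,s,s_3)-\Ff_\sigma(s_1,t,s_3)$ as the sum of an $\Hh$-part and a slack-part. The $\Hh$-part equals
\[\tfrac{1}{2}\bigl(W_\sigma(s)-W_\sigma(t)\bigr)+\tfrac{\lambda}{2}\bigl(d(s_1,t)-d(s_1,s)\bigr),\]
which by domination ($W_\sigma(s)-W_\sigma(t)=\delta$) and the triangle inequality is at least $\tfrac{1-\lambda}{2}\delta$. For the slack-part I apply Lemma~\ref{lem:slack change C1 to C2} to the pair $\{s_1,s\}$, $\{s_1,t\}$ via the matching $s_1\mapsto s_1$, $s\leftrightarrow t$ (at distance $\delta$), in both directions. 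This yields $|\Sl_\sigma(s_3;\{s_1,s\})-\Sl_\sigma(s_3;\{s_1,t\})|\leqslant(1+\lambda)\delta$, contributing at least $-\alpha(1+\lambda)\delta$. Summing produces the stated bound.

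Case (e) is handled in the same way. The $\Hh$-part is identical, so the main obstacle is to justify Lemma~\ref{lem:slack change C1 to C2} for the sets $\Boks(s_1,s)$ and $\Boks(s_1,t)$. Concretely, writing $s_1=(x_1,y_1)$, $s=(x_s,y_s)$, $t=(x_t,y_t)$, for each $(x,y)\in\Boks(s_1,t)$ the coordinate-wise projection $(x',y')$ onto $[x_1,x_s]\times[y_1,y_s]$ lies in $\Boks(s_1,s)$ and satisfies $|x-x'|\leqslant|x_t-x_s|$ and $|y-y'|\leqslant|y_t-y_s|$, so $d((x,y),(x',y'))\leqslant d(s,t)=\delta$; the reverse direction is symmetric. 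This geometric step is the only place where the $\RR^2$ coordinate structure is used. With Lemma~\ref{lem:slack change C1 to C2} in hand, the slack-part again contributes at least $-\alpha(1+\lambda)\delta$, completing (e) and with it (f) by symmetry.
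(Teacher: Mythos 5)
Your proof is correct and follows essentially the same route as the paper: symmetry reduces the work to (a), (b), (d), (e); cases (a) and (d) follow from Lemma~\ref{lem:slack change s to t}(b); and for (b), (e) you split off the $\Hh$-part (your direct domination-plus-triangle-inequality computation is exactly the content of Lemma~\ref{lem:slack change s to t}(b)) and bound the slack term by Lemma~\ref{lem:slack change C1 to C2}, just as the paper does. The only difference is that you explicitly verify the Hausdorff-type hypothesis of Lemma~\ref{lem:slack change C1 to C2} for $\Boks(s_1,s)$ versus $\Boks(s_1,t)$ via coordinate-wise projection, a step the paper merely asserts.
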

\begin{proof}
Statement (a) follows  directly from Lemma~\ref{lem:slack change s to t}(b) with $C=\{s_1,s_2\}$:
\[\Ff_{\sigma}(s_1,s_2,s)-\Ff_{\sigma}(s_1,s_2,t)=\alpha\Sl_{\sigma}(t;\{s_1,s_2\})-\alpha\Sl_{\sigma}(s;\{s_1,s_2\})\geqslant \alpha(1-\lambda)\delta.\]
The same holds for (d) but now with $C=\Boks(s_1,s_2)$. By symmetry of $\Ff$ and $\Gg$ in their first two arguments, it only remains to prove
statements (b) and (e). We start with (b).
\begin{equation}\label{eq:0-lemma s to t}
\begin{array}{rcl}
&&\Ff_{\sigma}(s_1,s,s_3)-\Ff_{\sigma}(s_1,t,s_3)\\
&=&\frac{1}{2}\left(\Sl_{\sigma}(t;s_1)-\Sl_{\sigma}(s;s_1)\right)+\alpha \left(\Sl_{\sigma}(s_3;\{s_1,t\})- \Sl_{\sigma}(s_3;\{s_1,s\})\right).
\end{array}
\end{equation}
For the first part of~\eqref{eq:0-lemma s to t} we use Lemma~\ref{lem:slack change s to t}(b):
\begin{equation}\label{eq:1-lemma s to t}
\Sl_{\sigma}(t;s_1)- \Sl_{\sigma}(s;s_1)\geqslant(1-\lambda)\delta.
\end{equation}
For the second part we apply Lemma~\ref{lem:slack change C1 to C2} with $C_1=\{s_1,t\}$ and $C_2=\{s_1,s\}$. The condition of
Lemma~\ref{lem:slack change C1 to C2} is satisfied for $\delta=d(s,t)$. We have
\begin{equation}\label{eq:2-lemma s to t}
\Sl_{\sigma}(s_3;\{s_1,t\})-\Sl_{\sigma}(s_3;\{s_1,s\})\geqslant -(1+\lambda)\delta.
\end{equation}
Combining~\eqref{eq:1-lemma s to t} and~\eqref{eq:2-lemma s to t} we get (b).
The proof of (e) is similar. We apply~\eqref{eq:1-lemma s to t} and Lemma~\ref{lem:slack change C1 to C2} with $C_1=\Boks(s_1,t)$ and $C_2=\Boks(s_1,s)$. The condition of
Lemma~\ref{lem:slack change C1 to C2} is satisfied for $\delta=d(s,t)$.
\[
\begin{array}{rcl}
&&\Gg_{\sigma}(s_1,s,s_3)-\Gg_{\sigma}(s_1,t,s_3)\\
&=&\frac{1}{2}\left(\Sl_{\sigma}(t;s_1)-\Sl_{\sigma}(s;s_1)\right)+\alpha \left(\Sl_{\sigma}(s_3;\Boks(s_1,t))- \Sl_{\sigma}(s_3;\Boks(s_1,s))\right)\\
&\geqslant &\frac{1}{2} (1-\lambda)\delta- \alpha(1+\lambda)\delta.\end{array}
\]
\end{proof}\bigskip

Note that all the right hand sides in  Lemma~\ref{lem:s dominates t} are strictly positive if  $0<\alpha <(1-\lambda)/(2(1+\lambda))$. We assume
this from now on.

\smallskip\begin{lemma}\label{lem:minimum in r}
If $\Ff_{\sigma r}$ or $\Gg_{\sigma r}$ is minimized in $(s_1,s_2,s_3)$, then $s_1,s_2,s_3\in r$. (Follows from Lemma~\ref{lem:s dominates t}.
Used in proof of Lemma~\ref{lem:F increase} and~\ref{lem:G increase}.)
\end{lemma}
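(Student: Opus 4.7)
The plan is to argue by contradiction, leveraging the support characterization of work functions (namely that $\mathrm{supp}(W_\sigma)\subseteq r$ where $r$ denotes the last request of $\sigma$) together with Lemma~\ref{lem:s dominates t}, which says that replacing a coordinate by a dominating point strictly decreases $\Ff_\sigma$ and $\Gg_\sigma$.

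Suppose $(s_1,s_2,s_3)$ minimizes $\Ff_\sigma$ (the argument for $\Gg_\sigma$ is identical, using parts (d)--(f) of Lemma~\ref{lem:s dominates t} in place of (a)--(c)). Assume for contradiction that some coordinate $s_i\notin r$. Since $\mathrm{supp}(W_\sigma)\subseteq r$, the point $s_i$ is dominated by some other point; more precisely, as noted right before Definition~2 of the support, there exists $t\in r$ with $W_\sigma(s_i)=W_\sigma(t)+d(s_i,t)$. Because $t\in r$ and $s_i\notin r$, we have $s_i\neq t$, so $\delta:=d(s_i,t)>0$.

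Now apply part (a), (b), or (c) of Lemma~\ref{lem:s dominates t} according to whether $i=3$, $i=2$, or $i=1$, with $s=s_i$ and $t$ as above. Each right-hand side takes the form $\delta\cdot c(\alpha,\lambda)$ where $c(\alpha,\lambda)$ is either $\alpha(1-\lambda)$ or $\tfrac{1}{2}(1-\lambda)-\alpha(1+\lambda)$. Under the standing assumption $0<\alpha<(1-\lambda)/(2(1+\lambda))$ (fixed immediately before this lemma) and $0<\lambda<1$, both expressions are strictly positive. Hence
\[\Ff_\sigma(s_1,s_2,s_3)\ -\ \Ff_\sigma\bigl(\text{$s_i$ replaced by $t$}\bigr)\ \geqslant\ \delta\cdot c(\alpha,\lambda)\ >\ 0,\]
contradicting the minimality of $(s_1,s_2,s_3)$. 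Therefore every coordinate lies in $r$, which is the claim.

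There is no real obstacle here; the only thing to verify carefully is that the support property and the strict positivity in Lemma~\ref{lem:s dominates t} line up cleanly, which they do because the hypothesis on $\alpha$ was precisely chosen to make all six bounds of that lemma strictly positive. The symmetry of $\Ff_\sigma$ and $\Gg_\sigma$ in $s_1,s_2$ means cases (b) and (c) (resp.\ (e) and (f)) are really one case, so only two improvement directions need checking for each function.
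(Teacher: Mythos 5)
Your proof is correct and follows essentially the same route as the paper: every point outside the last request $r$ is dominated by a point of $r$, and Lemma~\ref{lem:s dominates t} (with the standing bound $0<\alpha<(1-\lambda)/(2(1+\lambda))$ making all right-hand sides strictly positive) shows the replacement strictly decreases $\Ff_{\sigma}$ or $\Gg_{\sigma}$, contradicting minimality. The extra details you supply (that $\delta>0$ and the case split over $i$) are exactly what the paper's terser argument leaves implicit.
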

\begin{proof}
Any point is dominated by a point in the last request. Take any triple of points in $\MM$. If at least one of the points is not in $r$, then
Lemma~\ref{lem:s dominates t}  tells us that we can replace it  by a point of the last request, $r$, such that the values of $\Ff_{\sigma}$ and
$\Gg_{\sigma}$ become strictly smaller.  \end{proof}\bigskip

\smallskip\begin{lemma}\label{lem:F<=G<=H}
$\min \Ff_{\sigma}\leqslant \min\Gg_{\sigma}\leqslant\min\Hh_{\sigma}$. (Follows from Lemma~\ref{lem:subsetslack} and~\ref{lem:minimum slack}.
Used in proof of Lemma~\ref{lem:minG''ge minG'}.)
\end{lemma}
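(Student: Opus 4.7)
The lemma is a pair of inequalities that both should come out essentially for free from the definitions, once the right earlier lemma is invoked.

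For the first inequality $\min\Ff_{\sigma}\leqslant\min\Gg_{\sigma}$, the plan is to show $\Ff_\sigma(s_1,s_2,s_3)\leqslant \Gg_\sigma(s_1,s_2,s_3)$ pointwise. Since $\Ff$ and $\Gg$ share the term $\Hh_\sigma(s_1,s_2)$, this reduces to proving $\Sl_\sigma(s_3;\{s_1,s_2\})\geqslant \Sl_\sigma(s_3;\Boks(s_1,s_2))$. But $\{s_1,s_2\}\subseteq \Boks(s_1,s_2)$ by construction, so Lemma~\ref{lem:subsetslack} applies directly and yields exactly this inequality. Taking infima over $(s_1,s_2,s_3)$ on both sides then gives the first claim.

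For the second inequality $\min\Gg_\sigma\leqslant\min\Hh_\sigma$, the strategy is to exhibit, for a minimizer $(s_1^*,s_2^*)$ of $\Hh_\sigma$, a third point $s_3^*$ that contributes nothing to $\Gg_\sigma$. Concretely, apply Lemma~\ref{lem:minimum slack} to the set $C=\Boks(s_1^*,s_2^*)$ to obtain a point $s_3^*\in \Boks(s_1^*,s_2^*)$ with $\Sl_\sigma(s_3^*;\Boks(s_1^*,s_2^*))=0$. Then
\[
\Gg_\sigma(s_1^*,s_2^*,s_3^*)=\Hh_\sigma(s_1^*,s_2^*)-\alpha\cdot 0=\min\Hh_\sigma,
\]
which immediately gives $\min\Gg_\sigma\leqslant\min\Hh_\sigma$.

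I do not expect any real obstacle here: the first inequality is monotonicity of the slack in its second argument, and the second inequality is a one-line choice of witness using Lemma~\ref{lem:minimum slack}. The only mild subtlety is that the ``min'' in $\min\Hh_\sigma$ should really be thought of as an infimum in general; if it is not attained, the same argument works by taking a minimizing sequence $(s_1^{(n)},s_2^{(n)})$ and, for each $n$, choosing $s_3^{(n)}\in \Boks(s_1^{(n)},s_2^{(n)})$ with zero slack, then letting $n\to\infty$. For the \CNN problem, where both coordinates may be taken from the finite set of request coordinates, this complication does not arise and the minima are genuinely attained.
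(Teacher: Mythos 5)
Your proof is correct and uses exactly the paper's argument: the first inequality follows pointwise from $\{s_1,s_2\}\subseteq\Boks(s_1,s_2)$ via Lemma~\ref{lem:subsetslack}, and the second by choosing a zero-slack witness $s_3$ in $\Boks(s_1,s_2)$ via Lemma~\ref{lem:minimum slack}. The extra remark about infima versus attained minima is a harmless refinement not present in (and not needed by) the paper.
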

\begin{proof}
For any $s_1,s_2\in\MM$ there is a point $s_3$ such that $\Sl_{\sigma}(s_3;\Boks(s_1,s_2))=0$. (See Lemma~\ref{lem:minimum slack}.) Hence, $\min
\Gg_{\sigma} \leqslant \min \Hh_{\sigma}$.

 For any $s_1,s_2,s_3\in\MM$ we have $\Sl_{\sigma}(s_3;\{s_1,s_2\})\geqslant \Sl_{\sigma}(s_3;\Boks(s_1,s_2))$
since $\{s_1,s_2\}\subseteq \Boks(s_1,s_2)$. (See Lemma~\ref{lem:subsetslack}.) Therefore,  $\min \Ff_{\sigma}\leqslant \min\Gg_{\sigma}$.
\end{proof}\bigskip

The two inequalities of Lemma~\ref{lem:F<=G<=H} are only strict if the three points for which the minimum of $\Ff_{\sigma}$ or $\Gg_{\sigma}$ is
attained are in a way different enough. For example, the next lemma implies that if the minimum of $\Ff_{\sigma}$ is attained for
$(s_1,s_2,s_3)$ but they are not all different, then both inequalities are equalities. For $\Gg_{\sigma}$ a stronger property holds. If
$s_1,s_2,s_3$ are all on a line then the second inequality is an equality.

\smallskip\begin{lemma}\label{lem:cardinality=2}
If $\{s_1,s_2,s_3\}$ has cardinality 1 or 2, then $\Hh_{\sigma}(u_1,u_2)\leqslant \Ff_{\sigma}(s_1,s_2,s_3)$ for some $u_1,u_2\in\{s_1,s_2,s_3\}$.
(Used in proof of Lemma~\ref{lem:F increase} and~\ref{lem:minG''ge minG'}.)
\end{lemma}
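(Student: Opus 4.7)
The plan is to split the hypothesis $|\{s_1,s_2,s_3\}|\leqslant 2$ into two configurations: either (i) $s_3\in\{s_1,s_2\}$, or (ii) $s_1=s_2$ with $s_3\notin\{s_1,s_2\}$. The completely degenerate case $s_1=s_2=s_3$ falls into (i). In either configuration I will exhibit an explicit pair $(u_1,u_2)\in\{s_1,s_2,s_3\}^2$ that witnesses the claimed inequality.

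Configuration (i) is immediate. Since $\{s_3\}\subseteq\{s_1,s_2\}$, Lemma~\ref{lem:subsetslack} yields
\[
\Sl_{\sigma}(s_3;\{s_1,s_2\})\;\leqslant\;\Sl_{\sigma}(s_3;s_3)\;=\;0,
\]
so $\Ff_{\sigma}(s_1,s_2,s_3)=\Hh_{\sigma}(s_1,s_2)-\alpha\,\Sl_{\sigma}(s_3;\{s_1,s_2\})\geqslant \Hh_{\sigma}(s_1,s_2)$ and we take $(u_1,u_2)=(s_1,s_2)$.

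Configuration (ii) is the substantive one. Write $s=s_1=s_2$ and abbreviate $d=d(s,s_3)$ and $\Delta=W_{\sigma}(s_3)-W_{\sigma}(s)$. Since $\Sl_{\sigma}(s_3;s)=\lambda d-\Delta$, unwinding the definitions gives
\[
\Ff_{\sigma}(s,s,s_3)\;=\;W_{\sigma}(s)+\alpha(\Delta-\lambda d),
\]
whereas the two natural candidate values of $\Hh$ built from $\{s,s_3\}$ are $\Hh_{\sigma}(s,s)=W_{\sigma}(s)$ and $\Hh_{\sigma}(s,s_3)=W_{\sigma}(s)+\tfrac{1}{2}(\Delta-\lambda d)$. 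I now split on the sign of $\Delta-\lambda d$. If $\Delta\geqslant\lambda d$, then $\alpha(\Delta-\lambda d)\geqslant 0$, so $(u_1,u_2)=(s,s)$ works. If $\Delta\leqslant\lambda d$, then $\Delta-\lambda d\leqslant 0$, and the standing assumption $\alpha<1/2$ (in force from the paragraph following Lemma~\ref{lem:s dominates t}) gives $\alpha(\Delta-\lambda d)\geqslant \tfrac{1}{2}(\Delta-\lambda d)$, so $(u_1,u_2)=(s,s_3)$ works.

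I do not anticipate any real obstacle here: the whole argument reduces to Lemma~\ref{lem:subsetslack} together with one elementary sign comparison, and the geometry of $\MM=\RR^2$ plays no role. The only mild point worth stating carefully is the reason why just two candidates from $\{s,s_3\}^2$ already suffice in configuration (ii); the splitting $\Delta\geqslant\lambda d$ versus $\Delta\leqslant\lambda d$ is tautological and uses neither the Lipschitz property of $W_{\sigma}$ nor any structure beyond $\alpha<1/2$.
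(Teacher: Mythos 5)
Your proof is correct and follows essentially the same route as the paper's: the split between $\Delta\geqslant\lambda d$ and $\Delta\leqslant\lambda d$ in your configuration (ii) is exactly the paper's split on the sign of $\Sl_{\sigma}(s_3;\{s_1,s_2\})$, and the decisive step in both is the comparison of $\alpha$ with $\tfrac{1}{2}$ once $s_1=s_2$ forces $\Sl_{\sigma}(s_2;s_1)=0$. The only difference is cosmetic (you unwind the slack into $\Delta$ and $d$ and organize the cases by which points coincide), so there is nothing to add.
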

\begin{proof}
If $\Sl_{\sigma}(s_3;\{s_1,s_2\})\leqslant 0$, then  $\Hh_{\sigma}(s_1,s_2)\leqslant \Ff_{\sigma}(s_1,s_2,s_3)$. So assume the opposite:
\begin{equation}\label{eq:slack>0-1}
\Sl_{\sigma}(s_3;\{s_1,s_2\})> 0.
\end{equation}
We cannot have $s_1=s_3$ or $s_2=s_3$, since this contradicts~\eqref{eq:slack>0-1}. Hence, we must have $s_1=s_2$, which implies
$\Sl_{\sigma}(s_2;s_1)=0$.
\[
\begin{array}{rcl}
\Ff_{\sigma}(s_1,s_2,s_3)&=&W_{\sigma}(s_1)-\frac{1}{2} \Sl_{\sigma}(s_2;s_1)-\alpha \Sl_{\sigma}(s_3;\{s_1,s_2\})\\
&=&W_{\sigma}(s_1)-\alpha \Sl_{\sigma}(s_3;\{s_1,s_2\})\\
&>&W_{\sigma}(s_1)-\frac{1}{2} \Sl_{\sigma}(s_3;\{s_1,s_2\})\\
&=&W_{\sigma}(s_1)-\frac{1}{2} \Sl_{\sigma}(s_3;s_1)\\
&=&\Hh_{\sigma}(s_1,s_3).
\end{array}
\]
For the inequality we used~\eqref{eq:slack>0-1} and $\alpha<1/2$.
\end{proof}\bigskip

Lemma~\ref{lem:cardinality=2} applies also to $\Gg_{\sigma}$ instead of $\Ff_{\sigma}$ but we shall not use this. In addition, $\Gg_{\sigma}$
has the following property.

\smallskip\begin{lemma}\label{lem:convex containment}
If  $s_1,s_2,s_3\in \MM$ all have the same $x$-coordinate or the same $y$-coordinate, then, $\Hh_{\sigma}(u_1,u_2)\leqslant
\Gg_{\sigma}(s_1,s_2,s_3)$ for some $u_1,u_2\in \{s_1,s_2,s_3\}$. (Follows from Lemma~\ref{lem:subsetslack}, \ref{lem:transitive slack}. Used in
proof of Lemma~\ref{lem:G increase}.)
\end{lemma}
\begin{proof}
We shall prove something stronger than we need as this hardly changes the proof: If one of the three points is contained in $\Boks(\cdot,\cdot)$
defined by the other two points, then $\Hh_{\sigma}(u_1,u_2)\leqslant \Gg_{\sigma}(s_1,s_2,s_3)$ for some $u_1,u_2\in \{s_1,s_2,s_3\}$. The lemma
is a special case of this.

The proof is similar to that of Lemma~\ref{lem:cardinality=2}. If $\Sl_{\sigma}(s_3;\Boks(s_1,s_2))\leqslant 0$, then
$\Hh_{\sigma}(s_1,s_2)\leqslant \Gg_{\sigma}(s_1,s_2,s_3)$ and we are done. So assume the opposite:
\begin{equation}\label{eq:slack>0-2}
\Sl_{\sigma}(s_3;\Boks(s_1,s_2))> 0.
\end{equation}
Now assume $s_1\in\Boks(s_2,s_3)$ or $s_2\in\Boks(s_1,s_3)$ or $s_3\in\Boks(s_1,s_2)$. We cannot have the latter since that
contradicts~\eqref{eq:slack>0-2}. Hence, either $s_1$ or $s_2$ is contained in $\Boks(\cdot,\cdot)$ defined by the other two points. By symmetry
of $\Hh$ and $\Gg$ in their first two arguments, we may assume the latter is true. Hence, $d(s_1,s_2)+d(s_2,s_3)=d(s_1,s_3)$. By
Lemma~\ref{lem:transitive slack},
\begin{equation}\label{eq:slack equality}
\Sl_{\sigma}(s_3;s_1)=\Sl_{\sigma}(s_3;s_2)+\Sl_{\sigma}(s_2;s_1).
\end{equation}
For the first inequality below we use Lemma~\ref{lem:subsetslack} and for the second we use  $\alpha< 1/2$ and~\eqref{eq:slack>0-2}.
\[
\begin{array}{rcl}
\Gg_{\sigma}(s_1,s_2,s_3)&=& W_{\sigma}(s_1)-\frac{1}{2} \Sl_{\sigma}(s_2;s_1)-\alpha \Sl_{\sigma}(s_3;\Boks(s_1,s_2))\\
&\geqslant& W_{\sigma}(s_1)-\frac{1}{2}\Sl_{\sigma}(s_2;s_1)-\alpha \Sl_{\sigma}(s_3;s_2)\\
&> & W_{\sigma}(s_1)-\frac{1}{2}\Sl_{\sigma}(s_2;s_1)- \frac{1}{2} \Sl_{\sigma}(s_3;s_2)\\
&=& W_{\sigma}(s_1)-\frac{1}{2} \Sl_{\sigma}(s_3;s_1)\\
&=&\Hh_{\sigma}(s_1,s_3).
\end{array}
\]
\end{proof}\bigskip

Initially, the potential function is zero and in general it is upper bounded by the optimal value of the given sequence. This is stated in the
next two lemmas. Let $\epsilon$ be the empty request sequence.
\smallskip\begin{lemma}\label{lem:phi epsilon}
$\Phi_{\epsilon}=0$. (Follows from Lemma~\ref{lem:s dominates t}. Used in proof of Theorem~\ref{th:theorem1}.)
\end{lemma}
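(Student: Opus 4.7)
The plan is to show that the infimum for both $\min\Ff_\epsilon$ and $\min\Gg_\epsilon$ is attained at the triple $(\mathcal{O},\mathcal{O},\mathcal{O})$ and that the common value there is $0$, so that $\Phi_\epsilon = (1-\gamma)\cdot 0 + \gamma\cdot 0 = 0$.

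The crucial observation is that for the empty sequence, $W_\epsilon(s) = d(\mathcal{O},s)$, which means $W_\epsilon(s) = W_\epsilon(\mathcal{O}) + d(\mathcal{O},s)$ for every $s \in \MM$. Thus the origin $\mathcal{O}$ dominates every point of $\MM$ with respect to $\epsilon$. I can now invoke Lemma~\ref{lem:s dominates t}, one coordinate at a time, with $t = \mathcal{O}$ and $s = s_i$, to conclude that replacing any $s_i$ by $\mathcal{O}$ can only decrease $\Ff_\epsilon(s_1,s_2,s_3)$, and likewise for $\Gg_\epsilon$. (Recall we already assumed $\alpha < (1-\lambda)/(2(1+\lambda))$, so all the right-hand sides in that lemma are nonnegative.) Applied three times, this yields
\[\min_{s_1,s_2,s_3\in\MM}\Ff_\epsilon(s_1,s_2,s_3) = \Ff_\epsilon(\mathcal{O},\mathcal{O},\mathcal{O}),\]
and the analogous identity for $\Gg_\epsilon$.

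It then suffices to evaluate the two functions at $(\mathcal{O},\mathcal{O},\mathcal{O})$. Since $W_\epsilon(\mathcal{O}) = 0$ and $d(\mathcal{O},\mathcal{O}) = 0$, the slack terms $\Sl_\epsilon(\mathcal{O};\mathcal{O})$, $\Sl_\epsilon(\mathcal{O};\{\mathcal{O}\})$, and $\Sl_\epsilon(\mathcal{O};\Boks(\mathcal{O},\mathcal{O}))$ all vanish (note $\Boks(\mathcal{O},\mathcal{O}) = \{\mathcal{O}\}$). Hence $\Ff_\epsilon(\mathcal{O},\mathcal{O},\mathcal{O}) = \Gg_\epsilon(\mathcal{O},\mathcal{O},\mathcal{O}) = 0$, giving $\min\Ff_\epsilon = \min\Gg_\epsilon = 0$ and therefore $\Phi_\epsilon = 0$.

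There is no real obstacle here; the proof is essentially a direct application of Lemma~\ref{lem:s dominates t} together with the trivial evaluation at the origin. The only point one should be careful about is that Lemma~\ref{lem:s dominates t} requires only domination (which holds universally for $\sigma=\epsilon$), not membership in some request — which is consistent with the fact that $\epsilon$ has no last request, so the slightly different statement of Lemma~\ref{lem:minimum in r} is not what we invoke.
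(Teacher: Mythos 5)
Your proof is correct and follows essentially the same route as the paper: observe that the origin dominates every point with respect to the empty sequence, apply Lemma~\ref{lem:s dominates t} (with $\alpha$ small enough that the bounds are nonnegative) to move each argument to $\mathcal{O}$, and evaluate $\Ff_\epsilon$ and $\Gg_\epsilon$ at $(\mathcal{O},\mathcal{O},\mathcal{O})$ to get $0$. Your write-up just spells out the replacement and evaluation steps that the paper leaves implicit.
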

\begin{proof}
Any point $s$ is dominated by the origin $\mathcal{O}$, w.r.t. the empty sequence. By Lemma~\ref{lem:s dominates t}, we see that
$\min\Ff_{\epsilon}=\Ff_{\epsilon}(\mathcal{O},\mathcal{O},\mathcal{O})=0$ and
$\min\Gg_{\epsilon}=\Gg_{\epsilon}(\mathcal{O},\mathcal{O},\mathcal{O})=0$.
\end{proof}\bigskip

\smallskip\begin{lemma}\label{lem:phi sigma}
$\Phi_{\roo}\leqslant \Opt_{\roo}$, for any sequence $\roo$. (Used in proof of Theorem~\ref{th:theorem1}.)\end{lemma}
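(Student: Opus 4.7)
The plan is to bound $\Phi_\rho$ above by $\min \Hh_\rho$ using the already-established Lemma~\ref{lem:F<=G<=H}, and then to bound $\min \Hh_\rho$ above by $\Opt_\rho$ via a direct evaluation at a point where the work function equals the optimum.

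First I would unpack the definition. Since $\Phi_\rho = (1-\gamma)\min\Ff_\rho + \gamma \min \Gg_\rho$ is a convex combination and Lemma~\ref{lem:F<=G<=H} gives $\min \Ff_\rho \leqslant \min \Gg_\rho \leqslant \min \Hh_\rho$, we immediately obtain
\[\Phi_\rho \leqslant \min\Hh_\rho.\]
So it suffices to show $\min \Hh_\rho \leqslant \Opt_\rho$.

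Next, recall the closed form
\[\Hh_\sigma(s_1,s_2)=\tfrac{1}{2}W_\sigma(s_1)+\tfrac{1}{2}W_\sigma(s_2)-\tfrac{\lambda}{2}d(s_1,s_2).\]
Let $q \in \MM$ be an endpoint of an optimal offline solution for $\rho$, so that $W_\rho(q)=\min_{s\in\MM}W_\rho(s)=\Opt_\rho$. Evaluating $\Hh_\rho$ at $(q,q)$ gives $\Hh_\rho(q,q)=W_\rho(q)=\Opt_\rho$, so $\min\Hh_\rho \leqslant \Opt_\rho$. Chaining this with the previous bound yields $\Phi_\rho \leqslant \Opt_\rho$, as required.

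There is essentially no obstacle here: the whole content is packed into Lemma~\ref{lem:F<=G<=H}, which already encodes the key fact that the $\Sl$-terms in $\Ff_\sigma$ and $\Gg_\sigma$ can be driven down to zero (using Lemma~\ref{lem:minimum slack}), so the minima of $\Ff$ and $\Gg$ are dominated by the minimum of $\Hh$. The only thing to verify carefully is the reduction $\min\Hh_\rho \leqslant \Opt_\rho$, which is just the diagonal evaluation above together with the observation that $\Opt_\rho=\min_s W_\rho(s)$.
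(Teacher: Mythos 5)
Your proof is correct. The paper's own argument is essentially the same diagonal evaluation at the optimal endpoint $q$, but carried out directly on $\Ff$ and $\Gg$: it observes that $\Ff_{\roo}(q,q,q)=\Gg_{\roo}(q,q,q)=W_{\roo}(q)=\Opt_{\roo}$ (the slack of $q$ to $q$, to $\{q,q\}$, and to $\Boks(q,q)=\{q\}$ all vanish), so $\min\Ff_{\roo}\leqslant\Opt_{\roo}$ and $\min\Gg_{\roo}\leqslant\Opt_{\roo}$, and the convex combination gives $\Phi_{\roo}\leqslant\Opt_{\roo}$. You instead pass through $\min\Hh_{\roo}$ using Lemma~\ref{lem:F<=G<=H} and evaluate only $\Hh_{\roo}(q,q)$; this is equally valid and slightly more economical (one evaluation instead of two), at the cost of importing Lemma~\ref{lem:F<=G<=H} (and, through it, Lemma~\ref{lem:minimum slack}) where the paper needs nothing beyond the definitions and the fact that slack of a point to itself is zero. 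Both routes rest on the same two observations: the slack terms vanish on the diagonal, and $W_{\roo}(q)=\Opt_{\roo}$ for the endpoint $q$ of an optimal solution.
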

\begin{proof}
Let $q$ be the endpoint of an optimal solution for $\roo$. Then $W_{\roo}(q)=\Opt_{\roo}$ and $\Ff_{\roo}(q,q,q)=\Gg_{\roo}(q,q,q)=W_{\roo}(q)$.
Hence, $\min \Ff_{\roo}\leqslant W_{\roo}(q)=\Opt_{\roo}$ and $\min \Gg_{\roo}\leqslant W_{\roo}(q)=\Opt_{\roo}$.
\[\Phi_{\roo}=(1-\gamma)\min\Ff_{\roo}+\gamma\min\Gg_{\roo}\leqslant (1-\gamma)\Opt_{\roo}+\gamma\Opt_{\roo}=\Opt_{\roo}.\]

\end{proof}\bigskip

\subsection{Properties of the work function}\label{sec:prop cnn}
Each metrical service system has its own specific properties of its work function. For example, Koutsoupias and Papadimitriou show a
quasi-convexity property of the work function for the $k$-server problem~\cite{KouPap95A}. A good understanding of the \CNN work function is
lacking but the two simple properties we show in this section are enough to prove constant competitiveness. These properties hold for the
generalized two server problem as well.

Let $\sigma'$ be an arbitrary request sequence for the \CNN problem and let $r'=r(x{'},y{'})$ be the last request in $\sigma'$.

\smallskip\begin{lemma}\label{lem:s in r dominted by}
Any $(x,y)\in \MM$  is dominated w.r.t. $\sigma'$ by $(x',y)$ or by  $(x,y')$. (Used in proof of Lemma~\ref{lem:domination} and~\ref{lem:G
increase}.)
\end{lemma}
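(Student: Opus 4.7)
The plan is to exploit the special structure of the \CNN requests: the last request $r'=r(x',y')$ forces any $\sigma'$-serving path ending at $(x,y)$ to visit a point whose $x$-coordinate is $x'$ or whose $y$-coordinate is $y'$. The claim to prove is equivalent to the equality
\[
W_{\sigma'}(x,y)\;=\;\min\bigl\{W_{\sigma'}(x',y)+|x-x'|,\ W_{\sigma'}(x,y')+|y-y'|\bigr\},
\]
because $d((x,y),(x',y))=|x-x'|$ and $d((x,y),(x,y'))=|y-y'|$, and whichever of the two minimizers achieves the minimum will witness the domination.

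For the $\leqslant$ direction, note that both $(x',y)$ and $(x,y')$ lie in $r'$, so any shortest path realizing $W_{\sigma'}(x',y)$ or $W_{\sigma'}(x,y')$ already serves $\sigma'$; extending it by a single segment to $(x,y)$ yields a valid $\sigma'$-serving path, so its length upper-bounds $W_{\sigma'}(x,y)$.

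For the $\geqslant$ direction, I would take a path of length $W_{\sigma'}(x,y)$ that serves $\sigma'$ and ends at $(x,y)$, and let $p=(x_p,y_p)$ be the last point of the path that lies in $r'$ (such a point exists since $r'$ is the final request). After $p$, the path moves directly to $(x,y)$ at cost $|x-x_p|+|y-y_p|$ by the triangle inequality. In the case $x_p=x'$, the point $(x',y)$ also serves $r'$, so the path obtained from the original by replacing the final leg with a visit to $(x',y)$ followed by the horizontal move $(x',y)\to(x,y)$ still serves $\sigma'$ and ends at $(x,y)$; the cost of reaching $(x',y)$ is at most $W_{\sigma'}(x',y_p)+|y-y_p|$ by a further rerouting (replacing the direct move $p\to(x,y)$ contributes exactly $|y-y_p|+|x-x'|$). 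Chaining the estimates gives $W_{\sigma'}(x,y)\geqslant W_{\sigma'}(x',y)+|x-x'|$. The case $y_p=y'$ is symmetric and yields $W_{\sigma'}(x,y)\geqslant W_{\sigma'}(x,y')+|y-y'|$, so in either case the minimum on the right is $\leqslant W_{\sigma'}(x,y)$.

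The step requiring care is the rerouting in the $\geqslant$ direction: one must check that inserting the auxiliary point $(x',y)$ or $(x,y')$ does not invalidate the servicing of any earlier request in $\sigma'$. This is automatic because everything before $p$ in the original path already served $\sigma'$, and the new terminal segment lies entirely along a coordinate line through $p$ and then through $(x,y)$, so no constraint is broken. There are no real obstacles beyond this bookkeeping; the lemma is really a statement that the \CNN work function is \emph{$L^1$-piecewise-linear} along the two axis-aligned approaches dictated by the last request.
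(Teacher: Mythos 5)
Your proof is correct, but it takes a different route from the paper's. The paper gets the lemma in two lines from machinery already in place: (i) every point is dominated w.r.t.\ $\sigma'$ by some point of the last request $r'$ (the support property $\supp(W_{\sigma'})\subseteq r'$), so $(x,y)$ is dominated by some $(x',\hat y)$ or $(\hat x,y')$; and (ii) domination is inherited by every point on a shortest path between the dominated and the dominating point, together with the observation that $(x',y)$ lies on an $\ell_1$-geodesic from $(x,y)$ to $(x',\hat y)$ (and $(x,y')$ on one from $(x,y)$ to $(\hat x,y')$). You instead unwind the definition of the work function: you split an optimal $\sigma'$-serving path to $(x,y)$ at its last visit $p=(x_p,y_p)$ to $r'$, bound the prefix cost below by $W_{\sigma'}(p)$, and finish with the triangle inequality plus the $1$-Lipschitz bound $W_{\sigma'}(x',y)\leqslant W_{\sigma'}(x',y_p)+|y-y_p|$ (this is what your ``further rerouting'' really is); combined with the trivial Lipschitz upper bound this yields your min-formula, which is indeed equivalent to the domination statement precisely because $W_{\sigma'}$ is $1$-Lipschitz. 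In effect you re-prove the support property in situ rather than citing it. What each approach buys: the paper's version is shorter and, since both ingredients are purely metric, transfers verbatim to the generalized 2-server setting of Section~\ref{sec:2-server}, where $(x',y)$ is still metrically between $(x,y)$ and $(x',\hat y)$ in the sum metric; yours is self-contained and makes the piecewise-linear $\ell_1$ structure of the \CNN work function explicit, at the price of some path-surgery bookkeeping --- for instance, it is cleaner to take $p$ to be the designated serving point of $r'$ on the optimal path (so the prefix automatically serves all of $\sigma'$) than ``the last point of the path in $r'$'', although your choice also works.
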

\begin{proof}
Any point is dominated by a point of the last request. Therefore, $(x,y)$  is dominated by $(x',\hat{y})$ or by  $(\hat{x},y')$ for some
$\hat{y}\in\YY$ or $\hat{x}\in\XX$. In general, if $s$ is dominated by $t$, then $s$ is dominated by any point on the shortest path between $s$
and $t$. Now, note that $(x',y)$ is on the shortest path between  $(x,y)$ and $(x',\hat{y})$, and that $(x,y')$ is on the shortest path between
$(x,y)$ and $(\hat{x},y')$.  \end{proof}\bigskip
\begin{figure}
\center
\begin{tikzpicture}[scale=1]
\draw (0,1) node(y1L)  { };
\draw (10,1) node(y1R)  { };
\draw (0,2) node(y2L)  { };
\draw (10,2) node(y2R)  { };
\draw (2,0) node(x1B)  { };
\draw (8,0) node(x2B)  { };
\draw (2,3) node(x1T)  { };
\draw (8,3) node(x2T)  { };
\fill (canvas cs:x=2cm,y=1cm) circle (3pt);
\fill (canvas cs:x=8cm,y=2cm) circle (3pt);
\draw (x1B)--(x1T);
\draw [line width=1.5pt] (x2B)--(x2T);
\draw (y1L)--(y1R);
\draw [line width=1.5pt] (y2L)--(y2R);
\draw[<->] [dashed]  (1,1.1) -- (1,1.9);
\draw[<->] [dashed]  (2.1,2.8) -- (7.9,2.8);
\node at (5,3) {$\dd x$};
\node at (0.7,1.5) {$\dd y$};
\node at (1.3,0.6) {$(x',y')$};
\node at (8.7,2.4) {$(x'',y'')$};
\end{tikzpicture}
\caption{Two subsequent requests $r'=r(x',y')$ and $r''=r(x'',y'')$. We assume $\dd x\geqslant \dd y$.}
\label{fig:cnn}
\end{figure}

Let $\sigma'$ be followed by request $r''=r(x'',y'')$ and denote the extended sequence by $\sigma'' =\sigma' r''$. To simplify notation we denote
$d_{\XX}(x{'},x{''})=|x'-x''|$ by $\dd x$ and do the same for $y$. (See Figure~\ref{fig:cnn}.) From now on we assume without loss of generality
that
\[\dd x\geqslant \dd y.\]
Remember the definition of extended cost. From Equation~\eqref{eq:nabla for r rewritten} we know that
\[
\nnabla_{r''}(W_{\sigma'})=\max_{s\in \MM} \Sl_{\sigma'}(s;r'').
\]
Since this will be the only extended cost that we consider in this proof, we denote it simply by $\nnabla$. Further, let $\snabla\in \MM$ be a
point where the maximum is attained, i.e.,
\begin{equation}
\label{eq:nabla snabla}
\nnabla=\nnabla_{r''}(W_{\sigma'})=\Sl_{\sigma'}(\snabla;r'').
\end{equation}
Point $\snabla$ will be used in Lemma~\ref{lem:F increase} and~\ref{lem:G increase}.

\smallskip\begin{lemma}\label{lem:nabla dx}
$\nnabla\leqslant (1+\lambda)\dd x$. (Follows from Lemma~\ref{lem:slack change s to t}. Used in proof of Lemma~\ref{lem:G increase}.)
\end{lemma}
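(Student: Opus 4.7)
The plan is to prove the uniform statement $\Sl_{\sigma'}(s;r'')\leqslant(1+\lambda)\dd x$ for every $s\in\MM$; the lemma then follows immediately by specializing $s=\snabla$ and invoking~\eqref{eq:nabla snabla}.

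I would fix an arbitrary $s\in\MM$ and exploit the structural fact $\supp(W_{\sigma'})\subseteq r'$: since every point is dominated by some point of $r'$, there exists $t^{*}\in r'$ with $W_{\sigma'}(s)=W_{\sigma'}(t^{*})+d(s,t^{*})$. Because $r'$ is the cross through $(x',y')$, $t^{*}$ must have the form $(x',y_0)$ for some $y_0$ or $(x_0,y')$ for some $x_0$. I would then transport $t^{*}$ onto $r''$ by replacing the shared coordinate: in the first case let $t=(x'',y_0)$, in the second let $t=(x_0,y'')$. Either way $t\in r''$ and $d(t^{*},t)\in\{\dd x,\dd y\}$, which the standing assumption $\dd x\geqslant\dd y$ bounds by $\dd x$.

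The final step combines Lipschitz continuity, $W_{\sigma'}(t)\leqslant W_{\sigma'}(t^{*})+d(t^{*},t)$, the triangle inequality $d(s,t)\leqslant d(s,t^{*})+d(t^{*},t)$, and the domination equality, yielding
\[W_{\sigma'}(t)+\lambda d(s,t)-W_{\sigma'}(s)\leqslant (1+\lambda)d(t^{*},t)+(\lambda-1)d(s,t^{*})\leqslant(1+\lambda)\dd x,\]
where the second inequality uses $\lambda<1$ to discard the (non-positive) term in $d(s,t^{*})$. Applying Lemma~\ref{lem:subsetslack} to $\{t\}\subseteq r''$ upgrades this single-point bound to $\Sl_{\sigma'}(s;r'')\leqslant\Sl_{\sigma'}(s;t)\leqslant(1+\lambda)\dd x$, as required.

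There is essentially no real obstacle here: the argument is purely structural, the only \emph{choice} being which coordinate of $t^{*}$ to adjust when building $t\in r''$. This is where the cross geometry of \CNN requests and the WLOG assumption $\dd x\geqslant\dd y$ do all the work; everything else is a mechanical application of Lipschitz continuity, the triangle inequality, and Lemma~\ref{lem:subsetslack}.
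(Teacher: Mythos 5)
Your proof is correct and follows essentially the same route as the paper: both arguments rest on the fact that every point is dominated (w.r.t.\ $\sigma'$) by a point of $r'$, that every point of $r'$ lies within distance $\dd x$ of a point of $r''$, and the $(1+\lambda)$-Lipschitz behaviour of the slack. The only cosmetic difference is that the paper first restricts the maximum in $\nnabla=\max_{s}\Sl_{\sigma'}(s;r'')$ to $s\in r'$ via Lemma~\ref{lem:slack change s to t}(b) and then applies the bound~\eqref{eq:Slack 1+lambda}, whereas you inline both steps into one explicit estimate valid for every $s\in\MM$, discarding the non-positive $(\lambda-1)d(s,t^{*})$ term.
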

\begin{proof}Any $s\in \MM$ is dominated by a point in $r'$ w.r.t. $\sigma'$. Hence, by Lemma~\ref{lem:slack change s to t}(b), we may restrict to $r'$, i.e.,
$\nnabla=\max_{s\in \MM} \Sl_{\sigma'}(s;r'')=\max_{s\in r'} \Sl_{\sigma'}(s;r'')$. For any point $s$ in $r'$, there is a point in $r''$ at
distance at most $\dd x$, implying (using~\eqref{eq:Slack 1+lambda}) $\Sl_{\sigma'}(s;r'')\leqslant (1+\lambda)\dd x$ for any point $s$ in $r'$.
\end{proof}\bigskip

\subsection{The potential applied to \small CNN}\label{sec:pot applied}
In this section, we apply our potential function to the \CNN problem. Lemmas~\ref{lem:F increase},~\ref{lem:G increase} and~\ref{lem:minG''ge
minG'} state how $\min\Ff$ and $\min\Gg$ increase when a new request $r''$ arrives, i.e., when going from $\sigma'$ to $\sigma''$. Then,
Lemma~\ref{lem:nabla increase} combines these results and gives a lower bound on the increase of the potential function in terms of the extended
cost. The proof of Theorem~\ref{th:theorem1} is then straightforward.

The following lemma is given without proof as it is easy to check by looking at the definitions.
\smallskip\begin{lemma}\label{lem:G Lipschitz}
Consider request sequence $\sigma'$ as fixed and consider the next request $r''=r(x'',y''$) as a variable. Then $\min\Gg_{\sigma' r''}$ and
$\nnabla$ are Lipschitz continuous in both $x''$ and $y''$. (Used in proof of Lemma~\ref{lem:G increase}.)
\end{lemma}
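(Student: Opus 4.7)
The plan is to reduce both Lipschitz claims to a single basic estimate: that the work function after adding a request is uniformly Lipschitz in the location of that request, and that, for the outer quantity $\nnabla$, only the set $r''$ over which the inner minimum runs depends on $(x'',y'')$ in the first place.

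\emph{Step 1 (uniform Lipschitz estimate on $W_{\sigma',r''}$).} Fix $s\in\MM$, and let $\tilde r''=r(\tilde x'',\tilde y'')$ be a perturbed request. Take any shortest path $\mathcal{O}\to s_1\to\cdots\to s_n\to s$ realizing $W_{\sigma',r''}(s)$, where $s_n\in r''$. Then either the $x$-coordinate of $s_n$ equals $x''$, or its $y$-coordinate equals $y''$. Shifting only that coordinate of $s_n$ to match $\tilde r''$ produces a point in $\tilde r''$, and a valid path for $\sigma',\tilde r''$ ending at $s$; only the two edges incident to $s_n$ are affected, and the length changes by at most $2(|x''-\tilde x''|+|y''-\tilde y''|)$. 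By symmetry,
\[
\bigl|W_{\sigma',r''}(s)-W_{\sigma',\tilde r''}(s)\bigr|\;\leqslant\;2\bigl(|x''-\tilde x''|+|y''-\tilde y''|\bigr),
\]
with the constant independent of $s$.

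\emph{Step 2 (Lipschitz continuity of $\min\Gg_{\sigma',r''}$).} Write
\[
\Gg_{\sigma',r''}(s_1,s_2,s_3)=\tfrac{1}{2}W_{\sigma',r''}(s_1)+\tfrac{1}{2}W_{\sigma',r''}(s_2)-\tfrac{\lambda}{2}d(s_1,s_2)-\alpha\!\!\min_{t\in\Boks(s_1,s_2)}\!\!\bigl(W_{\sigma',r''}(t)+\lambda d(t,s_3)\bigr)+\alpha W_{\sigma',r''}(s_3).
\]
All dependence on $(x'',y'')$ is through values of $W_{\sigma',r''}$; the distances $d(\cdot,\cdot)$ and the set $\Boks(s_1,s_2)$ are unaffected. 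By Step~1, each such value is Lipschitz with a constant depending only on $\lambda$ and $\alpha$, \emph{uniformly} in $s_1,s_2,s_3$ and in $t\in\Boks(s_1,s_2)$. Hence $\Gg_{\sigma',r''}(s_1,s_2,s_3)$ is uniformly Lipschitz in $(x'',y'')$, and the infimum over $(s_1,s_2,s_3)$ inherits the same Lipschitz constant, as the infimum of a uniformly Lipschitz family.

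\emph{Step 3 (Lipschitz continuity of $\nnabla$).} Recall
\[
\nnabla\;=\;\max_{s\in\MM}\min_{t\in r''}\bigl[W_{\sigma'}(t)+\lambda d(s,t)-W_{\sigma'}(s)\bigr],
\]
and note crucially that $W_{\sigma'}$ does \emph{not} depend on $r''$. For every $t\in r''$ the ``coordinate-shifted'' point $\tilde t\in\tilde r''$ satisfies $d(t,\tilde t)\leqslant|x''-\tilde x''|+|y''-\tilde y''|$. Since $W_{\sigma'}$ is $1$-Lipschitz and $d(s,\cdot)$ is $1$-Lipschitz, for each $s$ the inner bracketed expression changes by at most $(1+\lambda)(|x''-\tilde x''|+|y''-\tilde y''|)$ when one replaces $t\in r''$ by $\tilde t\in\tilde r''$ and vice versa. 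Taking the inner $\min$ and outer $\max$ preserves this bound, giving Lipschitz continuity of $\nnabla$ in $(x'',y'')$ with constant $1+\lambda$.

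The main obstacle, such as it is, is purely bookkeeping: one must keep track of the fact that a point of $r''$ is characterized by \emph{one} matching coordinate, so the natural nearby point in $\tilde r''$ is obtained by shifting only that coordinate, yielding a distance bounded by the perturbation in $(x'',y'')$. Once this observation is in place, both statements follow from standard Lipschitz arithmetic for infima and suprema of uniformly Lipschitz families.
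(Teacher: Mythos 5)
Your proof is correct, and it is precisely the routine verification the paper omits: the paper states this lemma without proof ("easy to check by looking at the definitions"), and your argument — a uniform Lipschitz bound on $W_{\sigma',r''}$ in the request point, plus standard Lipschitz stability of infima/suprema over uniformly Lipschitz families, and the coordinate-shift correspondence between $r''$ and a perturbed request — is exactly the intended check directly from the definitions.
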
\smallskip

Lemma~\ref{lem:G Lipschitz} is true as well for $\Ff$ and $\Hh$ but we do not need that. We shall use the following easy property several times.
\begin{equation}\label{eq:w'=w''} W_{\sigma'}(s)=W_{\sigma''}(s), \text{ for any }s\in r''.
\end{equation}

Any $s\in \MM$ is dominated w.r.t. $\sigma'$ by a point in $r'$ and Lemma~\ref{lem:s in r dominted by} gives two candidate points. The next lemma
reduces this to one candidate in certain cases.

\smallskip\begin{lemma}\label{lem:domination}
Assume that $\Ff_{\sigma''}$ or $\Gg_{\sigma''}$ is minimized in $(s_1,s_2,s_3)$. Then, the following is true for any $i\in \{1,2,3\}$.
\begin{enumerate}
\item If $s_i=(x'',y)$ for some $y\neq y'$, then $(x',y)$ dominates $s_i$ w.r.t. ${\sigma'}$.
\item If $s_i=(x,y'')$ for some $x\neq x'$, then $(x,y')$ dominates $s_i$ w.r.t. ${\sigma'}$.
\end{enumerate}
(Follows from Lemma~\ref{lem:s dominates t}, \ref{lem:s in r dominted by}. Used in proof of Lemma~\ref{lem:F increase} and~\ref{lem:G increase}.)
\end{lemma}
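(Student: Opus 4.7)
The plan is to argue by contradiction using the replacement machinery of Lemma~\ref{lem:s dominates t}. We prove part (1); part (2) is entirely symmetric under the roles of $x$ and $y$. Suppose $\Ff_{\sigma''}$ (resp.\ $\Gg_{\sigma''}$) is minimized at $(s_1,s_2,s_3)$ and fix an index $i$ with $s_i=(x'',y)$ and $y\neq y'$. By Lemma~\ref{lem:minimum in r} we know $s_i\in r''$, which is consistent with the hypothesis. By Lemma~\ref{lem:s in r dominted by}, $s_i$ is dominated w.r.t.\ $\sigma'$ by one of the two points $(x',y)$ or $(x'',y')$. If $(x',y)$ is such a dominator we are done, so assume towards a contradiction that only $(x'',y')$ dominates $s_i$ w.r.t.\ $\sigma'$.

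The first step is to transfer this dominance from $\sigma'$ to $\sigma''$. Both $s_i=(x'',y)$ and $(x'',y')$ have first co\"ordinate $x''$, so both lie in $r''$. Hence equation~\eqref{eq:w'=w''} gives $W_{\sigma''}(s_i)=W_{\sigma'}(s_i)$ and $W_{\sigma''}((x'',y'))=W_{\sigma'}((x'',y'))$. From the definition of dominance we therefore also have $W_{\sigma''}(s_i)=W_{\sigma''}((x'',y'))+d(s_i,(x'',y'))$, i.e., $(x'',y')$ dominates $s_i$ w.r.t.\ $\sigma''$.

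Now set $\delta=d(s_i,(x'',y'))=|y-y'|$, which is strictly positive because $y\neq y'$. Apply the appropriate part of Lemma~\ref{lem:s dominates t} to $\sigma''$ (namely (a), (b) or (c) in the $\Ff$ case, and (d), (e) or (f) in the $\Gg$ case, depending on whether $i\in\{3\}$ or $i\in\{1,2\}$, invoking symmetry of $\Ff$ and $\Gg$ in the first two arguments when $i=1$). Since $\alpha<(1-\lambda)/(2(1+\lambda))$ by our standing assumption, each of the right-hand sides in Lemma~\ref{lem:s dominates t} is a strictly positive multiple of $\delta>0$. Consequently, replacing $s_i$ by $(x'',y')$ strictly decreases $\Ff_{\sigma''}$ (resp.\ $\Gg_{\sigma''}$), contradicting the minimality of $(s_1,s_2,s_3)$.

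The conceptual heart of the argument is the observation that dominance w.r.t.\ $\sigma'$ lifts to dominance w.r.t.\ $\sigma''$ for any pair of points both lying in the new request $r''$, which is where \eqref{eq:w'=w''} does the essential work. Once this is in place, everything else is immediate from the previously established lemmas, and the only thing to double-check is the bookkeeping of which inequality in Lemma~\ref{lem:s dominates t} to use for each coordinate slot $i$.
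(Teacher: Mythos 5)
Your proof is correct and follows essentially the same route as the paper: use Lemma~\ref{lem:s in r dominted by} to get the two candidate dominators, transfer dominance from $\sigma'$ to $\sigma''$ via~\eqref{eq:w'=w''} since both points lie in $r''$, and then invoke the strict decrease from Lemma~\ref{lem:s dominates t} (with $\delta=|y-y'|>0$ and the standing bound on $\alpha$) to contradict minimality, forcing $(x',y)$ to be the dominator. The extra assumption that \emph{only} $(x'',y')$ dominates is harmless but unnecessary, since the contradiction already follows from $(x'',y')$ dominating at all.
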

\begin{proof}
We only prove the first, since the second follows by symmetry. By Lemma~\ref{lem:s in r dominted by}, point $s_i=(x'',y)$ is dominated w.r.t.
${\sigma'}$ by $(x'',y')$ or by $(x',y)$. Suppose the first is true. Then, using~\eqref{eq:w'=w''}, $s_i$ is dominated by this point w.r.t.
${\sigma''}$ as well. In that case Lemma~\ref{lem:s dominates t} implies that $\Ff_{\sigma''}$ and $\Gg_{\sigma''}$ are strictly reduced by
replacing $s_i$ by $(x'',y')$. This contradicts the assumption of minimality. Thus, $s_i$ is dominated by $(x',y)$ w.r.t. ${\sigma'}$.
\end{proof}\bigskip

Equation~\eqref{eq:w'=w''} implies that if $s_1,s_2,s_3\in r''$  then,
\[\Hh_{\sigma'}(s_1,s_2)=\Hh_{\sigma''}(s_1,s_2)\text{, and }\Ff_{\sigma'}(s_1,s_2,s_3)=\Ff_{\sigma''}(s_1,s_2,s_3).\] (This is not true for $\Gg$.) These two easy equalities will be used several times without reference in the following lemmas. 
%Now let $\Hh_{\sigma''}$ be minimized in $(s_1,s_2)$. Then, by Lemma~\ref{lem:minimum in r}, $s_1,s_2\in r''$ and the equality above implies that
%\[\min \Hh_{\sigma'}\leqslant \Hh_{\sigma'}(s_1,s_2)=\Hh_{\sigma''}(s_1,s_2)= \min \Hh_{\sigma''}. \]
%Similarly, if  $\Ff_{\sigma''}$ is minimized in $(s_1,s_2,s_3)$, then
%\[\min \Ff_{\sigma'}\leqslant \Ff_{\sigma'}(s_1,s_2,s_3)=\Ff_{\sigma''}(s_1,s_2,s_3)= \min \Ff_{\sigma''}. \]
From now, we let
\begin{equation}\label{eq:set alpha} \alpha\leqslant \frac{1-\lambda}{12+4\lambda}.\end{equation}
We use this bound for Lemma~\ref{lem:F increase} and Lemma~\ref{lem:G increase}, although for Lemma~\ref{lem:F increase} we could do with a
weaker bound.

\smallskip\begin{lemma}\label{lem:F increase}
Let $\Ff_{\sigma''}$ be minimized in  $(s_1,s_2,s_3)$. There are constants $c_1,c_2>0$ (depending on $\lambda$) such that,\\[2mm]
(Case A)\  if the cardinality of $\{s_1,s_2,s_3\}$ is 1 or 2 then
\[\min\Ff_{\sigma''}-\min\Ff_{\sigma'}\geqslant c_1 \nnabla,\text{ and }\]
(Case B)\  if the cardinality of $\{s_1,s_2,s_3\}$ is 3 then,
\[\min\Ff_{\sigma''}-\min\Ff_{\sigma'}\geqslant c_2 \dd y.\]
(Follows from Lemma~\ref{lem:s dominates t}, \ref{lem:minimum in r}, \ref{lem:cardinality=2}, \ref{lem:domination}. Used in proof of
Lemma~\ref{lem:nabla increase}.)
\end{lemma}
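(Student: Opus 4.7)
Step 1 is to observe, via Lemma~\ref{lem:minimum in r} and~\eqref{eq:w'=w''}, that the minimizing triple $(s_1,s_2,s_3)$ of $\Ff_{\sigma''}$ lies in $r''$, so $W_{\sigma'}(s_i)=W_{\sigma''}(s_i)$ and every slack, $\Hh$- and $\Ff$-value computed on $\{s_1,s_2,s_3\}$ is invariant between $\sigma'$ and $\sigma''$. In particular $\min\Ff_{\sigma''}=\Ff_{\sigma'}(s_1,s_2,s_3)\geqslant\min\Ff_{\sigma'}$, so the task in each case is merely to exhibit the claimed additional gap.

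For \textbf{Case A}, Lemma~\ref{lem:cardinality=2} produces $u_1,u_2\in\{s_1,s_2,s_3\}\subseteq r''$ with $\Hh_{\sigma''}(u_1,u_2)\leqslant\min\Ff_{\sigma''}$, which reads $\Hh_{\sigma'}(u_1,u_2)\leqslant\min\Ff_{\sigma''}$ since $u_1,u_2\in r''$. The crucial move is to use $\snabla$ from~\eqref{eq:nabla snabla} as the \emph{third} coordinate of an upper-bounding triple for $\min\Ff_{\sigma'}$:
\[\min\Ff_{\sigma'}\;\leqslant\;\Ff_{\sigma'}(u_1,u_2,\snabla)\;=\;\Hh_{\sigma'}(u_1,u_2)-\alpha\,\Sl_{\sigma'}(\snabla;\{u_1,u_2\}).\]
Because $\{u_1,u_2\}\subseteq r''$, Lemma~\ref{lem:subsetslack} yields $\Sl_{\sigma'}(\snabla;\{u_1,u_2\})\geqslant\Sl_{\sigma'}(\snabla;r'')=\nnabla$, and subtracting gives $\min\Ff_{\sigma''}-\min\Ff_{\sigma'}\geqslant\alpha\nnabla$, so one takes $c_1=\alpha$.

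For \textbf{Case B} one first disposes of the vacuous subcase $\dd y=0$ (where $\min\Ff_{\sigma''}\geqslant\min\Ff_{\sigma'}$ already suffices). Otherwise $|r'\cap r''|=2$, and cardinality three forces some $s_i\notin r'\cap r''$. Lemmas~\ref{lem:s in r dominted by} and~\ref{lem:domination} supply, for each $i$, a dominator $\tilde s_i\in r'$ of $s_i$ at distance $\delta_i\in\{0,\dd x,\dd y\}$, with at least one $\delta_i\geqslant\dd y$. Three successive applications of Lemma~\ref{lem:s dominates t}(c), (b), (a), one per slot, give
\[\Ff_{\sigma'}(\tilde s_1,\tilde s_2,\tilde s_3)\;\leqslant\;\Ff_{\sigma'}(s_1,s_2,s_3)-A(\delta_1+\delta_2)-B\delta_3,\]
where $A=\tfrac12(1-\lambda)-\alpha(1+\lambda)$ and $B=\alpha(1-\lambda)$ are both strictly positive under~\eqref{eq:set alpha}. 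Since $\Ff_{\sigma'}(s_1,s_2,s_3)=\min\Ff_{\sigma''}$, this gives $\min\Ff_{\sigma''}-\min\Ff_{\sigma'}\geqslant\min(A,B)\,\dd y$, so $c_2=\min(A,B)$.

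The only genuinely delicate point is the choice of comparison triple in Case~A: plugging $\snabla$ into the third slot of $\Ff_{\sigma'}$ is exactly what converts the $\alpha$-slack term into $\alpha\nnabla$, and this trick relies on the last slack in $\Ff$ being to the two-point set $\{u_1,u_2\}\subseteq r''$. The analogous trick fails for $\Gg_{\sigma''}$ because its last slack is to $\Boks(u_1,u_2)$, which generically does not contain $r''$; that is precisely why the companion lemma for $\min\Gg$ can only assert a non-negative increase in Case~A and must recover all of the $\nnabla$-contribution through Case~B instead.
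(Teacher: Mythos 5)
Your proof is correct and follows essentially the same route as the paper: Case A is identical (compare $\min\Ff_{\sigma''}$ to $\Hh_{\sigma'}(u_1,u_2)$ via Lemma~\ref{lem:cardinality=2} and plug $\snabla$ into the third slot of $\Ff_{\sigma'}$), and Case B uses the same mechanism of Lemma~\ref{lem:domination} plus Lemma~\ref{lem:s dominates t}, the only cosmetic difference being that you replace all three minimizer points by dominators in $r'$ whereas the paper replaces just the one point differing from $(x'',y')$ and $(x',y'')$ before dropping to $\min\Ff_{\sigma'}$.
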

\begin{proof} Lemma~\ref{lem:minimum in r} tells us that $s_1,s_2,s_3\in r''$. We use this in both cases.\\
\emph{Case A:} By Lemma~\ref{lem:cardinality=2}, there are points $u_1,u_2\in\{s_1,s_2,s_3\}$ such that  $\Hh_{\sigma''}(u_1,u_2)\leqslant
\Ff_{\sigma''}(s_1,s_3,s_3)$. Remember the definition of $\snabla$ in~\eqref{eq:nabla snabla}.
\[
\begin{array}{rcl}
\min\Ff_{\sigma''}&=&\Ff_{\sigma''}(s_1,s_2,s_3)\\
&\geqslant&\Hh_{\sigma''}(u_1,u_2)\\
&=&\Hh_{\sigma'}(u_1,u_2)\\
&=&\Ff_{\sigma'}(u_1,u_2,\snabla)+\alpha \Sl_{\sigma'}(\snabla;\{u_1,u_2\})\\
&\geqslant&\min\Ff_{\sigma'}+\alpha \Sl_{\sigma'}(\snabla;\{u_1,u_2\})\\
&\geqslant&\min\Ff_{\sigma'}+\alpha \Sl_{\sigma'}(\snabla;r'')\\
&=&\min\Ff_{\sigma'}+\alpha \nnabla.
\end{array}
\]
\emph{Case B}:  Since the three points are different, at least one of the these points differs from both $(x'',y')$ and $(x',y'')$. By
Lemma~\ref{lem:domination}, this point is dominated w.r.t. $\sigma'$ by a point at distance $\dd x$ or $\dd y$. Now we use Lemma~\ref{lem:s
dominates t} with $\delta=\dd y$. Note that, by our bound on $\alpha$, all righthand sides in Lemma~\ref{lem:s dominates t} are at least
$\alpha(1-\lambda)\dd y$.
\[
\begin{array}{rcl}
\min\Ff_{\sigma''}&=& \Ff_{\sigma''}(s_1,s_2,s_3)\\
&=& \Ff_{\sigma'}(s_1,s_2,s_3)\\
&\geqslant &\min\Ff_{\sigma'}+\alpha(1-\lambda)\dd y.
\end{array}
\]
\end{proof}\bigskip

\smallskip\begin{lemma}\label{lem:G increase}
$\min\Gg_{\sigma''}-\min\Gg_{\sigma'}\geqslant c_3\nnabla-c_4\dd y$,  for some constants $c_3,c_4>0$ depending on $\lambda$. (Follows from
Lemma~\ref{lem:subsetslack}, \ref{lem:s dominates t}, \ref{lem:minimum in r}, \ref{lem:convex containment}, \ref{lem:nabla dx}, \ref{lem:G
Lipschitz}, \ref{lem:domination}. Used in proof of Lemma~\ref{lem:nabla increase}.)
\end{lemma}
\begin{proof}
By Lemma~\ref{lem:G Lipschitz}, it is enough to prove that $\min\Gg_{\sigma''}-\min\Gg_{\sigma'}\geqslant c_3\nnabla$ under the assumption that
$y'=y''$. So we assume $y'=y''$ and denote both by $\yy$.

Let $\Gg_{\sigma''}$ be minimized for $(s_1,s_2,s_3)$ and let $s_i=(x_i,y_i)$, for $i\in \{1,2,3\}$. We make the following partition of possible
cases.

\begin{tabular}{llclcl}
\emph{Case 1: }& $\ y_1=\yy$ &and & $y_2= \yy$ &and & $y_3= \yy$,\\
\emph{Case 2: }& $\ y_1=\yy$ &and & $y_2= \yy$ &and & $y_3\neq \yy$,\\
\emph{Case 3: }& $\ y_1\neq \yy$ &or & $y_2\neq \yy$. && \\
\end{tabular}\smallskip

By Lemma~\ref{lem:minimum in r}, we have $s_1,s_2,s_3\in r''$. We shall use this property several times here. For example, if $y_i\neq \yy$ then
$x_i=x''$. \bigskip

\noindent\emph{Case 1: } We apply Lemma~\ref{lem:convex containment}: $\Hh_{\sigma''}(u_1,u_2)\leqslant \Gg_{\sigma''}(s_1,s_2,s_3)$ for some
$u_1,u_2\in \{s_1,s_2,s_3\}$.
\[
\begin{array}{rcl}
\min\Gg_{\sigma''}&=&\Gg_{\sigma''}(s_1,s_2,s_3)\\
&\geqslant &\Hh_{\sigma''}(u_1,u_2)\\
&= &\Hh_{\sigma'}(u_1,u_2)\\
&= &\Gg_{\sigma'}(u_1,u_2,\snabla)+\alpha \Sl_{\sigma'}(\snabla;\Boks(u_1,u_2))\\
&\geqslant &\min\Gg_{\sigma'}+\alpha \Sl_{\sigma'}(\snabla;\Boks(u_1,u_2))\\
&\geqslant &\min\Gg_{\sigma'}+\alpha \Sl_{\sigma'}(\snabla;r'')\\
&= &\min\Gg_{\sigma'}+\alpha \nnabla.\\
\end{array}
\]
The last inequality follows from $\Boks(u_1,u_2)\subset r''$ and Lemma~\ref{lem:subsetslack}.\bigskip

\noindent\emph{Case 2: } Since $\Boks(s_1,s_2)\subset r''$ and $s_1,s_2,s_3\in r''$ we have
\[\Gg_{\sigma''}(s_1,s_2,s_3)=\Gg_{\sigma'}(s_1,s_2,s_3).\]
By Lemma~\ref{lem:domination}, point $s_3=(x'',y_3)$ is dominated by point $t=(x',y_3)$ with respect to $\sigma'$. Now we apply Lemma~\ref{lem:s
dominates t}-$d$.
\begin{equation*}\label{eq:Case 2}
\begin{array}{rcl}
\min\Gg_{\sigma''}&=&\Gg_{\sigma''}(s_1,s_2,s_3)\\
&=&\Gg_{\sigma'}(s_1,s_2,s_3)\\
&\geqslant&\Gg_{\sigma'}(s_1,s_2,t)+\alpha(1-\lambda)\dd x\\
&\geqslant&\min \Gg_{\sigma'}+\alpha(1-\lambda)\dd x\\
&\geqslant&\min \Gg_{\sigma'}+\alpha(1-\lambda)\nnabla / (1+\lambda).\\
\end{array}
\end{equation*}
The last inequality is given by Lemma~\ref{lem:nabla dx}.

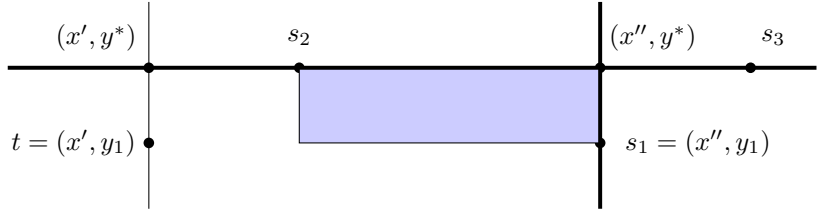
\begin{figure}
\center
\begin{tikzpicture}[scale=1]
\draw (0,1) node(y1L)  { };
\draw (10,1) node(y1R)  { };
\draw (0,2) node(y2L)  { };
\draw (11,2) node(y2R)  { };
\draw (2,0) node(x1B)  { };
\draw (8,0) node(x2B)  { };
\draw (2,3) node(x1T)  { };
\draw (8,3) node(x2T)  { };
\fill (canvas cs:x=2cm,y=1cm) circle (2pt);
\fill (canvas cs:x=8cm,y=2cm) circle (2pt);
\fill (canvas cs:x=8cm,y=1cm) circle (2pt);
\fill (canvas cs:x=2cm,y=2cm) circle (2pt);
\fill (canvas cs:x=4cm,y=2cm) circle (2pt);
\fill (canvas cs:x=10cm,y=2cm) circle (2pt);
\filldraw [fill=blue!20] (4,1) rectangle (8,2);
\draw (x1B)--(x1T);
\draw [line width=1.5pt] (x2B)--(x2T);
\draw [line width=1.5pt] (y2L)--(y2R);
\node at (1.3,2.4) {$(x',\yy)$};
\node at (4,2.4) {$s_2$};
\node at (1,1) {$t=(x',y_1)$};
\node at (9.3,1) {$s_1=(x'',y_1)$};
\node at (8.7,2.4) {$(x'',\yy)$};
\node at (10.3,2.4) {$s_3$};
\end{tikzpicture}
\caption{Case 3 of Lemma~\ref{lem:G increase}: $y_1\neq \yy$. The shaded area is $\Boks(s_1,s_2)$.}
\label{fig:case3}
\end{figure}

\noindent\emph{Case 3: } Unlike the previous two cases, we may now have $\Boks(s_1,s_2)\nsubseteq r''$ which makes the proof slightly more
complicated (see Figure~\ref{fig:case3}). By symmetry, we may assume that $y_1\neq \yy$. This implies $x_1=x''$ and point $s_1=(x'',y_1)$ is
dominated by point $t=(x',y_1)$ with respect to $\sigma'$. We now apply Lemma~\ref{lem:s dominates t}$(f)$.
\begin{equation}\label{eq:case3 1}
\begin{array}{rcl}
\Gg_{\sigma'}(s_1,s_2,s_3)&\geqslant&\Gg_{\sigma'}(t,s_2,s_3)+\left(\frac{1}{2}(1-\lambda)-\alpha(1+\lambda)\right)\dd x\\
&\geqslant &\min\Gg_{\sigma'}+\left(\frac{1}{2}(1-\lambda)-\alpha(1+\lambda)\right)\dd x.
\end{array}
\end{equation}
It remains to bound $\min\Gg_{\sigma''}-\Gg_{\sigma'}(s_1,s_2,s_3)$. We have~\footnote{A more careful analysis gives a bound
$-(1+\lambda)\alpha\dd x$ instead of $-2\alpha\dd x$.}
 \begin{equation}\label{eq:case3 2}
\begin{array}{rcl}
&&\min \Gg_{\sigma''}-\Gg_{\sigma'}(s_1,s_2,s_3)\\
&=&\Gg_{\sigma''}(s_1,s_2,s_3)-\Gg_{\sigma'}(s_1,s_2,s_3)\\
&=&\alpha \Sl_{\sigma'}(s_3;\Boks(s_1,s_2))-\alpha\Sl_{\sigma''}(s_3;\Boks(s_1,s_2))\\
&\geqslant &-2\alpha\dd x. \end{array}
\end{equation}
The last inequality follows from $W_{\sigma'}(s_3)=W_{\sigma''}(s_3)$ and from $W_{\sigma''}(s)-W_{\sigma'}(s)\leqslant 2\dd x$ for any point
$s\in \MM$ (and $s\in \Boks(s_1,s_2)$ in particular). Below we use, subsequently,~\eqref{eq:case3 2},~\eqref{eq:case3 1},~\eqref{eq:set alpha}
and Lemma~\ref{lem:nabla dx}.
\[
\begin{array}{rcl}
&&\min\Gg_{\sigma''}\\
&=&\Gg_{\sigma''}(s_1,s_2,s_3)\\
&\geqslant &\Gg_{\sigma'}(s_1,s_2,s_3)-2\alpha \dd x\\
&\geqslant &\min\Gg_{\sigma'}+\left(\frac{1}{2}(1-\lambda)-\alpha(1+\lambda)\right)\dd x-2\alpha\dd x\\
&=&\min\Gg_{\sigma'}+\left(\frac{1}{2}(1-\lambda)-\alpha(3+\lambda)\right)\dd x\\
&\geqslant &\min\Gg_{\sigma'}+\left(\frac{1}{2}(1-\lambda)-\frac{1}{4}(1-\lambda)\right)\dd x\\
&= &\min\Gg_{\sigma'}+\frac{1}{4}(1-\lambda)\dd x\\
&\geqslant &\min\Gg_{\sigma'}+\frac{1}{4}(1-\lambda)\nnabla/(1+\lambda).
\end{array}
\]
This completes the proof of the last case.
\end{proof}\bigskip

In Lemma~\ref{lem:nabla increase}, we combine Lemmas~\ref{lem:F increase} and \ref{lem:G increase} and distinguish the same two cases A and B as
we did in Lemma~\ref{lem:F increase}. Lemma~\ref{lem:G increase} will be used only for Case B, although it holds in general. For Case A, we need
the following different bound.

\smallskip\begin{lemma}\label{lem:minG''ge minG'}
Let $\Ff_{\sigma''}$ be minimized in  $(s_1,s_2,s_3)$. If the cardinality of $(s_1,s_2,s_3)$ is 1 or 2 then $\min\Gg_{\sigma''}\geqslant
\min\Gg_{\sigma'}$.(Follows from Lemma~\ref{lem:F<=G<=H}, \ref{lem:cardinality=2}. Used in proof of Lemma~\ref{lem:nabla increase}.)
\end{lemma}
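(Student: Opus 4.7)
The plan is to use the cardinality hypothesis to collapse $\min\Gg_{\sigma''}$ down to a value of $\Hh_{\sigma''}$ at a pair of points in $r''$, and then transport that value back to $\sigma'$ via the invariance of $\Hh$ on points of $r''$ noted just before Lemma~\ref{lem:domination} (equation~\eqref{eq:w'=w''}). The $\Gg$-minima themselves cannot be compared directly, so we route the comparison through $\Hh$.

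First, I will apply Lemma~\ref{lem:cardinality=2} to the given minimizer $(s_1,s_2,s_3)$ of $\Ff_{\sigma''}$: because $|\{s_1,s_2,s_3\}|\leqslant 2$, there exist $u_1,u_2\in\{s_1,s_2,s_3\}$ with $\Hh_{\sigma''}(u_1,u_2)\leqslant \Ff_{\sigma''}(s_1,s_2,s_3)=\min\Ff_{\sigma''}$. Combined with the general chain $\min\Ff_{\sigma''}\leqslant\min\Gg_{\sigma''}\leqslant\min\Hh_{\sigma''}\leqslant\Hh_{\sigma''}(u_1,u_2)$ from Lemma~\ref{lem:F<=G<=H}, all four quantities coincide; in particular $\min\Gg_{\sigma''}=\Hh_{\sigma''}(u_1,u_2)$. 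Next, Lemma~\ref{lem:minimum in r} forces $\{s_1,s_2,s_3\}\subseteq r''$, hence $u_1,u_2\in r''$, and then~\eqref{eq:w'=w''} gives $\Hh_{\sigma''}(u_1,u_2)=\Hh_{\sigma'}(u_1,u_2)$ since $\Hh$ depends only on the work-function values at its two arguments and the (fixed) distance between them. Finally I chain $\min\Gg_{\sigma''}=\Hh_{\sigma'}(u_1,u_2)\geqslant\min\Hh_{\sigma'}\geqslant\min\Gg_{\sigma'}$, the last inequality being Lemma~\ref{lem:F<=G<=H} applied to $\sigma'$.

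The main subtlety to keep in mind is that $\Gg$ itself is \emph{not} invariant under the transition $\sigma'\to\sigma''$ even when its three arguments all lie in $r''$: the $\Sl_{\sigma}(s_3;\Boks(s_1,s_2))$ term may be realized at a point of $\Boks(s_1,s_2)\setminus r''$ whose work-function value does jump (this is exactly the parenthetical remark ``(This is not true for $\Gg$)'' in the text preceding Lemma~\ref{lem:domination}). That is why one cannot compare the two $\Gg$-minima by a direct substitution argument as one can for $\Ff$. The cardinality-at-most-2 hypothesis is precisely what renders the box-slack term of $\Gg_{\sigma''}$ vacuous at the minimizer of $\Ff_{\sigma''}$, allowing the lossless descent from $\Gg_{\sigma''}$ to $\Hh_{\sigma''}$, after which the $\Hh$-invariance on $r''$ closes the argument.
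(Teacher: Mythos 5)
Your proof is correct and follows essentially the same route as the paper: collapse $\min\Gg_{\sigma''}$ to an $\Hh_{\sigma''}$-value via Lemma~\ref{lem:cardinality=2} and Lemma~\ref{lem:F<=G<=H}, then transfer to $\sigma'$ using the invariance of $\Hh$ on points of $r''$ (equation~\eqref{eq:w'=w''}) and close with $\min\Hh_{\sigma'}\geqslant\min\Gg_{\sigma'}$. The only cosmetic difference is that you carry the explicit pair $u_1,u_2\in r''$ across the transition instead of invoking the previously recorded inequality $\min\Hh_{\sigma''}\geqslant\min\Hh_{\sigma'}$, which rests on the same facts.
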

\begin{proof}
By Lemma~\ref{lem:cardinality=2}, there are points $u_1,u_2\in \{s_1,s_2,s_3\}$ such that
$\min\Hh_{\sigma''}\leqslant\Hh_{\sigma''}(u_1,u_2)\leqslant \Ff_{\sigma''}(s_1,s_2,s_3)=\min\Ff_{\sigma''}$. In Lemma~\ref{lem:F<=G<=H}, the
inequalities are the other way around. Hence,
\[
\min
\Ff_{\sigma''}=\min \Gg_{\sigma''}=\min \Hh_{\sigma''}.\]
Further, note that $\min\Hh_{\sigma''}\geqslant \min\Hh_{\sigma'}$ (since, by defintion of $\Hh$, we have that  $\Hh_{\sigma''}(t_1,t_2)\geqslant \Hh_{\sigma'}(t_1,t_2)$ for any pair of points $t_1,t_2$). We conclude that
\[
\min\Gg_{\sigma''}=\min\Hh_{\sigma''}\geqslant \min\Hh_{\sigma'}\geqslant\min\Gg_{\sigma'},
\]
where the last inequality follows again from Lemma~\ref{lem:F<=G<=H}. 
\end{proof}\bigskip

\smallskip\begin{lemma}\label{lem:nabla increase}
$\Phi_{\sigma''}- \Phi_{\sigma'}\geqslant c_5 \nnabla$ for some constant $c_5>0$, depending on $\lambda$. (Follows from Lemma~\ref{lem:F
increase}, \ref{lem:G increase}. Used in proof of Theorem~\ref{th:theorem1}.)
\end{lemma}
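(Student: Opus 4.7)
The plan is to split the increase of $\Phi$ according to the two cases analyzed in Lemma~\ref{lem:F increase}: either the minimizer $(s_1,s_2,s_3)$ of $\Ff_{\sigma''}$ has at most two distinct points (Case A), or it has three (Case B). In each case we bound the two summands of $\Phi_{\sigma''}-\Phi_{\sigma'}$ separately, using
\[
\Phi_{\sigma''}-\Phi_{\sigma'}=(1-\gamma)(\min\Ff_{\sigma''}-\min\Ff_{\sigma'})+\gamma(\min\Gg_{\sigma''}-\min\Gg_{\sigma'}).
\]

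In Case A, Lemma~\ref{lem:F increase} gives $\min\Ff_{\sigma''}-\min\Ff_{\sigma'}\geqslant c_1\nnabla$, and Lemma~\ref{lem:minG''ge minG'} gives $\min\Gg_{\sigma''}\geqslant \min\Gg_{\sigma'}$, so the $\Gg$-contribution is non-negative and we get $\Phi_{\sigma''}-\Phi_{\sigma'}\geqslant (1-\gamma)c_1\nnabla$. In Case B, Lemma~\ref{lem:F increase} gives $\min\Ff_{\sigma''}-\min\Ff_{\sigma'}\geqslant c_2\,\dd y$, while Lemma~\ref{lem:G increase} gives $\min\Gg_{\sigma''}-\min\Gg_{\sigma'}\geqslant c_3\nnabla-c_4\,\dd y$. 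Adding with the weights $1-\gamma$ and $\gamma$ yields
\[
\Phi_{\sigma''}-\Phi_{\sigma'}\geqslant \gamma c_3\nnabla+\bigl((1-\gamma)c_2-\gamma c_4\bigr)\dd y.
\]

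The one thing that actually needs to be arranged is the choice of $\gamma$: we fix $\gamma\in(0,\gamma_0]$ with $\gamma_0:=c_2/(c_2+c_4)$, so that the coefficient $(1-\gamma)c_2-\gamma c_4$ is non-negative. Then the $\dd y$ term can be dropped and Case B gives $\Phi_{\sigma''}-\Phi_{\sigma'}\geqslant \gamma c_3\nnabla$. Setting
\[
c_5:=\min\{(1-\gamma)c_1,\ \gamma c_3\}>0
\]
then works in both cases and completes the proof. There is no real obstacle here — the lemma is essentially a bookkeeping step combining the three previous lemmas, and the only subtlety is that $\gamma$ must be chosen small enough (depending on $\lambda$ through $c_2,c_4$) to absorb the potentially negative $-c_4\dd y$ term coming from $\Gg$; this is exactly the role $\gamma$ was designed to play in the definition of $\Phi$.
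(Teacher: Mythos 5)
Your proposal is correct and follows essentially the same argument as the paper: the same Case A/Case B split, the same three lemmas, and the same choice of $\gamma$ (the paper fixes $\gamma=c_2/(c_2+c_4)$, which is your $\gamma_0$), yielding the identical constant $c_5=\min\{(1-\gamma)c_1,\gamma c_3\}$. No gaps.
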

\begin{proof}
\begin{eqnarray*}\label{eq:case a}
\Phi_{\sigma''}-\Phi_{\sigma'}&=&(1-\gamma)\left(\min \Ff_{\sigma''}-\min \Ff_{\sigma'}\right)+\gamma\left(\min \Gg_{\sigma''}-\min \Gg_{\sigma'}\right)\nonumber
\end{eqnarray*}
Let $\Ff_{\sigma''}$ be minimized in  $(s_1,s_2,s_3)$. We distinguish between the same two cases as in Lemma~\ref{lem:F increase}:\\[2mm]
\emph{Case A:} The cardinality of $\{s_1,s_2,s_3\}$ is 1 or 2.\\
\emph{Case B:} The cardinality of $\{s_1,s_2,s_3\}$ is 3.\\

\noindent\emph{Case A:} By Lemma~\ref{lem:F increase}, $\min\Ff_{\sigma''}-\min\Ff_{\sigma'}\geqslant c_1\nnabla$, for some constant $c_1>0$ and
by Lemma~\ref{lem:minG''ge minG'}, $\min \Gg_{\sigma''}- \min \Gg_{\sigma'}\geqslant 0$. Hence,
\[\Phi_{\sigma''}-\Phi_{\sigma'}\geqslant (1-\gamma)c_1\nnabla. \]
\noindent \emph{Case B:} By Lemma~\ref{lem:F increase} and Lemma~\ref{lem:G increase},
\begin{eqnarray*}
 \min\Ff_{\sigma''}-\min\Ff_{\sigma'}&\geqslant& c_2\, \dd y,\\
\min\Gg_{\sigma''}- \min\Gg_{\sigma'}&\geqslant& c_3\nnabla-c_4\dd y,
\end{eqnarray*}
 for some constants $c_2,c_3,c_4>0$. Hence,
\begin{eqnarray*}
\Phi_{\sigma''}-\Phi_{\sigma'}&=&(1-\gamma)\left(\min \Ff_{\sigma''}-\min \Ff_{\sigma'}\right)+\gamma\left(\min \Gg_{\sigma''}-\min \Gg_{\sigma'}\right)\nonumber\\
&\geqslant&(1-\gamma) c_2\, \dd y+\gamma (c_3\nnabla-c_4 \dd y)\nonumber\\
&=&\gamma c_3\nnabla + ((1-\gamma)c_2 -\gamma\,c_4) \dd y.\nonumber
\end{eqnarray*}
By choosing $\gamma$ small enough, the constant before $\dd y$ will be positive. We choose $(1-\gamma)=\gamma c_4/c_2$, i.e.
$\gamma=c_2/(c_2+c_4)$. Hence,
\begin{equation}\label{eq:case b} \Phi_{\sigma''}-\Phi_{\sigma'}\geqslant \gamma c_3\nnabla.
\end{equation}
Combining Case A and Case B we obtain
\[\Phi_{\sigma''}-\Phi_{\sigma'}\geqslant \min\left\{(1-\gamma)c_1,\gamma c_3\right\}\nnabla=c_5\nnabla,\]
where \[c_5=\min\left\{(1-\gamma)c_1,\gamma c_3\right\}=\min\left\{\frac{c_1c_4}{c_2+c_4},\frac{c_2c_3}{c_2+c_4}\right\}.
\]
 \end{proof}\bigskip

\noindent \textbf{Proof of Theorem~\ref{th:theorem1}}:\ \ Let $\roo$ be any request sequence. Using Lemma~\ref{lem:nabla increase} and taking
the sum over all requests, we get
\[\Phi_{\roo}- \Phi_{\epsilon}\geqslant c_5\nnabla_{\roo}.\]
Lemma~\ref{lem:phi epsilon} states that $\Phi_{\epsilon}=0$, and Lemma~\ref{lem:phi sigma} states that $\Phi_{\roo}\leqslant \Opt_{\roo}$.
Hence,
\[\nnabla_{\roo}\leqslant \frac{1}{c_5}\ \Opt_{\roo}.\]
By Lemma~\ref{lem:nabla sigma}, the competitive ratio is at most $(1/c_5-1)/\lambda$.

\bigskip

\section{General metric spaces}\label{sec:2-server} In this section, we extend Theorem~\ref{th:theorem1} to arbitrary symmetric metric spaces.
\smallskip

\begin{theorem}\label{th:theorem2}
The work function algorithm $\WFA_{\lambda}$ is constant competitive for the generalized 2-server problem for any constant $\lambda$ with
$0<\lambda<1$.
\end{theorem}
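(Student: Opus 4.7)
The plan is to reuse the proof architecture of Theorem~\ref{th:theorem1} essentially verbatim, since all results of Section~\ref{sec:prelimanaries} already apply to arbitrary metrical service systems, and the potential-function machinery of Section~\ref{sec:prop potential} is similarly generic once $\Boks(\cdot,\cdot)$ is redefined. Concretely, I would set
\[\Boks(s_1,s_2)=\{(x,y)\in\XX\times\YY \mid d^{\XX}(x_1,x)+d^{\XX}(x,x_2)=d^{\XX}(x_1,x_2) \text{ and } d^{\YY}(y_1,y)+d^{\YY}(y,y_2)=d^{\YY}(y_1,y_2)\},\]
the metric interval of $s_1,s_2$ in the product space. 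On $\RR^2$ this recovers the rectangle used in Section~\ref{sec:potential cnn}, and the potential $\Phi_{\sigma}=(1-\gamma)\min\Ff_{\sigma}+\gamma\min\Gg_{\sigma}$ is then defined exactly as before. Under this definition Lemmas~\ref{lem:s dominates t}--\ref{lem:phi sigma} of Section~\ref{sec:prop potential} carry over with identical proofs, as none of them use a property specific to $\RR$.

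Next I would verify the two \CNN-specific inputs from Section~\ref{sec:prop cnn} in the general setting. Lemma~\ref{lem:s in r dominted by} survives: since $d((x,y),(x',\hat y))=d^\XX(x,x')+d^\YY(y,\hat y)$, the point $(x',y)$ lies on a shortest path from $(x,y)$ to $(x',\hat y)$, so domination propagates exactly as in the \CNN argument. Lemma~\ref{lem:nabla dx} also carries over: for any $s=(x',y_1)\in r'$ the point $(x',y'')\in r''$ is at distance $d^\YY(y_1,y'')\leqslant \dd x$, and for $s=(x_1,y')\in r'$ the point $(x_1,y'')\in r''$ is at distance $\dd y\leqslant \dd x$, giving $\Sl_{\sigma'}(s;r'')\leqslant (1+\lambda)\dd x$.

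The main obstacle is Lemma~\ref{lem:convex containment}, which in the \CNN analysis exploits the betweenness of three colinear real numbers to conclude $\Hh_\sigma(u_1,u_2)\leqslant \Gg_\sigma(s_1,s_2,s_3)$ whenever the three points share a coordinate. In an arbitrary metric space $\XX$, three points $x_1,x_2,x_3$ need not satisfy any betweenness relation, so the equality of Lemma~\ref{lem:transitive slack} degrades to the one-sided inequality $\Sl_\sigma(s_3;s_1)\leqslant \Sl_\sigma(s_2;s_1)+\Sl_\sigma(s_3;s_2)$ with defect $\lambda(d(s_1,s_2)+d(s_2,s_3)-d(s_1,s_3))$. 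To repair the lemma I would select the pair $u_1,u_2\in\{s_1,s_2,s_3\}$ maximizing pairwise distance and absorb this defect into the $\alpha$-scaled slack term of $\Gg_\sigma$, at the price of a further constant loss in $\alpha$. Equivalently, Case~1 of Lemma~\ref{lem:G increase} can be rerun directly using the weaker transitivity inequality, degrading the coefficient of $\nnabla$ in that case by a factor that depends only on $\lambda$ and $\alpha$.

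With these adjustments, Lemma~\ref{lem:F increase}, the three cases of Lemma~\ref{lem:G increase}, Lemma~\ref{lem:minG''ge minG'}, and Lemma~\ref{lem:nabla increase} go through with new but still strictly positive constants $c_1,\dots,c_5$ (after tightening the bound~\eqref{eq:set alpha} on $\alpha$ and readjusting $\gamma=c_2/(c_2+c_4)$). Theorem~\ref{th:theorem2} then follows from $\Phi_{\epsilon}=0$, $\Phi_{\roo}\leqslant \Opt_{\roo}$, and Lemma~\ref{lem:nabla sigma} exactly as Theorem~\ref{th:theorem1} did, yielding a competitive ratio of $(1/c_5-1)/\lambda$. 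The hardest step is the re-proof of Lemma~\ref{lem:convex containment} without betweenness, because one must quantify precisely how much of the potential gain from $\min\Gg_{\sigma''}-\min\Gg_{\sigma'}$ is eaten by the weakened transitivity, and verify that the residual coefficient of $\nnabla$ in Case~B of Lemma~\ref{lem:nabla increase} remains positive.
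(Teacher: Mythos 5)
You correctly isolate the critical point: Lemma~\ref{lem:convex containment} is the only place where the proof of Theorem~\ref{th:theorem1} really uses the geometry of $\RR$, through the betweenness equality behind Lemma~\ref{lem:transitive slack}, and Lemmas~\ref{lem:s in r dominted by} and~\ref{lem:nabla dx} indeed survive unchanged. But your repair does not work as stated. With your metric-interval definition of $\Boks$, the hypothesis $\Sl_{\sigma}(s_3;\Boks(s_1,s_2))>0$ only tells you that $s_3$ is not between $s_1$ and $s_2$; it gives no lower bound on $d(s_1,s_3)$ in terms of $d(s_1,s_2)$. Running the chain of Lemma~\ref{lem:convex containment} with the one-sided transitivity inequality, the best you obtain is $\Gg_{\sigma}(s_1,s_2,s_3)\geqslant\Hh_{\sigma}(s_1,s_3)+(\tfrac{1}{2}-\alpha)\,\Sl_{\sigma}(s_3;\Boks(s_1,s_2))-\tfrac{1}{2}D$ with defect $D=\lambda\bigl(d(s_1,s_2)+d(s_2,s_3)-d(s_1,s_3)\bigr)$. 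This $D$ is of the order of the pairwise distances of the minimizing triple, while the compensating term $(\tfrac{1}{2}-\alpha)\Sl_{\sigma}(s_3;\Boks(s_1,s_2))$ can be arbitrarily small compared with those distances (three mutually far points whose slacks to one another are barely positive), and in a near-equilateral configuration switching to another pair $(u_1,u_2)$ does not help. So the loss is an uncontrolled additive quantity, not ``a factor depending only on $\lambda$ and $\alpha$'': rerunning Case~1 of Lemma~\ref{lem:G increase} this way only yields $\min\Gg_{\sigma''}\geqslant\min\Gg_{\sigma'}+\alpha\nnabla-\tfrac{1}{2}D$, where $D$ is bounded by neither $\nnabla$ nor $\dd y$, and then no choice of $\alpha,\gamma$ rescues Lemma~\ref{lem:nabla increase} (similarly for Case~A, where Lemma~\ref{lem:minG''ge minG'} needs Lemma~\ref{lem:F<=G<=H}). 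A secondary problem: your claim that Lemma~\ref{lem:s dominates t} carries over with identical proofs fails for parts (e),(f), because the stability hypothesis of Lemma~\ref{lem:slack change C1 to C2} is false for metric intervals in a general geodesic space---moving an endpoint by $\delta$ can move the interval by an amount unrelated to $\delta$ (two nearly antipodal points on a circle), whereas a rectangle in $\RR^2$ moves by at most $\delta$.

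This is exactly why the paper modifies the potential rather than the argument: it replaces $\Boks(s_1,s_2)$ by $\Spheres(s_1,s_2)$, the product of balls around $x_1$ and $y_1$ of radii $\eta\, d^{\XX}(x_1,x_2)$ and $\eta\, d^{\YY}(y_1,y_2)$ with $\eta\gg1$, and uses two slack parameters: $\Sl^{\mu}$ with $\lambda<\mu<1$ for the $s_2$-term in $\Hh$ and $\Sl^{\lambda}$ for the $s_3$-term. Then positivity of the $s_3$-slack forces $d(s_1,s_3)>\eta\, d(s_1,s_2)$, and the surplus $(\mu-\lambda)d(s_1,s_3)\geqslant\eta(\mu-\lambda)d(s_1,s_2)\geqslant(1+\mu)d(s_1,s_2)\geqslant\Sl^{\mu}_{\sigma}(s_2;s_1)$ reverses the transitivity inequality, which restores Lemma~\ref{lem:convex containment}; the ball definition also restores Lemma~\ref{lem:s dominates t}(e),(f), with constants $\eta$ and $\eta+1$. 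Your outline is missing precisely this mechanism (or an equivalent one forcing $s_3$ to be far from $s_1,s_2$ whenever its slack term is positive), and without it the key estimates do not go through.
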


On one hand, the generalization of the proof is easy since all lemmas stay exactly the same, apart from some constants. Moreover, the only proof
that really changes is that of Lemma~\ref{lem:convex containment}. However, to prove this lemma we make the potential function even more complex
than it already is.

A small problem that appears in a discrete metric space is that the new potential function may no longer be a Lipschitz continuous function of the
given request as we stated in Lemma~\ref{lem:G Lipschitz}. To overcome this, we extend the metric space into a metric space
$\overline{\MM}\supseteq \MM$ where any two points are joint by a continuous path, i.e., for any pair $u_1,u_2\in \overline{\MM}$ and $\zeta\in
[0,1]$ there is a point $u_3\in \overline{\MM}$ such that $d(u_1,u_3)=\zeta d(u_1,u_2)$ and $d(u_2,u_3)=(1-\zeta) d(u_1,u_2)$. This can easily be
done and is a common assumption for online routing problems. See for example~\cite{ChLa92} for a discussion on this. We avoid using the notation
$\overline{\MM}$ and simply assume that $\MM$ has this property. Note that this is done only for the analysis. The request sequence and the work
function algorithm will only use points of the original metric space.\smallskip

\subsection{Adjusting the potential}\label{sec:new potential}

\begin{figure}[t]\center
\begin{tikzpicture}
\draw  (-6,0) rectangle (-2,4);
\draw  (0,0) rectangle (4,4);
\draw [red, fill=red!30](-4,2) circle (1.5cm);
\filldraw [red, fill=red!30] (0.5,2) -- (2,0.5)--(3.5,2) -- (2,3.5) -- (0.5,2);
\node at (-3.5,2) {$x_1$}; \node at (-3.5,2.5) {$x_2$};
\fill (canvas cs:x=-4cm,y=2cm) circle (3pt); \fill (canvas cs:x=-4cm,y=2.5cm) circle (3pt);
\node at (2.25,2.25) {$y_1$}; \node at (2.75,2.25) {$y_2$};
\fill (canvas cs:x=2cm,y=2cm) circle (3pt); \fill (canvas cs:x=2.5cm,y=2cm) circle (3pt);
\node at (-6,-0.5) {$(-4,-4)$}; \node at (0,-0.5) {$(-4,-4)$}; \node at (-2,4.5) {$(4,4)$}; \node at (4,4.5) {$(4,4)$}; \node at (-1.5,0) {\large
$\XX$}; \node at (4.5,0) {\large $\YY$};
\end{tikzpicture}
\caption{Example of $\Spheres$. Here, $\XX$ is the Euclidean plane and $\YY$ is the plane with the $L_1$ norm. Further,
$s_1=(x_1,y_1)=((0,0),(0,0))$, and $s_2=(x_2,y_2)=((0,1),(1,0))$. Hence, $d^{\XX}(x_1,x_2)=1$ and $d^{\YY}(y_1,y_2)=1$. If constant $\eta=3$ then
 $\Spheres(s_1,s_2)$ is the Cartesian product of the shaded areas.} \label{fig:spheres}
\end{figure}
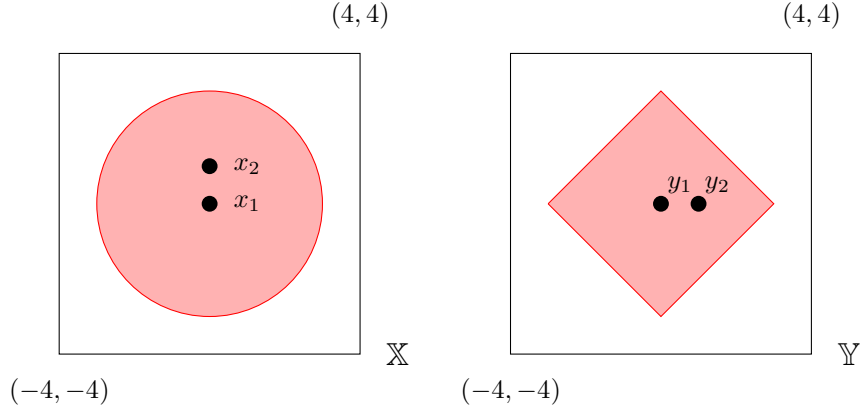
The first point in the \CNN proof where we used the restriction to $\RR^2$ is in the potential function: The set $\Boks(s_1,s_2)$ is defined only
for $s_1,s_2\in \RR^2$. It was defined especially for Lemma~\ref{lem:convex containment} which says that if the points $(s_1,s_2,s_3)$ have the
same $x$- or $y$-coordinate, then one of them is redundant. We applied this in Lemma~\ref{lem:G increase} (Case 1) where we replaced the
redundant point by point $\snabla$. Lemma~\ref{lem:convex containment} still holds for a general metric space but its proof does not apply anymore
because Equality~\eqref{eq:slack equality} is in general an inequality: For any three points $(s_1,s_2,s_3)$ and sequence $\sigma$,
\[\Sl_{\sigma}(s_3;s_1)\le\Sl_{\sigma}(s_3;s_2)+\Sl_{\sigma}(s_2;s_1).\]
Unfortunately, we need $\ge$ here for the proof of Lemma~\ref{lem:convex containment} to hold. Looking ahead at Equation~\eqref{eq:new proof
Sl>Sl+SL} one sees an alternative inequality which takes the place of Equation~\eqref{eq:slack equality}. The trick is quite simple. We make two
changes to the potential function: We add the constraint that $s_3$ should be relatively far from $s_1$ and $s_2$ and we take two different
measures for slack. (See Figure~\ref{fig:two kinds of slack}.) The intuition is that if a point $b$ has a nonnegative slack with respect to a
point $a$, then by using a steeper slack function which has parameter $\mu>\lambda$, the slack of $b$ with respect to $a$ is at least
$(\mu-\lambda)d(a,b)$. We make this precise below.

The following definition takes the place of $\Boks$. Let $\eta\gg 1$.
\begin{eqnarray*}
\Spheres(s_1,s_2)=\{\ (x,y)\in \XX\times\YY  \mid  d^{\XX}(x,x_1)\leqslant \eta \cdot d^{\XX}(x_1,x_2)\ \ \ &&\\
\text{ and } d^{\YY}(y,y_1)\leqslant \eta \cdot d^{\YY}(y_1,y_2)\ \}.&&
\end{eqnarray*}
Note that \Spheres\ is in fact the Cartesian product of a sphere around $x_1$ and a sphere around $y_1$. Instead of $(x_1,y_1)$, we could also
take
 $(x_2,y_2)$ or somehow a point in between. This makes no real difference if $\eta$ is large. (One could think of $\Boks(s_1,s_2)$ as the Cartesian product of a 1-dimensional sphere of diameter $|x_2-x_1|$ around point $(x_1+x_2)/2$ and a 1-dimensional sphere of diameter $|y_2-y_1|$ around point $(y_1+y_2)/2$.)

The other change that we make in the potential function is adjusting the constants. The whole proof for Theorem~\ref{th:theorem1} is still valid
(up to a constant) if we replace the $\lambda$'s that appear in the potential function by some other constant $\mu$ for which $\lambda \leqslant
\mu<1$, while keeping $\WFA_{\lambda}$ the same. (There is no need to verify this claim since we do not use it explicitly.) This freedom in the
parameter leaves way for finetuning the potential function as we will do here. We fix such a $\mu$ with $\lambda < \mu<1$ and define for $s\in
\MM$ and $C\subseteq\MM$ the slack like we did before but now with $\mu$ instead of $\lambda$. In addition, we keep the old definition and add the
parameter $\lambda$ in the notation:
\begin{eqnarray*}\label{eq:slack mu}
\Sl^{\mu}_{\sigma}(s;C)=\min\limits_{t \in C}\{W_{\sigma}(t)+\mu d(s,t)\}-W_{\sigma}(s)\\
\Sl^{\lambda}_{\sigma}(s;C)=\min\limits_{t \in C}\{W_{\sigma}(t)+\lambda d(s,t)\}-W_{\sigma}(s).
\end{eqnarray*}
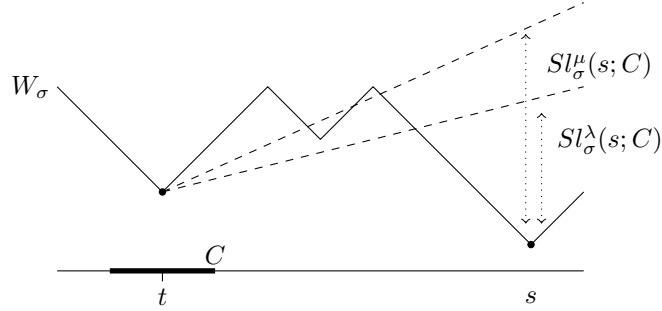
\begin{figure}\label{fig:two kinds of slack}
\center
\begin{tikzpicture}[scale=0.7]
\draw [line width=2pt] (1,0.5)--(3,0.5);
\draw (0,0.5)--(10,0.5);
\draw (2,0.3)--(2,0.5);
\draw (0,4)--(2,2)--(4,4)--(5,3)--(6,4)--(9,1)--(10,2);
\draw [dashed](2,2)--(10,4);
\draw [dashed](2,2)--(10,5.6);
\draw [<->] [dotted](9.2,1.4)--(9.2,3.5);
\draw [<->] [dotted](8.9,1.4)--(8.9,5);
\fill (canvas cs:x=2cm,y=2cm) circle (2pt);
\fill (canvas cs:x=9cm,y=1cm) circle (2pt);
\node at (2,0) {$t$};
\node at (9,0) {$s$};
\node at (3,0.8) {$C$};
\node at (-0.5,4) {$W_{\sigma}$};
\node at (10.5,3) {$\Sl^{\lambda}_{\sigma}(s;C)$};
\node at (10.3,4.4) {$\Sl^{\mu}_{\sigma}(s;C)$};
\end{tikzpicture}
\caption{Two kinds of slack: one with parameter $\lambda$ and one with parameter $\mu>\lambda$.}
\label{fig:double slack}
\end{figure}
Next, we define the new $\Hh_{\sigma}, \Ff_{\sigma}$ and $\Gg_{\sigma}$. For simplicity, we keep the same names although they are now slightly
different functions:
\[\begin{array}{rcl}
\Hh_{\sigma}(s_1,s_2)&=&W_{\sigma}(s_1)-\frac{1}{2}\Sl^{\mu}_{\sigma}(s_2;s_1)\\[3mm]
\Ff_{\sigma}(s_1,s_2,s_3)&=&\Hh_{\sigma}(s_1,s_2)-\beta \Sl^{\lambda}_{\sigma}(s_3;\{s_1,s_2\})\\[3mm]
\Gg_{\sigma}(s_1,s_2,s_3)&=&\Hh_{\sigma}(s_1,s_2)-\beta \Sl^{\lambda}_{\sigma}(s_3;\Spheres(s_1,s_2)).\\
\end{array}
\]
Note that $\mu$ is used for the slack of $s_2$ while $\lambda$ is used for the slack of $s_3$. The potential function is:
\[\Phi_{\sigma}=(1-\kappa)\min\Ff_{\sigma}+\kappa\min\Gg_{\sigma},\]
where $0<\kappa<1$. To prove constant competitiveness, there is no need to specify precise values of the constants. We only need to choose the
constants either large or small enough. The order in which we choose them and the domains are listed below.  For example, given  $\lambda$ and the
choice of $\mu$, there is a number $\eta_0$ such that any choice $\eta\geqslant \eta_0$ is fine for our proof. We do not compute the values
$\eta_0,\beta_0$ or $\kappa_0$ but it will be clear from the proof that such values exist.
\[\begin{array}{rll}
\lambda:&\text{ given parameter},&\\
\mu: & \lambda<\mu<1,&\\
\eta:& \eta\ge\eta_0\gg 1 &\text{, where $\eta_0$ depends on $\lambda$ and $\mu$,} \\
\beta:&0<\beta\leqslant \beta_0< 1/2 &\text{, where $\beta_0$ depends on $\lambda,\mu$ and $\eta$,} \\
\kappa:&0<\kappa<\kappa_0<1&\text{, where $\kappa_0$ depends on $\lambda,\mu,\eta$ and $\beta$.}
\end{array}
\]

\subsection{Adjusting the proofs of the lemmas}
All lemmas stay exactly the same apart from some constants. Moreover, all proofs stay basically the same. The only proof that is really different
is that of Lemma~\ref{lem:convex containment}. Let us go over all the lemmas one by one.

Nothing changes for Section~\ref{sec:prelimanaries} since it comes before the potential function and holds for any metric space. The first lemma
in Section~\ref{sec: cnn problem} is Lemma~\ref{lem:s dominates t}. The lemma holds with different constants. The bounds we get are
\begin{equation}\label{eq:new bounds}
\begin{array}{lll}
(a),(d)&:& \delta\cdot\beta(1-\lambda)\\
(b),(c)&:& \delta\cdot\left(\frac{1}{2}(1-\mu)-\beta(1+\lambda)\right)\\
(e)&:& \delta\cdot\left(\frac{1}{2}(1-\mu)-\eta\beta(1+\lambda)\right)\\
(f)&:& \delta\cdot\left(\frac{1}{2}(1-\mu)-(\eta+1)\beta(1+\lambda)\right)\\
\end{array}
\end{equation}

The proof for (a),(b),(c),(d) remains the same, only $\alpha$ becomes $\beta$ and some of the $\lambda$'s become $\mu$.  In (e), there is an
additional factor $\eta$ because a move of $s_2$ over some distance may cause the border of $\Spheres(s_1,s_2)$ to move by $\eta$ times this
distance. For a move of $s_1$, this factor is $\eta+1$ since \Spheres\ is defined around $s_1$. The precise bounds are not so important. We only
need to see that we can choose $\beta$ small enough to let all the right hand sides be $\Omega(\delta)$.

Nothing changes for Lemma~\ref{lem:minimum in r}. In the proof of Lemma~\ref{lem:F<=G<=H}, only $\Boks$ needs to be replaced by $\Spheres$. In the
proof of Lemma~\ref{lem:cardinality=2}, we only need to update the definition of $\Ff$. \bigskip

\noindent\textbf{New proof of Lemma~\ref{lem:convex containment}}. \begin{proof} Let $s_1,s_2,s_3$ have the same $y$-coordinate. We may assume
that
\begin{equation}\label{eq:new proof >0}
\Sl^{\lambda}_{\sigma}(s_3;\Spheres(s_1,s_2))> 0,
\end{equation}
since otherwise $\Hh_{\sigma}(s_1,s_2)\leqslant \Gg_{\sigma}(s_1,s_2,s_3)$ and we are done. By this assumption, we have $s_3\notin
\Spheres(s_1,s_2)$.  Hence, $d(s_1,s_3)> \eta d(s_1,s_2)$ (using $d^{\XX}(s_i,s_j)=d(s_i,s_j)$ for $i,j\in\{1,2,3\}$). Then
\begin{eqnarray}\label{eq:new proof eq1}
\Sl^{\mu}_{\sigma}(s_3;s_1)&=&W_{\sigma}(s_1)+\mu d(s_1,s_3)-W_{\sigma}(s_3)\nonumber \\
&=&W_{\sigma}(s_1)+\lambda d(s_1,s_3)-W_{\sigma}(s_3)+(\mu-\lambda)d(s_1,s_3)\nonumber \\
&=&\Sl^{\lambda}_{\sigma}(s_3;s_1)+(\mu-\lambda)d(s_1,s_3)\nonumber \\
&>&\Sl^{\lambda}_{\sigma}(s_3;s_1)+\eta(\mu-\lambda)d(s_1,s_2).
\end{eqnarray}
By choosing $\eta$ large enough (given the values of $\lambda$ and $\mu$), we guarantee that $\eta(\mu-\lambda)\geqslant 1+\mu$. If we also use
that $(1+\mu)d(s_1,s_2)\geqslant \Sl^{\mu}_{\sigma}(s_2;s_1)$ (follows directly from~\eqref{eq:slack mu}) then the analogue of~\eqref{eq:slack equality} becomes
\begin{equation}\label{eq:new proof Sl>Sl+SL}
\Sl^{\mu}_{\sigma}(s_3;s_1)>\Sl^{\lambda}_{\sigma}(s_3;s_1)+\Sl^{\mu}_{\sigma}(s_2;s_1).
\end{equation}
The remainder of the proof is similar to the original proof.  For the first two inequalities below we use, respectively, \eqref{eq:new proof
Sl>Sl+SL} and Lemma~\ref{lem:subsetslack}. For the last inequality we use~\eqref{eq:new proof >0} and $\beta<  1/2$.
\begin{eqnarray*}
\Hh_{\sigma}(s_1,s_3)&=&W_{\sigma}(s_1)-\frac{1}{2}\Sl^{\mu}_{\sigma}(s_3;s_1)\\
&< &W_{\sigma}(s_1)-\frac{1}{2}\Sl^{\mu}_{\sigma}(s_2;s_1)-\frac{1}{2}\Sl^{\lambda}_{\sigma}(s_3;s_1)\\
&\leqslant &W_{\sigma}(s_1)-\frac{1}{2}\Sl^{\mu}_{\sigma}(s_2;s_1)-\frac{1}{2}\Sl^{\lambda}_{\sigma}(s_3;\Spheres(s_1,s_2))\\
&<&W_{\sigma}(s_1)-\frac{1}{2}\Sl^{\mu}_{\sigma}(s_2;s_1)-\beta\Sl^{\lambda}_{\sigma}(s_3;\Spheres(s_1,s_2))\\
&=  &\Gg_{\sigma}(s_1,s_2,s_3).\end{eqnarray*}

\end{proof}
The proof of Lemma~\ref{lem:phi epsilon} stays the same. Also the proof of Lemma~\ref{lem:phi sigma} stays the same apart from $\alpha$ becoming
$\beta$ and $\gamma$ becoming $\kappa$. Lemmas~\ref{lem:s in r dominted by} and~\ref{lem:nabla dx} do not depend on the potential function, nor
do they depend on the metric space. These lemmas and proofs stay exactly the same. Lemma~\ref{lem:G Lipschitz} was given without proof and again
it can  easily be verified from the definitions. The proof of Lemma~\ref{lem:domination} does not change. For Lemma~\ref{lem:F increase} there are
a few small changes. In Case A, only $\alpha$ becomes $\beta$. In Case B, the last inequality is different since the inequalities of
Lemma~\ref{lem:s dominates t} are different. The new values were given in formula~\eqref{eq:new bounds}. All we need to notice is that by choosing
$\beta$ small enough (depending on $\lambda,\mu$ and $\eta$), the righthand sides are $\Omega(\delta)$.

Also in the proof of Lemma~\ref{lem:G increase} there are a few small changes. Of course, $\alpha$ becomes $\beta$ and $\Boks$ becomes $\Spheres$.
The new function $\Gg$ is still Lipschitz continuous. Hence, we may assume $y'=y''$. We consider the same three cases and the proof for the first
and second case remain the same. For Case 3 we need to use the new bounds of Lemma~\ref{lem:s dominates t}. Then, Equation~\eqref{eq:case3 1}
becomes
\[ \Gg_{\sigma'}(s_1,s_2,s_3)\geqslant\min\Gg_{\sigma'}+\left(\frac{1}{2}(1-\mu)-(\eta+1)\beta(1+\lambda)\right)\dd x.\]
Combining this with Equation~\eqref{eq:case3 2} as we did, we get
\[
\min\Gg_{\sigma''}-\min\Gg_{\sigma'}\geqslant\left(\frac{1}{2}(1-\mu)-(\eta+1)\beta(1+\lambda)-2\beta\right)\dd x.
\]
By choosing $\beta$ small enough (depending on $\lambda,\mu$ and $\eta$), the righthand side is at least $c_3\nnabla$ for some constant $c_3$
(using Lemma~\ref{lem:nabla dx}). The proof of Lemma~\ref{lem:minG''ge minG'} remains the same, Finally, the only change in the proof of
Lemma~\ref{lem:nabla increase} is that $\gamma$ becomes $\kappa$.
\bigskip

\section{A decomposition approach for the generalized $k$-server problem}\label{sec:higher dim}

The generalized $k$-server problem appears a lot more complicated for dimensions $k\geqslant 3$. It is unclear if for any fixed $k\geqslant 3$ a constant
competitive ratio $f(k)$ is possible at all. In any case, the ratio will be at least $k^{\Omega(k)}$~\cite{FiaRic94}. The question is important
for its relation to sum problems discussed in the introduction. Interestingly, the proof for $k=2$ does show a decomposition into subproblems
which can be generalized to any $k$ and which seems to be a real simplification of the problem. Although an answer to these subproblems is
missing, it does give an example of decomposing a sum problem into (apparently) easier problems.

Suppose that we can find $k$ functions $\Ff^{(i)}_{\sigma}: \MM^{k+1}\rightarrow \RR$, for $i=1,2,\dots,k$ with the following two properties:
\begin{enumerate}
\item[(i)] $\min\Ff^{(i)}_{\epsilon}=0$ and $\min\Ff^{(i)}_{\sigma}\leqslant \Opt_{\sigma}$ (where $\epsilon$ is the empty sequence)

\item[(ii)] Let $r'=r(x'_1,x'_2,\dots,x'_k)$ and $r''=r(x''_1,x''_2,\dots,x''_k)$ be two subsequent requests and let $\pi_1,\pi_2,\dots,\pi_k$ be a
permutation of $1,2,\dots,k$ such that  $\dd x_{\pi_1}\leqslant \dd x_{\pi_2}\leqslant \dots \leqslant \dd x_{\pi_k}$, where $\dd_{\pi_i}=|
x''_{\pi_i}-x'_{\pi_i}|$. Further, denote $\sigma'=\sigma r'$ and $\sigma''=\sigma r' r''$ and denote the extended cost
$\nnabla_{r''}(W_{\sigma'})$ simply by $\nnabla$. Then for all $i$ there are constants $a^{(i)},b^{(i)},c^{(i)},d^{(i)}>0$ (depending on $k$
and $\lambda$) such that either (A) or (B) holds:
\begin{enumerate}
\item[(A)] $\min \Ff^{(i)}_{\sigma''}-\min \Ff^{(i)}_{\sigma'}\geqslant a^{(i)}\nnabla-b^{(i)}\dd x_{\pi_{i-1}}$ and for all $j>i$ \\ $\min
\Ff^{(j)}_{\sigma''}-\min \Ff^{(j)}_{\sigma'}\geqslant 0$,
\item[(B)] $\min \Ff^{(i)}_{\sigma''}-\min \Ff^{(i)}_{\sigma'}\geqslant c^{(i)}\dd
x_{\pi_i}$.
\end{enumerate}
\end{enumerate}
Then, the following potential function proves that $\WFA_{\lambda}$ is constant competitive for some constant depending on $k$ and $\lambda$.
\[\Phi_{\sigma}=\sum_{i=1}^{k}\gamma^{(i)}\cdot \min_{s_1,\dots,s_{k+1}\in\MM}\Ff^{(i)}_{\sigma}(s_1,s_2,\dots,s_{k+1}),\] where
$\gamma^{(1)}+\dots+\gamma^{(k)}=1$ and $\gamma^{(i)}/\gamma^{(i+1)}\geqslant b^{(i+1)}/c^{(i)}$ for $i=1,2,\dots,k-1$.

First, let us see how this relates to our proof for $k=2$. We denoted $\Ff^{(1)}=\Ff$ and $\Ff^{(2)}=\Gg$  and denoted $x_1$ and $x_2$ by $x$ and
$y$. We assumed $\dd y\leqslant \dd x$ which implies $\pi_1=2$ and $\pi_2=1$. Property (i) holds (and was used for Lemma~\ref{lem:phi epsilon} and
Lemma~\ref{lem:phi sigma}). Now, it is easy to check that property (ii) corresponds with Lemma~\ref{lem:F increase}, \ref{lem:G increase}, and
\ref{lem:minG''ge minG'}. (Define $\dd x_{\pi_{0}}:=0$ and note that $\nnabla=O(\dd x_k)$.)

Next, we give a short sketch why this would give a proof of competitiveness and then we argue why this is an interesting decomposition. We  need to
show that the increase in the potential for the new request $r''$ is at least some constant times $\nnabla$. (Then, if additionally
$\Phi_{\epsilon}=0$ and $\Phi_{\sigma}\leqslant \Opt_{\sigma}$, competitiveness follows from Lemma~\ref{lem:nabla sigma}.)  First consider $i=1$.
(Define $\dd x_{\pi_{0}}:=0$.) If case (A) applies then we are done.  So assume from now that (B) applies for $i=1$. Consider $i=2$. If case (A)
applies for $i=2$ then, by the choice of the $\gamma^{(i)}$, the increase in the potential function is at least
\[\gamma^{(1)}c^{(1)}\dd x_{\pi_{1}}+\gamma^{(2)}(a^{(2)}\nnabla- b^{(2)}\dd x_{\pi_{1}})\geqslant \gamma^{(2)}a^{(2)}\nnabla. \]
So assume case (B) applies and consider $i=3$. We can repeat the argument  until finally we consider $i=k$. Then the proof follows from case B as
well since $\dd x_{\pi_k}=\Omega(\nnabla)$.

Now we will argue that the decomposition seemingly simplifies the analysis. Remember that the general idea is to find a potential function
$\Phi_{\sigma}$ with the property that the increase for every new request is at least some constant (depending on $k$) times the extended cost
$\nnabla$ of the new request. In the decomposition, this property is split into $k$ weaker properties. Assume that the functions $\Ff^{(i)}$ are
all Lipschitz continuous functions of the last request. By this we mean, if $r''$ is changed to some other request $\tilde{r}''$ while keeping
the arguments $s_1,\dots,s_{k+1}$ fixed, then the value $\Ff^{(i)}_{\sigma''}(s_1,\dots,s_{k+1})$ changes by at most some constant (depending on
$k$ and $\lambda$) times $||r''-\tilde{r}''||$. Lipschitz continuity seems a natural property. Note that the extended cost
$\nnabla:=\nnabla_{r''}(W_{\sigma'})$ is always Lipschitz continuous in $r''$. If Lipschitz continuity holds, then to prove (ii), we may assume that
$\dd x_{\pi_h}=0$ for all $h<i$, as we did in the proof of Lemma~\ref{lem:G increase}. For example, for $\Ff^{(k)}_{\sigma}$ we only need to show
an increase of $\Omega(\nnabla)$ under the (strong) condition that $\dd x_j=0$ for all $j\leqslant k-1$, i.e., under the condition that only one
coordinate changes.

We will not speculate on a general decomposition theorem for sum problems and merely say that the outline appears a significant step towards a proof
for $k\geqslant 3$ and is an interesting  contribution towards a general theory of competitiveness of metrical service systems.

\bigskip

\section{Notes and open problems}\label{sec:future}
The most prominent research direction is to enhance the theory of competitiveness of metrical service (or task) systems and in particular for the
generalized work function algorithm. Our proof shows that only very limited information of the work function may be needed to show that
$\WFA_{\lambda}$ performs well.  In fact, we only used the obvious properties that apply to any work function, e.g. that any point $s\in \MM$ is
dominated by some point $t$ on the last request and that the work function is Lipschitz continuous with constant 1. (As a comparison, Koutsoupias
and Papadimitriou show for their $k$-server proof that the $k$-server work function has some nice quasiconvexity property.) So, if this is all we
use, why does not this imply competitiveness for any metrical service system? The answer is that the potential function was designed for the
typical requests of the generalized 2-server problem, i.e., the potential function exploits that the support of any work function is a subset of
the last request. This kind of analysis, that is purely based on the geometry of a single request, is interesting for metrical service systems in
general. For this purpose, our potential function has some valuable ingredients such as the use of convex sets like \Boks\ and \Spheres\ and the
use of slack functions with different parameters ($\lambda$ and $\mu$). These techniques are helpful for isolating extreme solutions, i.e., (a
small number of) solutions which in a way represent all offline solutions.

An illustrative example is the problem of chasing lines. In this system, the metric space is $\RR^d$ and the set $\mathcal{R}$ of requests
contains all lines and line segments in $\RR^d$. By taking our function $\Gg$ as the potential function (where $\Boks(s_1,s_2)$ is now defined as
the line segment between $s_1$ and $s_2$), it follows immediately that $\WFA_{\lambda}$ is constant competitive (independent of $d$) for any
$\lambda\in (0,1)$. All that we need to notice is the following alternative formulation of Lemma~\ref{lem:convex containment}: If $s_1,s_2,s_3\in
\RR^d$ are all on a straight line, then $\Hh_{\sigma}(u_1,u_2)\leqslant \Gg_{\sigma}(s_1,s_2,s_3)$ for some $u_1,u_2\in \{s_1,s_2,s_3\}$. Now
assume that sequence $\sigma$ is followed by a request $r$ and that $s_1,s_2,s_3$ minimize $\Gg_{\sigma r}$. Then, all three points are on the
last request $r$ and hence all are on a straight line. The lemma says that one of the three points is redundant. Replacing one of the three points
 by a point $\snabla\in \MM$ with maximum extended cost $\nnabla$, we see that the increase for the potential function is $\Omega(\nnabla)$ and
competitiveness follows. The algorithm by Friedman and Linial~\cite{FriedmanL:1993} for line chasing is much less general and uses angles and
coordinates in the Euclidean plane. Of course, how one can implement $\WFA_{\lambda}$ efficiently for the line chasing problem is a different
story.

\subsection{Open problems}
There are some very intriguing open problems in online optimization. Examples are the \emph{$k$-server conjecture} (deterministic and randomized)
and the \emph{dynamic optimality conjecture}~\cite{SleatorTarjan1985} for binary search trees. (We refer to~\cite{DemaineHIKP2009} for a survey of
recent results.) The latter conjecture states that there exists a constant competitive online algorithm for binary search trees. Maybe not so
well-known is that the binary search tree problem (without insertions or deletions) can be transformed into a metrical service system with loss of
a constant factor in the approximation. This can be done as follows. Let $1,2,\dots,n$ be the items of the tree. By a binary search tree, we mean a
rooted tree with maximum degree three. Then, the metric space consists of all binary search trees with nodes $1,2,\dots,n$ and the distance
between two trees is the minimum number of rotations needed to transform one tree into the other (or we may take any other distance functions that
is within a constant factor). Now, for each item $i$ we define a request $r^{i}$ which is the set of all binary search trees with root $i$. The
collection of possible requests is $\mathcal{R}=\{r^{1},r^{2},\dots,r^{n}\}$. Let $\BST$ be the  binary search tree problem (as defined
in~\cite{SleatorTarjan1985}) and let $\BST^*$ be the $\BST$ problem modeled as a metrical service system as described above . The next theorem
states that the (online) approximation ratios of these two problems are within a constant factor. A similar result is given
in~\cite{BorodinElYaniv1998Book} (Lemma 1.3) for the list update problem, which is the one-dimensional equivalent of the \BST\ problem. There, it
is shown that any $c$-competitive algorithm remains $c$-competitive if the cost to serve $i$ is depth$(i)-1$ in stead of depth$(i)$. Below, we
also show the other direction and model it as a metrical service system.

\smallskip\begin{lemma}\label{lem:BST and BST*} The (online) approximation ratios of the $\BST$ problem and its metrical service system formulation $\BST^*$ are
within a constant factor.
\end{lemma}
\begin{proof}
In~\cite{SleatorTarjan1985}, the cost for serving an item $i$ is one plus depth$(i)$, the depth of item $i$ in the current tree. This differs with
our service system model in two ways. First, in our model there is the restriction that $i$ has to be moved to the root in order to serve it. Note
that this restriction only increases the cost by a small constant factor since $i$ can be moved to the root and back at a cost
$O(\text{depth}(i))$. The other difference is that in our model there is no additive cost of one to serve a request. In particular, that means
that items at the root are served at a cost of one in the $\BST$ model while these are for free in $\BST^*$. We call $\BST^*$ the \emph{zero cost
model}. Next, we compare the competitive ratios for $\BST$ and $\BST^*$ with the restriction that items can only be served at the root. Under this
restriction, let $\Opt$ and $\Opt_0$ denote the optimum in respectively the standard cost and the zero cost model. Then for any sequence $\sigma$:
$\Opt(\sigma)=\Opt_0(\sigma)+|\sigma|$. Let $\Alg$ be any $c$-competitive algorithm for $\BST^*$. Then, it is $c$-competitive for $\BST$ as well
(See also Lemma 1.3 in~\cite{BorodinElYaniv1998Book}):
\[\Alg(\sigma)=\Alg_0(\sigma)+|\sigma| \leqslant c\Opt_0(\sigma)+|\sigma|= c\Opt(\sigma)-(c-1)|\sigma|.\]

For the other direction, assume that some algorithm $\Alg$ is $c$-competitive for $\BST$. We will show that this gives a $(2c-1)$-competitive
algorithm for $\BST^*$. For any request sequence $\sigma$ we define $\sigma'$ as the sequence obtained by removing the repeated requests. For
example, if item $i$ is requested three times consecutively then we remove two of these. Now define algorithm $\Alg'$ as follows. For any request
sequence $\sigma$ it gives the truncated sequence $\sigma'$ to $\Alg$ and then behaves exactly like $\Alg$. This means that when a requested item
$i$ is moved to the root, then the search tree remains unchanged until the first moment that a different item is requested. This way, sequence
$\sigma$ is served using the online solution for $\sigma'$. By assumption, $\Alg(\sigma')\leqslant c\Opt(\sigma')$. Further, if we assume that the first
request is not to the root then $|\sigma'|\leqslant \Opt_0(\sigma')$.
\begin{eqnarray*}
&\Alg'_0(\sigma)=\Alg'_0(\sigma')= \Alg(\sigma')-|\sigma'|\leqslant c \Opt(\sigma')-|\sigma'|= \\
&c\Opt_0(\sigma')+(c-1)|\sigma'|\leqslant (2c-1)\Opt_0(\sigma')=(2c-1)\Opt_0(\sigma).
\end{eqnarray*}
\end{proof}\smallskip

The \BST\ problem is still not well-understood. It is not known if the problem is NP-hard, nor is there a constant factor offline approximation
algorithm known. Lower bounds on the optimal solution are hard to get. However, if constant competitiveness is possible, then probably there is no
need  for this kind of bounds. In online optimization the analysis is usually based on some kind of extreme solutions that in a way represent all
possible offline solutions. A simple (and highly relevant) example is the list update problem~\cite{BorodinElYaniv1998Book}. The move-to-front
rule has optimal competitive ratio of $2-2/(n\! +\! 1)$, where $n$ is the size of the list. It is easy to see that it is $2$-competitive since
with loss of a factor 2 we may assume that each item can only be served at the front. But then, there is only one optimal solution and the
move-to-front algorithm gets one step closer to the optimal solution with every rotation that it makes. The only information about the optimal
offline solution that is used in this analysis is that its current configuration serves the current request. Hence, for list-update it is enough
to consider only one offline solution. Another example is the (optimal) double coverage algorithm for the $k$-server problem on
trees~\cite{ChrLar91A} where the potential function is defined only by the current configuration of the online solution and that of the optimal
solution. An example with two extreme solutions is the line chasing problem that we discussed at beginning of this section. We sketched a proof
with a potential function which is defined by three solutions. We could show that one of these was redundant and the proof followed easily. Hence,
for $\WFA_{\lambda}$ applied to line chasing there are only two extreme solutions. This analysis follows purely from the geometry of a single
request. There is no need for lower bounds on a sequence of requests. More complicated examples are the $2k-1$ ratio for the $k$-server
problem~\cite{KouPap95A} with a potential based on $k+1$ configurations, and our potential which uses six configurations.

It is not hard to show that $\WFA_{\lambda}$ is in fact not constant competitive for binary search trees when we define the metric space as in
Lemma~\ref{lem:BST and BST*}. However, all kinds of variations are possible. Consider the following adjustment of the metric space. The cost of a
single rotation remains one but the cost of a splaying operation on item $i$ is only some small constant times depth$(i)$. This way,
$\WFA_{\lambda}$ will behave much like the splay tree algorithm and it seems a good candidate for being constant competitive.

A question that pops up is whether such an approach with an adjusted metric has potential at all since we just noted that $\WFA_{\lambda}$ is not
competitive for the natural distance function. Is $\WFA_{\lambda}$ robust in the sense that small changes in the metric give small changes in the
competitive ratio of $\WFA_{\lambda}$? In that case our suggested approach is doomed to fail. Fortunately, the answer is negative and follows from
the next example.

\begin{example}
Consider a metrical service system on a star graph with $k$ leaves. Let $c$ be the center and let $U=\{u_1,u_2,\dots,u_k\}$ be the set of leaves.
The distances are  $d(c,u_1)=1-\epsilon$ and $d(c,u_i)=1$, $i=2,\dots,k$. The set of requests is $\mathcal{R}=\{\{c\}, \{U\setminus
u_2\},\{U\setminus u_3\},\dots,\{U\setminus u_k\}\}$. The optimal online algorithm moves to $c$ whenever the request is $\{c\}$ and moves to $u_1$
otherwise. The competitive ratio of this algorithm is 1. The work function algorithm $\WFA_{\lambda}$ behaves the same for any $\lambda\in(0,1)$
and therefore has ratio 1 as well. If we now change $d(c,u_1)$ from $1-\epsilon$ to $1+\epsilon$, then the optimal online algorithm stays the same
and now has competitive ratio $1+\epsilon$. However, $\WFA_{\lambda}$ can be forced to visit all $u_i$ between two requests for $c$ and has ratio
$(k+\epsilon)/(1+\epsilon)$.
\end{example}\smallskip

An obvious drawback of the work function approach for the \BST\ problem is that it is computationally expensive. In fact, no polynomial time
constant factor approximation is known. Nevertheless, at the moment it is very interesting to see if a constant competitive algorithm is possible
at all, no matter how high the running time.

Below, we list some interesting open problems related to this paper, starting with the $\BST$ problem discussed above.

\begin{itemize}
\item[$\diamond$] Give a constant competitive work function based algorithm for binary search trees (without insertions or deletions).
Although the algorithm would be inefficient it would clearly be a big step towards proving competitiveness for more efficient algorithms
 like splaying.

 \item[$\diamond$] Prove or disprove that the generalized $k$-server problem or weighted $k$-server problem has an $f(k)$-competitive algorithm for some function $f(k)$.
Same for the randomized problem. An outline for a possible proof is given in Section~\ref{sec:higher dim}.

\item[$\diamond$] What is the competitive ratio of the $k$-point request problem, introduced in~\cite{PapadimitriouYannakakis1991}?
Fiat et al.~\cite{FiatFKRRV1998} gave an upper bound of $O(9^k)$ which was improved by Burley~\cite{Bu96} who showed that the generalized work
    function algorithm is $O(k2^k)$-competitive. The best known lower bound is $\Omega(2^k)$~\cite{FiatFKRRV1998}. Just as Burley we
    conjecture that $O(2^k)$ is possible. A good candidate seems a dynamic work function algorithm: one that adjusts the parameter $\lambda$
    online. Such a dynamic work function algorithm would be more powerful than the generalized work function algorithm. Randomization reduces
    the ratio drastically as shown by Ramesh~\cite{Ramesh1993} who gave an upper of $\Omega(k^{13})$ against a lower bound of
    $k/2$~\cite{FiatFKRRV1998}.

\item[$\diamond$] What is the competitive ratio of the continuous \CNN problem?
A lower bound of 3 and upper bound of 6.46 is given in~\cite{AugustineGravin2010}.

\item[$\diamond$] Give other examples of natural metrical service systems that have a constant competitive ratio.
For example, Friedman and Linial~\cite{FriedmanL:1993} give a competitive algorithm if the requests are convex subset of $\RR^2$. They
conjecture that the same applies to $\RR^d$ for any fixed $d$ and  show that it is enough to prove this for affine half spaces. In the
beginning of this section we sketched a proof that $\WFA_{\lambda}$ is constant competitive for lines in $\RR^d$ and it would be interesting
to extend this to half spaces.

\item[$\diamond$] The $k$-server problem has some simple special cases for which $2k-1$ is still the best known ratio, for example the 3-server
problem and the $k$-server problem on a cycle: Find an algorithm with a smaller ratio. The ratio for trees is $k$ but it is unknown if the
work function algorithm achieves this ratio. See~\cite{Koutsoupias2009} for more background on this.

\item[$\diamond$] What is the competitive ratio of the weighted $k$-point request problem, discussed in~\cite{ChrobakSgall2004}? This problem is
a special case of the generalized $k$-server problem and a generalization of the $k$-point request problem.

\item[$\diamond$] Extend the theory of sum problems. For example, by analyzing the sum problem of another elementary metrical task system.

\item[$\diamond$] Prove (or disprove) that the generalized work function algorithm $\WFA_{\lambda}$ is $O(\log n)$-competitive for the online matching problem on a
line. A lower bound of $\Omega(\log n)$ and an upper bound of $O(n)$ were given in~\cite{KoutsoupiasNanavati2003}.
\end{itemize}

\bibliographystyle{siam}

\end{document}